\newcommand{\staticSize}{6.0cm}
\newcommand{\looseEndSize}{4.5cm}
\newcommand{\TODO}[1]{\todo[inline]{#1}}
\newcommand{\staticb}{\textsc{b-suitor}}
\newcommand{\dynb}{\textsc{Dyn-b-suitor}}
\newcommand{\laff}{{locally affected}}
\newcommand{\sat}{{saturated}}
\newcommand{\usat}{{unsaturated}}
\newcommand{\none}{\texttt{None}}
\newcommand{\edgeins}{\textsc{EdgeInsertion}$(\cdot,\cdot)$}
\newcommand{\edgeinsParam}[2]{\textsc{EdgeInsertion}(#1,#2)}
\newcommand{\edgerem}{\textsc{EdgeRemoval}$(\cdot,\cdot)$}
\newcommand{\edgeremParam}[2]{\textsc{EdgeRemoval}(#1,#2)}
\newcommand{\bedgeins}{\textsc{BatchEdgeInsertion}$(\cdot,\cdot)$}
\newcommand{\bedgerem}{\textsc{BatchEdgeRemoval}$(\cdot,\cdot)$}
\newcommand{\findaff}{\textsc{FindAffected}$(\cdot)$}
\newcommand{\uppath}{update path}
\newcommand{\drop}[1]{}
\DeclareMathOperator*{\argmaxA}{argmax}
\begin{document}

%%%%% general writing %%%%%
\newcommand{\ie}{i.\,e.,\xspace}
\newcommand{\st}{s.\,t.,\xspace}
\newcommand{\eg}{e.\,g.,\xspace}
\newcommand{\wrt}{w.\,r.\,t.\xspace}
\newcommand{\etc}{etc.\xspace}
\newcommand{\aka}{a.\,k.\,a.\xspace}
\newcommand{\vs}{vs.\ }
\newcommand{\etal}{et al.\ }
\newcommand{\quot}[1]{``#1''}
\newcommand{\registered}{\textsuperscript\textregistered}

%%%%% parentheses %%%%%
\newcommand{\roundb}[1]{\left(#1\right)}
\newcommand{\sqb}[1]{\left[#1\right]}

%%%%% math %%%%%
\newcommand{\E}{{\ensuremath{\mathbb{E}}}}
\newcommand{\Oh}{\ensuremath{\mathcal{O}}}
\newcommand{\argmin}{\operatorname{argmin}}
\newcommand{\argmax}{\operatorname{argmax}}
\newcommand{\mathbc}{\mathbf{b}}
\newcommand{\exchange}[3]{#1^{+#2}_{-#3}}
\newcommand{\add}[2]{#1^{+#2}}
\newcommand{\remove}[2]{#1_{-#2}}
\newcommand{\gain}[2]{\mathrm{gain}(#1,#2)}
\newcommand{\dist}[1]{\mathrm{dist}(#1)}
\newcommand{\bfsdist}[1]{\mathrm{dist}_{\text{BFS}}(#1)}
\newcommand{\onedigit}[1]{\num[round-mode=places,round-precision=1]{#1}}
\newcommand{\integer}[1]{\num[round-mode=places,round-precision=0]{#1}}
\newcommand{\ceil}[1]{\left\lceil #1 \right\rceil}
\newcommand{\floor}[1]{\left\lfloor #1 \right\rfloor}
\newcommand{\lonerel}{\ensuremath{\mathrm{L1}_\mathrm{rel}}}
\newcommand{\ltworel}{\ensuremath{\mathrm{L2}_\mathrm{rel}}}
\newcommand{\erel}{\ensuremath{\mathrm{E}_\mathrm{rel}}}
\newcommand{\onen}[1]{\lVert #1 \rVert_1}
\newcommand{\twon}[1]{\lVert #1 \rVert_2}

\newcommand{\norm}[1]{\left\lVert#1\right\rVert}
\newcommand{\diag}[1]{\operatorname{diag}(#1)}
\newcommand{\diam}[1]{\operatorname{diam}(#1)}
\newcommand{\rad}[1]{\operatorname{rad}(#1)}
\newcommand{\mat}[1]{\mathbf{#1}}
\newcommand{\myvec}[1]{\mathbf{#1}}
\newcommand{\ment}[3]{\mathbf{#1}[#2,#3]}
\newcommand{\vent}[2]{\myvec{#1}[#2]}
\newcommand{\Lap}{\mat{L}}
\newcommand{\Lpinv}{\mat{L}^\dagger}
\newcommand{\LpinvH}{\mat{L}^{\dagger/2}}
\newcommand{\LpinvG}[1]{\mat{L}_{#1}^\dagger}
\newcommand{\nd}[2]{\myvec{D}(#1,#2)}
\newcommand{\effres}[2]{\myvec{r}(#1,#2)}
\newcommand{\effresG}[3]{\myvec{r_{#1}}(#2,#3)}
\newcommand{\trace}[1]{\operatorname{tr}(#1)}
\newcommand{\uvec}[1]{\mathbf{e}_{#1}}
\newcommand{\onesvec}{\mathbf{1}}
\newcommand{\ecc}{\operatorname{ecc}}
\newcommand{\farnel}[2]{f^{el}(#2)}
\newcommand{\farnc}[2]{f^{c}(#2)}
\newcommand{\Lab}[1]{\mathcal{L}[#1]}
\newcommand{\dir}[1]{\operatorname{direction}[#1]}
\newcommand{\idx}[2]{\textsc{Index}(#1, #2)}
\newcommand{\poly}[1]{\operatorname{poly}(#1)}
\newcommand{\numtrees}[4]{N_{#1,#4}(#2,#3)}

%%%%% algorithm names %%%%%
\newcommand{\swapMaxV}{LS-MD\xspace}
\newcommand{\swapMaxInc}{LS-MI\xspace}

\newcommand{\bekasH}{\tool{Bekas-h}\xspace}
\newcommand{\bekas}{\tool{Bekas}\xspace}
\newcommand{\jltLamg}{\tool{Lamg-jlt}\xspace}
\newcommand{\jltKyng}{\tool{Julia-jlt}\xspace}
\newcommand{\ust}{\tool{UST}\xspace}

\newcommand{\grip}{\tool{k-GRIP}\xspace}

%%%%% tool names %%%%%
\newcommand{\tool}[1]{\textsf{#1}}
\newcommand{\nwk}{\tool{NetworKit}\xspace}
\newcommand{\pinv}{\tool{pinv}\xspace}

%%%%% comments %%%%%
\newcommand{\rec}[1]{\textcolor{gray}{[recons.: #1]}\xspace} % reconsider!
\newcommand{\checkbeforesubmit}[1]{\textcolor{red}{#1 (check)}}
\newcommand{\final}[1]{\textcolor{blue}{#1\space}} % additions for final submission

%%%%% constants %%%%%
\newcommand{\instTabColSep}{3pt}

\mainmatter              % start of a contribution
\title{\Large A Fully-dynamic Approximation Algorithm \\ for Maximum Weight $b$-Matchings in Graphs}
\titlerunning{A Fully-dynamic Appr. Algorithm for Maximum Weight $b$-Matchings}  % abbreviated title (for running head)
%                                     also used for the TOC unless
%                                     \toctitle is used
%
% \author{Author1Name Author1Surname\inst{1} \and Author2Name Author2Surname\inst{2}
% Author3Name Author3Surname \and Author4Name Author4Surname \and Author5Name Author5Surname \and Author6Name Author6Surname}

\author{Fabian Brandt-Tumescheit % \thanks{Humboldt-Universität zu Berlin, Germany}
\and Frieda Gerharz % \thanks{Humboldt-Universität zu Berlin, Germany} 
\and Henning Meyerhenke % \thanks{Humboldt-Universität zu Berlin, Germany}
}

\authorrunning{F.\ Brandt-Tumescheit, F.\ Gerharz, and H.\ Meyerhenke} % abbreviated author list (for running head)

\institute{Humboldt-Universität zu Berlin, Germany\\
\email{\{brandtfa,meyerhenke\}@hu-berlin.de}\\ 
% WWW home page: \texttt{http://users.com/author1.html}
%\and
%\email{frieda.gerharz@outlook.com},\\ 
%\and
%\email{meyerhenke@hu-berlin.de}
}
% Institution2,
% address,\\
% state, country}

\maketitle              % typeset the title of the contribution

\begin{abstract}
    %% CONTEXT
    %% What is a b-matching? What do we need it for?
    Matching nodes in a graph $G=(V,E)$ is a well-studied algorithmic problem with many applications.
    The $b$-matching problem is a generalization that allows to match a node with up to $b$ neighbors.
    This allows more flexible connectivity patterns whenever vertices may have multiple associations.
    %% MOTIVATION
    %% Why do we need a dynamic algorithm for it?
    %% What's already out there?
    The algorithm \staticb\ [Khan et al., SISC 2016] is able to compute a (1/2)-approximation of a maximum weight $b$-matching in $\mathcal{O}(|E|)$ time.
    Since real-world graphs often change over time,
    fast dynamic methods for $b$-matching optimization are desirable.
    
    % \pagebreak % move below abstract when TODOs are resolved

    %% CONTRIBUTION and RESULTS
    In this work, we propose \dynb, a dynamic algorithm for the weighted $b$-matching problem.
    As a non-trivial extension to the dynamic Suitor algorithm for $1$-matchings [Angriman et al., JEA 2022],
    our approach computes (1/2)-approximate $b$-matchings by identifying and updating
    affected vertices without static recomputation. 
    Our proposed algorithm is fully-dynamic, \ie it supports both edge insertions and deletions,
    and we prove that it computes the same solution as its static counterpart.

    In extensive experiments on real-world benchmark graphs and generated instances, our dynamic algorithm yields significant savings compared to the sequential \staticb, \eg for batch updates with $10^3$ edges with an average acceleration factor of $10^3\times$.
    When comparing our sequential dynamic algorithm with the parallel (static) \staticb\ on a 128-core machine, our dynamic algorithm is still $59\times$ to $10^4\times$ faster. % for single edge updates.
%    Thus, when faced with computing approximate weighted $b$-matchings in dynamic graphs, our new algorithm
%    presents itself as the clearly fastest method of choice.
\keywords{approximate $b$-matching, dynamic graph algorithm, algorithmic network analysis}
\end{abstract}

\section{Introduction}
\label{sec:introduction}
% \todo[inline]{\textbf{Budget:} 1.5 columns}
%
%%% CONTEXT: WHAT?
A $b$-matching $M$ in a graph $G=(V,E)$ is a subgraph of $G$ such that each vertex is incident to
at most $b$ edges in $M$. This way, the concept of $b$-matchings generalizes the
widely known notion of matchings, which are $1$-matchings ($b=1$).
%
%%% CONTEXT: WHY?
$b$-matchings ($b\geq 1$) find applications in various fields, including social network anlaysis~\cite{HOEFER201320},
network design~\cite{lebedev2007using,bai2019coded}, scheduling~\cite{perumal2021electric}, 
resource allocation~\cite{bienkowski2021online,dong2020b,hadji2016new}, 
and fraud detection in finance networks~\cite{liu2023tracking}. %or unsupervised and semi-supervised machine learning~\cite{jebara2006b,jebara2009graph,li2013low,liang2021graph}.
Compared to ordinary matchings, $b$-matchings ($b \geq 2$) allow more flexible connectivity
patterns in scenarios where vertices may have multiple (yet still a restricted number of) associations.
\drop{
$b$-matchings are also used as a heuristic to solve other tasks, such as
the $\mathcal{NP}$-hard $k$-edge coloring problem~\cite{el2023b}.

\TODO{Ist das wirklich so, dass es dort als Heuristik genutzt wird?}
}

%The matching (or independent edge-set) problem in network analysis is a popular
%combinatorial optimization problem with applications in both scientific
%research\cite{hosseinian2022improved} as also in economic fields. Examples
%include scheduling for public transport~\cite{} or fraud
%detection in finance networks~.

%\todo[inline]{TBD: Machen wir auch gewichtete $b$-matchings im Paper?
%Also: bzgl. $b$ ungewichtet, bzgl. Zielfunktion gewichtet, d.h. Kanten
%im Matching werden aufsummiert (bzw. deren Gewicht)}

%%% MOTIVATION: why dynamic?
Graphs derived from real-world applications are often dynamic in nature; consider social, information, or infrastructure networks, for example.
While adding friends, comments, or likes in social networks is just one click of work, the update of a matching can become comparably time-consuming in a large graph when it is recomputed entirely from scratch.
% This becomes computationally intensive for large graphs undergoing frequent modifications.
%
%%% MOTIVATION: previous work on static b-suitor and dynamic suitor
To avoid a bottleneck when updating $M$ in large graphs,
one needs to restrict the work to those graph areas that are affected
by the changes in $G$, while ensuring that the new $b$-matching reflects these changes as well.
% \fbt{
%Algorithms which are designed with such dynamic changes in mind can be categorized by their support of graph updates.
When supporting both edge insertions and deletions in this manner, we refer to an algorithm as fully-dynamic.
    % } When only supporting one of this functions is referred to as semi-dynamic.}
For the $1$-matching problem, fully-dynamic algorithms have been proposed and implemented previously for maximum weight matching (\ie a set of independent edges with maximum total
weight)~\cite{angriman2022batch,angriman2021fully,azarmehr2024fully,baswana2018fully,behnezhad2020fully,henzinger2020dynamic}.
The algorithm proposed by Angriman \etal \cite{angriman2022batch} is based on the static Suitor algorithm~\cite{manne2014new}.
Since \staticb\ by Khan \etal \cite{khan2016efficient} shares the same foundation,
it is also a natural starting point for the design of a fully-dynamic $b$-Suitor algorithm for $b$-matchings.
While there exists other work on dynamic $b$-matching algorithms, the approaches are either theoretical in nature~\cite{azarmehr2024fully,bhattacharya2017improved,el2023b} and have not been thoroughly studied in practice or they support only edge insertions~\cite{bienkowski2021online}.

\iftoggle{report}
{}
{\vspace*{-0.6\baselineskip}}

%%% CONTRIBUTION:
\paragraph*{Contributions.}
In this paper, we propose \dynb\ -- a fully-dynamic $b$-matching algorithm that maintains a (1/2)-approximation of a maximum weight $b$-matching (see Section~\ref{sec:dynamic_b_suitor}).
The algorithm is influenced mainly by two previous works: \staticb\ by Khan \etal \cite{khan2016efficient} (Section~\ref{sub:static_b_suitor}) and the dynamic Suitor algorithm for the $1$-matching problem~\cite{angriman2022batch} (Section~\ref{sub:dynamic_suitor}).
Starting initially with a $b$-matching derived from \staticb, our fully-dynamic algorithm finds and updates all vertices whose matching decision is potentially affected by the recent changes to the graph.
Our main theoretical result (Theorem~\ref{theorem:dynb_final}), which requires a series of lemmas, is to prove that \dynb\ computes the same result as \staticb.

According to our extensive experiments (Section~\ref{sec:experimental_results}), our \dynb\ implementation yields an algorithmic speedup (= static running time divided by dynamic one) 
of $10^6$ over static recomputation for single updates (edge insertion/deletion) and $10^3$ for batches with $10^3$ edges.
Moreover, for all 20 networks from four different categories (social, infrastructure, sparse, generated), the ratio of algorithmic speedup to batch size is slightly above $1$.
    % due to overlapping updates for different batch elements (Section~\ref{sub:exp_results_speedup}).
%    While for batches of size $b_s$ the running time increases by that same size factor, in practice the speedup is about $2\times$ compared to individually processing the edges. 
This advantage of batch processing is due to synergies when visiting and updating affected nodes and edges from the same batch.
Running times are in the millisecond range or below, making the algorithm $59\times$ to $10^4\times$ faster than the state-of-the-art parallel \staticb\ implementation~\cite{khan2016efficient} on $128$ CPU cores, depending on the batch size. % a modern 128-core CPU. 
% \TODO{bis hier noch lesen}

% The rest of the paper is structured as following: First we introduce the problem and describe both the static $b$-Suitor and the dynamic Suitor algorithm in Section~\ref{sec:preliminaries}.
% In Section~\ref{sec:dynamic_b_suitor} we describe our dynamic $b$-Suitor algorithm and show that it maintains a $2$-approximation.
% Chapter \ref{sec:experimental_results} contains experimental results, showing speedups of up to $6$ orders of magnitude with respect to a full recomputation of the approximate maximum weighted $b$-matching.

\iftoggle{report}
{}
{\textbf{The appendix of this paper is omitted due to space constraints; it can be found in the extended version of this 
paper~\cite{extended2024bmatchingmaterial}.}}

\section{Preliminaries}
\label{sec:preliminaries}
\vspace{-0.1cm}
\subsection{Problem Definition and Notation}
\label{subsec:definition}
\vspace{-0.1cm}
% \TODO{\textbf{Budget:} 2.5-2.75 pages: notation, related work, static alg. + example}
%
We consider undirected weighted graphs $G = (V,E,w)$ with $n := |V|$ vertices (= nodes), $m := |E|$ edges,
and the edge weight function $\omega: E \rightarrow \mathbb{R}_{>0}$.
The set of neighbors of a vertex $u$ is denoted by $N(u)$ and the (unweighted) degree of $u$ as $\text{deg}(u) := |N(u)|$.
We use a common tie-breaking rule to define a total ordering: in case the weight of two edges $\{u, t\}$ and $\{u, v\}$ is equal, we say that $\omega(u, t) > \omega(u, v)$ if $t < v$, \ie if the node ID of $t$ is smaller than the ID of $v$.
This is required for breaking ties when selecting the heaviest
neighbor of $u$. 

A $b$-matching is a set of edges $M \subseteq  E$ in which no
more than $b(u)$ edges are incident to $u$, where $b(\cdot)$ is a function that maps
every vertex $u$ to a natural number.
Thus, $b$-matchings are a generalization of ordinary matchings; the latter are a special case by 
using the constant function $b : u \rightarrow 1$.
%
%Natural choices for the mapping function \todo{I think this needs clarification. The obvious question is: why do we have
%weighted graphs then? Answer: one is objective, the other constraint.}
%include node cardinality (number of chosen incident edges) and weighted
%cardinality (sum of the weights of the chosen edges).
%Whenever we refer to
%$b$-matching in this paper, it will be in the context of cardinality
%$b$-matching. 
%
The weight of a $b$-matching $M$ is defined as $\omega(M) = \sum_{e \in M} \omega(e)$.
If $M$ has maximum weight among all $b$-matchings, it is called a maximum weight $b$-matching (MWBM).
This paper deals with finding an approximate MWBM.
An edge $\{u,v\}$ is called \textit{matched} if it belongs to $M$; otherwise it is called \textit{unmatched}.
If a vertex $u$ is incident to $b$ edges from $M$, then
the vertex is called \textit{saturated}. If not, it is \textit{unsaturated}.
In general, different vertices can have different saturation goals $b(\cdot)$.
In our setting, these goals do not change when the graph changes over time.

\subsection{Related Work}
\label{subsec:related_work}

The MWBM problem has been studied both in terms of static and dynamic settings. We give an overview over both; afterwards, we explain two algorithms with high relevance for this paper in more detail, \staticb\ for $b$-matchings~\cite{khan2016efficient} and the dynamic Suitor algorithm for $1$-matchings~\cite{angriman2022batch}.
For the dynamic $1$-matching problem, a wide spectrum of algorithms have been proposed in the literature;
we refer the interested reader to Ref.~\cite{angriman2021fully} for an overview. 

Exact algorithms for MWBM have been around since the 1960s.
%
%for quite some time, dating back to 
\drop{
Edmonds~\cite{edmonds1965maximum} proposed one with time complexity $\Oh(n^3)$. 
Further improvements over the years could not decrease the time complexity below $\Oh(|V|\cdot |E|)$~\cite{pulleyblank1973,gabow1983efficient,muller2000implementing,gabow2018data}. 
The same time complexity even applies if the problem is restricted to bipartite graphs~\cite{fremuth1999balanced,huang2011fast,huang2007loopy}.
A natural construction for solving the $b$-matching problem is to apply reductions in order to arrive at $1$-matching~\cite{shiloach1981degree,huang2017approximate,ferdous2021degree}.
However, this introduces at least $b \times |V|$ additional vertices, substantially increasing the graph size.
}
%Reductions to the $1$-matching problem~\cite{shiloach1981degree,huang2017approximate,ferdous2021degree} blow up the graph sizes significantly and are thus impractical for large graphs.
Overall, the time complexity for computing the exact MWBM solution using known techniques
is $\Oh(|V||E|)$~\cite{gabow2018data}, which can become impractical for very large instances.
As a result, several approximation algorithms have been proposed.

Huang and Pettie~\cite{huang2017approximate} proposed a $(1-\epsilon)$ approximation algorithm for the MWBM problem 
by approximating the conditions for a relaxed complementary slackness formulation of the original problem.
While the running time is nearly linear in $|E|$ using a scaling framework~\cite{duan2014linear} for sufficiently large values of $\epsilon$, the algorithm itself is theoretical in nature.
Subsequently, Samineni \etal\cite{samineni2024approximate} addressed the MWBM problem by formulating an LP-relaxation based on dual variables. 
Compared to Huang and Pettie, their approach is more feasible for an actual implementation but introduces an additional dependency on $b$ in the time complexity.
However, the algorithm is only suitable for bipartite graphs, where vertices are split in \textit{bidders} and \textit{objects} in order to define an auction-like setup.

Another set of approximation algorithms is based on local optimization.
This includes a $(2/3-\epsilon)$ randomized algorithm~\cite{mestre2006greedy} that picks vertices uniformly at random and tries to match a neighbor based on the maximum gain between matched and unmatched neighbors.
Its running time depends on $\log{(1/\epsilon)}$, $b$ and the number of edges; moreover, due to its randomized nature,
its result is not deterministic.
Several $(1/2)$-approximation algorithms are based on the \textsc{Greedy} algorithm by Mestre~\cite{mestre2006greedy}.
They remove the running time dependency on $\epsilon$ and produce identical $b$-matchings~\cite{khan2016efficient,drake2003simple,pothen2019approximation}.
The $b$-suitor algorithm by Khan \etal\cite{khan2016efficient} proves to be the fastest technique within the family of greedy algorithms. %, outperforming the other approaches in terms of running time.
According to empirical results~\cite{khan2016efficient}, the solution quality can be much better than the proven $(1/2)$-approximation, with a margin of only $\sim1\%$ from the optimal value for some real-world networks.
The algorithmic construction makes the $b$-suitor algorithm also suitable for parallelization. 
While the original publication already includes a parallel variant for shared memory~\cite{khan2016efficient}, more recent publications adapt the construction to GPUs~\cite{naim2018scalable} and distributed environments~\cite{khan2016designing}.
Since our dynamic algorithm uses the $b$-suitor algorithm initially and computes the same matchings, we explain the latter in more detail in Section~\ref{sub:static_b_suitor}.

Concerning dynamic algorithms for MWBM on general graphs, the contributions so far have been theoretical in nature~\cite{bhattacharya2017improved,azarmehr2024fully,el2023b}.
Also for bipartite graphs, existing algorithms have yet to be implemented~\cite{bernstein2015fully,ting2015near}.
One exception is a reinforcement learning approach by Wang \etal \cite{wang2019adaptive}, which is concerned with inserting single nodes and corresponding edges into the (bipartite) graph together at the same time. 
Its idea is to use small batches of inserted node-edge tuples together and do a greedy optmization locally inside the batch.
The optimal batch splitting is then learned from the Hungarian algorithm.
So far, however, there is no way known to transfer this approach to MWBM on general graphs.
To the best of our knowledge, our proposed algorithm is the first implemented fully-dynamic algorithm
for the (approximate) MWBM problem for general graphs.

\drop{One algorithm to mention here is by Angriman \etal \cite{angriman2022batch}; it is based on the static suitor algorithm~\cite{manne2014new} and supports edge insertions and deletions.
Since the $b$-suitor algorithm is an adaptation of the suitor algorithm for $b$-matchings,
the dynamic framework~\cite{angriman2022batch} is also a good starting point for designing a dynamic 
algorithm for $b$-matchings. Hence, we will explain the framework in Section~\ref{sub:dynamic_suitor}.
}

\subsection{The \staticb\ Algorithm}
\label{sub:static_b_suitor}
\vspace{-0.1cm}
\staticb\ by Khan \etal \cite{khan2016efficient} is a (1/2)-approximation algorithm; it extends the 
Suitor algorithm~\cite{manne2014new} for matchings to $b \geq 2$.
% which is based on the suitor algorithm by Manne and Halappanavar~\cite{manne2014new}.
% Since our paper uses several concepts from Khan \etal, we introduce them in the following subsection.
In its core, \staticb\ iterates over all nodes $u$ of $G$ and tries to find up to $b(u)$ matching partners for $u$. Conceptually, each node negotiates with its neighbors to form a shared proposal, referred to as suitors.
To this end, for each node there exists a min-priority queue $S(u)$ with a maximum size of $b(u)$, representing all neighbors of $u$ which have already proposed to $u$ as potential suitors.
% Therefore, when perceived as a set, it holds that $S(u) \subseteq N(u)$.
If $S(u)$ contains $b(u)$ elements, we call it \sat\ and $S(u).min$ returns the suitor with the lowest edge weight.
In case $|S(u)|<b(u)$, $S(u)$ is \usat\ and the result of $S(u).min$ is $\texttt{None}$.

While the original pseudocode includes an additional helper set $T(u)$ that contains intermediate matches until the algorithm finishes, adaptations~\cite{naim2018scalable,khan2016designing} remove this set.
In this adapted version, the invariant $v \in S(u) \Leftrightarrow u \in S(v)$ is maintained throughout the execution of the algorithm.
In the following, we refer to this invariant as the \textit{S-invariant} for convenience.
% Khan et al. additionally define a helper set $T(u)$; it holds at most $b(u)$ of all neighbors $N(u)$ that $u$ has already proposed to (speculative matches).
% During the execution, the invariant $v \in S(u) \Leftrightarrow u \in T(v)$ is maintained.

When the main vertex of the current iteration is $u$, the goal is to find a matching partner $p_u$ for $u$
% \fbt{For each loop of the algorithm, the goal is to find a matching partner for node $u$ 
according to:
\begin{equation}
    p_u=\argmaxA_{v \in N(u) \setminus S(u)} \{ \omega\left(u,v\right) \vert ~\omega\left(u,v\right) > \omega(v, S(v).min) \}
    \label{eq:static_suitor}
\end{equation}
If a matching partner of this kind is found, $u$ updates its list of suitors $S(u)$ and also adds itself to the suitor list of that partner.
If that partner was saturated before, we need to remove its previous minimum suitor and find a new partner for the removed vertex recursively (also see the iterative version as Algorithm~\ref{alg:static_suitor_main} in Appendix~\ref{sub:appendix-staticb}).
The key result about \staticb\ is:
\begin{proposition}[\cite{khan2016efficient}]
    The $S$-queues of all nodes form a valid $b$-matching $M=\{\{u,v\} \in E~ \vert~ u \in S(v), v \in S(u)\}$.
    $M$ is a (1/2)-approximation of a (not necessarily unique) maximum weighted $b$-matching.
    \label{proposition:valid_matching_static}       
\end{proposition}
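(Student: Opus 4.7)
The plan is to treat the two claims separately. For the validity of $M$ as a $b$-matching, I would induct on the sequence of queue updates executed by the algorithm, maintaining the $S$-invariant $v \in S(u) \Leftrightarrow u \in S(v)$ together with the capacity bound $|S(u)| \leq b(u)$. Both hold trivially at initialization. Every insertion of $v$ into $S(u)$ is coupled with the symmetric insertion of $u$ into $S(v)$, and any eviction of the previous minimum on either side is paired with the corresponding removal on the other, so the $S$-invariant is preserved; the capacity bound is enforced by the priority queue. It follows that $M = \{\{u,v\} \in E : u \in S(v), v \in S(u)\}$ is well-defined and each vertex is incident to at most $b(u)$ edges of $M$.

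For the $(1/2)$-approximation, fix any MWBM $M^*$. The key structural fact I would establish is a termination property: for every edge $\{u,v\} \in E \setminus M$, at least one endpoint $x \in \{u,v\}$ satisfies that $S(x)$ is saturated and $\omega(x, S(x).\min) \geq \omega(u,v)$. The proof is by contradiction using equation~\eqref{eq:static_suitor}: if both endpoints failed this property, then $v$ would be a valid partner for $u$ (and symmetrically), so the algorithm could not have terminated. A subtle point worth recording along the way is that $S(x).\min$ is monotonically non-decreasing over the execution (evictions replace the minimum by a strictly heavier suitor), so such a blocking condition persists once it is established.

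For the charging, assign each $e^* \in M^* \setminus M$ to one blocking endpoint $x(e^*)$ guaranteed by the termination property, and group $A_v := \{e^* \in M^* \setminus M : x(e^*) = v\}$, so that the $A_v$ partition $M^* \setminus M$. The technical core is the per-vertex inequality $\omega(A_v) \leq \omega(M(v) \setminus M^*(v))$, where $M(v)$ and $M^*(v)$ denote the edges of $M$ and $M^*$ incident to $v$. Whenever $A_v \neq \emptyset$, the queue $S(v)$ is saturated, so $|M(v)| = b(v)$ and every edge of $A_v$ has weight at most $\min_{e \in M(v)} \omega(e)$. Since $|M^*(v)| \leq b(v)$ and $A_v \subseteq M^*(v) \setminus M(v)$, a counting argument gives $|A_v| \leq b(v) - |M^*(v) \cap M(v)| = |M(v) \setminus M^*(v)|$, and bounding $\omega(A_v)$ above and $\omega(M(v) \setminus M^*(v))$ below by the appropriate multiple of $\min_{e \in M(v)} \omega(e)$ yields the inequality. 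Summing over all vertices gives $\omega(M^* \setminus M) = \sum_v \omega(A_v) \leq \sum_v \omega(M(v) \setminus M^*(v)) = 2\omega(M \setminus M^*)$, and combining with $\omega(M^* \cap M) = \omega(M) - \omega(M \setminus M^*)$ yields $\omega(M^*) \leq \omega(M) + \omega(M \setminus M^*) \leq 2\omega(M)$.

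The main obstacle is the charging step. The termination property only guarantees at least one blocking endpoint per $e^*$, so one must fix an arbitrary but consistent choice up front; the saturated queue at the chosen endpoint is precisely what enables the counting bound $|A_v| \leq |M(v) \setminus M^*(v)|$, and coupling it with the weight bound coming from $\min_{e \in M(v)} \omega(e)$ is what collapses the doubled contribution from summing over two endpoints into the needed factor of two.
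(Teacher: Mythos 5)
The paper does not actually prove this proposition---it is imported from Khan et al.\ \cite{khan2016efficient} (hence the citation in the proposition header), so there is no in-paper proof to compare against. Your argument is the standard charging proof for greedy-type $b$-matching algorithms and is essentially correct: validity follows from the capacity bound on the priority queues together with the fact that $M$ only collects mutual pairs, and the $(1/2)$-ratio follows from the blocking property at termination, the counting step $|A_v|\le|M(v)\setminus M^*(v)|$, the weight bound via $\min_{e\in M(v)}\omega(e)$, and the final chain $\omega(M^*)\le\omega(M)+\omega(M\setminus M^*)\le 2\omega(M)$, all of which check out.

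Two points deserve a little more care. First, in the static algorithm (Algorithm~\ref{alg:static_suitor_main}) insertions are \emph{not} literally symmetric---line~\ref{algline:static_insert_u_in_p} inserts $u$ into $S(p)$ without a reverse insertion, and half-matched edges exist during the run (see Figure~\ref{fig:static_suitor})---so your induction should maintain only the capacity bound, with mutuality enforced by the final definition of $M$; the conclusion is unaffected. Second, your one-line contradiction for the blocking property disposes of an unsaturated endpoint immediately, but the case where an endpoint $u$ is saturated yet $\omega(u,S(u).min)<\omega(u,v)$ needs one more ingredient: because each proposal targets the \emph{heaviest} eligible candidate, $u$ would have proposed to $v$ before acquiring its current lightest suitor unless $v$ was already ineligible at that moment, and the monotonicity of $\omega(v,S(v).min)$ (which you do record) then keeps $v$ ineligible through termination, yielding the contradiction. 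With that spelled out, the argument is complete.
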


The overall running time of \staticb\ is $\Oh(m\Delta\log{\beta})$, with $\Delta$ as maximum vertex degree of $G$ and $\beta=\max_{v \in V}{b(v)}$. %\todo{$O$ vs $\Oh$}
Obviously, by setting $b=1$, \staticb\ can also be used for solving the corresponding $1$-matching problem.
Khan et al.~\cite[Sec.~3]{khan2016efficient} also design a parallel version of the algorithm for shared-memory systems.
It uses a global queue and vertex-specific locks whenever $S(u)$ is updated.
Since our dynamic algorithm is sequential in nature, we mainly compare against the sequential \staticb.
Note that our experiments also show that \dynb\ is much faster than recomputing
from scratch in parallel. % (and ``fast enough'' without parallelism).}

\subsection{The Dynamic Suitor Algorithm}
\label{sub:dynamic_suitor}
\vspace{-0.1cm}
For the 1-matching problem, Angriman et al.~\cite{angriman2022batch} design a fully-dynamic algorithm based on the Suitor algorithm by Manne et al.~\cite{manne2014new}.
Starting with a matching $M$ in the initial graph, \eg computed by the static Suitor algorithm, the main idea of the dynamic matching (DM) algorithm is to update the matching by identifying all vertices influenced by the changes in the graph.
For this to work, Angriman \etal introduce the definition of \textit{affected} vertices~\cite[Def.~1]{angriman2022batch}.
The main idea is to call any vertex $u$ {affected} for which there exists a better suitor in the neighborhood $N(u)$. Since $b=1$, $S(u)$ now refers to a single suitor instead of a list. %being a suitor of a vertex $u$ can be described by a binary relation
As a result, affected vertices can be found iteratively by repeatedly applying Eq.~(\ref{eq:dynamic_suitor}) -- a variation of Eq.~(\ref{eq:static_suitor}):
% , exchanging $T(u)$ for $S(u)$:}
\begin{equation}
    s_u=\argmaxA_{v \in N(u) \setminus S(u)} \{ \omega\left(u,v\right) \vert ~\omega\left(u,v\right) > \omega(v, S(v)) \}
    \label{eq:dynamic_suitor}
\end{equation}

Whenever $s_u$ is not \texttt{None}, $u$ is affected and we connect $u$ and $s_u$ as new suitors. The iterative process continues by checking Eq.~(\ref{eq:dynamic_suitor}) again for the previous suitor of $s_u$. This process continues until no new affected vertex is found (\ie Eq.~(\ref{eq:dynamic_suitor}) returns \texttt{None} for the vertex in question).

The iterative pursuit of locally affected vertices traverses update paths for both edge insertions and deletions.
Each path starts at one endpoint $u$ or $v$, respectively, of the updated edge $\{u,v\}$.
%If that endpoint becomes \textit{locally affected} by the update, the algorithm is able to find a new suitor $x_1$ based on Eq.~(\ref{eq:dynamic_suitor}).
%Either the algorithm is finished or the newly found suitor already has another suitor with a lower weight (otherwise it would not have been chosen).
%\fbt{In the later case, the previous suitor $s_p$ is removed and a new check for being \textit{locally affected} is invoked for that node $s_p$.}
%This process continues until all \textit{affected} nodes are found. 
Angriman \etal prove that after handling all vertices on the path, the resulting matching equals the matching computed by the static Suitor algorithm.
%{Since the length of one path is bounded by the number of nodes $|V|=n$ and each step searches the complete neighborhood, leading to an overall time complexity of $O(n+m)$.
In the worst-case, this iterative process takes $\Oh(n+m)$ time. Especially for single edge updates, only a fraction of nodes is affected, though. % -- leading to significant algorithmic speedups in practice. 
Even for batch updates with up to $10^3$ nodes, the algorithmic speedup (DM vs Suitor) is still several orders of magnitude.

Note that the rules for identifying affected vertices in their \textsc{FindAffected} routine~\cite[Alg. 1]{angriman2022batch} cannot be applied or directly translated to the $b$-matching case, since nodes might need to be updated multiple times if $b>1$.
Moreover, the dynamic Suitor algorithm traverses the update paths twice\drop{(forward in function \textsc{FindAffected} and backwards in function \textsc{UpdateAffected})}, leaving room for improvement.

\section{The \dynb\ Algorithm}
\label{sec:dynamic_b_suitor}
\begin{table*}[b]
    \begin{center}
        \begin{tabular}{||l c c c c||}
            \hline
            \textbf{state}                                                                             & \textbf{graph}                  & \textbf{neighbors} & \textbf{suitors}  & \textbf{$b$-matching} \\ [0.5ex]
            \hline\hline
            \begin{tabular}{@{}l@{}}original graph + \\ after \staticb \end{tabular}           & $G=(V,E)$                       & $N(u)$             & $S(u)$            & $M$                   \\
            \hline
            \begin{tabular}{@{}l@{}}updated graph + \\ after \staticb \end{tabular}            & $\widetilde{G}=(V,\widetilde{E})$ & $\widetilde{N}(u)$  & $\widetilde{S}(u)$ & $\widetilde{M}$        \\
            \hline
            \begin{tabular}{@{}l@{}}updated graph + \\ \underline{during} \dynb \end{tabular} & $\widetilde{G}=(V,\widetilde{E})$   & $\widetilde{N}(u)$  & $S^{(i)}(u)$      & $M^{(i)}$             \\
            \hline
            \begin{tabular}{@{}l@{}}updated graph + \\ \underline{after} \dynb \end{tabular}  & $\widetilde{G}=(V,\widetilde{E})$   & $\widetilde{N}(u)$  & $S^{(f)}(u)$      & $M^{(f)}$             \\ [1ex]
            \hline
        \end{tabular}
    \end{center}
    \caption{Notation used for dynamic maximum weight $b$-matching computations. %The index $(i)$ stands for \emph{intermediate}, the index $(f)$ for \emph{final}.
    }
    \label{tab:dynamic_notation}
\end{table*}

The core ideas of our dynamic algorithm \dynb\ extend and generalize the sequential \staticb\ as well as the DM framework 
(Sections~\ref{sub:static_b_suitor} and~\ref{sub:dynamic_suitor}). 
The resulting algorithm computes a (1/2)-approximate maximum weight $b$-matching for dynamic graphs
by revising vertices affected by an edge update.
It can handle single edges and batches of edges to be inserted or removed.

Due to overlapping design decisions, we reuse notation from Angriman \etal \cite{angriman2022batch}.
For the different states of the graph and our data structures (i) before, (ii) during, and (iii)
 after the manipulation of the graph and the execution of the algorithm, we use the notation in
 Table~\ref{tab:dynamic_notation}. % for a graph $G$ and a node $u$.
Since the dynamic algorithm supports both edge insertions and deletions, $\widetilde{G}=(V,\widetilde{E})$, and $\widetilde{N}(u)$ denote the state of the graph data structure after edges have been added/removed.
$S$/$\widetilde{S}$ follows the definitions from Section~\ref{sub:static_b_suitor}, representing the data structures for the suitors 
\wrt the original/updated graph.
For \dynb, we introduce subscripts similar to the dynamic Suitor algorithm.
They refer to state $^{(i)}$ (intermediate) directly after the graph edge update (addition or removal) and $^{(f)}$ (final) after the dynamic algorithm has been run.
The intermediate $b$-matching $M^{(i)}$ considers the updated graph data structure; in this state, it may differ
from the matching $\widetilde{M}$ obtained by the static algorithm.
The deviation of $M^{(i)}$ with respect to $\widetilde{M}$ could either be caused by an edge insertion or removal and the associated change of suitor lists.

% Based on Eq.~(\ref{eq:static_suitor}) and the \textit{S-invariant}, \fbt{the \staticb $\widetilde{G}$:}
% %
% \begin{align}
%     \widetilde{s}_u & = \argmaxA_{v \in \widetilde{N}(u) \setminus \widetilde{S}(u)} \{ \omega\left(u,v\right) \vert ~\omega\left(u,v\right) > \omega(v, \widetilde{S}(v).min) \}
%     \label{eq:static_suitor_g_prime}
% \end{align}

The goal of the dynamic algorithm is to efficiently update $S^{(i)}(u)$ after an edge insertion or removal in order to arrive at a final state $S^{(f)}(u)$, so that the invariant $v \in S^{(f)}(u) \Leftrightarrow u \in S^{(f)}(v)$ holds and also that the resulting $M^{(f)}$ equals the $b$-matching $\widetilde{M}$ computed by \staticb.
% This is the case if for every node $u$ it holds that $S^{(f)}(u) = \widetilde{S}(u)$.
To achieve this goal, we define the notion of a \textit{locally affected} vertex, extending Eq.~(\ref{eq:static_suitor}):% in a slightly different
%way than Eq.~(\ref{eq:static_suitor}); it exchanges the helper data structure $T$ by directly working on $S$:

\begin{definition}
    A vertex $u \in V$ is called \laff\ iff there exists a better suitor in the updated neighborhood
    $\widetilde{N}(u)$, \ie for $s_u = \argmax_{v \in \widetilde{N}(u) \setminus S^{(i)}(u)} \{ \omega\left(u,v\right) \vert$ $ ~\omega\left(u,v\right) > \omega\left(v, S^{(i)}(v).min\right) \}$ it holds that $s(u) \neq \texttt{None} \wedge \omega\left(u, S^{(i)}(u).min\right) < \omega\left(u, s_u\right)$.
    \label{def:affected}
\end{definition}

\subsection{Finding Locally Affected Vertices}
\label{subsec:locally_affected}
After a single edge update, our dynamic algorithm computes $M^{(f)}$ by finding all {locally affected} vertices iteratively, starting with
endpoints of the updated edge (see \textsc{FindAffected} as Algorithm~\ref{alg:dynamic_suitor_find_affected_new}).
In that sense, the main loop (lines \ref{algline:findaffected_main_loop_start} to \ref{algline:findaffected_main_loop_end}) reuses ideas from the dynamic suitor algorithm. % since it iteratively finds all \textit{locally affected} vertices. 
The main difference in \dynb\ is the application of the {S-invariant} from \staticb\ throughout the dynamic update, \ie consequently updating $S^{(i)}$ for every update of node $u$ and its new partner (lines \ref{algline:s-invariant1}+\ref{algline:s-invariant2}).
Maintaining the {S-invariant} also enables \dynb\ to remove the need for traversing the update paths twice, thereby reducing the workload with respect to the dynamic suitor algorithm.

\begin{algorithm}[bt]
    \begin{algorithmic}[1]
        \begin{scriptsize}
            \Function{FindAffected}{$x$}
            \State $\text{done} \gets \texttt{False}$
            \State ${cu} \gets x$ \Comment{Init \textit{current} node} \label{algline:init-cu}
            \State ${ca} \gets S^{(i)}({cu}).min$ \Comment{Init \textit{candidate} node}
            % \State $\text{cw} \gets \omega(cu, ca)$ \Comment{Init \textit{candidate} weight}
            \State $looseEnd \gets \texttt{None}$
            \Repeat \label{algline:findaffected_main_loop_start}
            \State $\text{done} \gets \texttt{True}$
            \State $ca \gets \underset{v \in \widetilde{N}({cu}) \setminus S^{(i)}({cu})}{\argmax} \{ \scriptscriptstyle \omega({cu},v) : \omega({cu},v) > \omega(v, S^{(i)}(v).min) \}$ \label{algline:check_def_affected}
            % \State $\text{ca} \gets \textsc{FindBest}(cu, ca, cw)$ \Comment{Searches in $\widetilde{N}(\text{cu})$ for best \textit{candidate}} \label{algline:find_best}
            \If{ $\text{ca} == \texttt{None}$ } \Comment{Def. \ref{def:affected} not fulfilled $\rightarrow$ stop here} 
            \State \textbf{break} \label{algline:break}
            \EndIf
            \State $prevCu \gets S^{(i)}(cu).insert(ca)$ \Comment{Returns replaced item} \label{algline:s-invariant1}
            \State $prevCa \gets S^{(i)}(ca).insert(cu)$ \label{algline:s-invariant2}

            \If{$prevCu \neq \texttt{None}$} \Comment{Detect special case}\label{algline:special}
            \State $S^{(i)}(prevCu).remove(cu)$
            \State $looseEnd \gets \text{prevCu}$
            \EndIf \label{algline:special_end}

            \If{$prevCa \neq \texttt{None}$} \Comment{Found new \textit{affected} node} \label{algline:prevca_not_none}
            \State $S^{(i)}(prevCa).remove(ca)$ \Comment{Maintain \textit{S-invariant}} \label{algline:prevca_s_invariant}
            \State $\text{cu} \gets prevCa$ \Comment{ Set \textit{current} to new affected node} \label{algline:prevca_cu}
            \State $\text{done} \gets \texttt{False}$
            \EndIf
            \State $\text{ca} \gets S^{(i)}(cu).min$
            \State $\text{cw} \gets \omega(cu, ca)$ \label{def:end_loop}
            \Until{$\text{done}$} \Comment{Repeat loop until path is finished} \label{algline:findaffected_main_loop_end}
            \If{ $looseEnd$ != $\texttt{None}$}
            \State $\textsc{FindAffected}(looseEnd)$ \Comment{Repair special case} \label{algline:findaffected_loose_end}
            \EndIf
            \EndFunction
        \end{scriptsize}
    \end{algorithmic}
    \caption{Function \textsc{FindAffected}}
    \label{alg:dynamic_suitor_find_affected_new}
\end{algorithm}

\textsc{FindAffected} is called for each endpoint of an edge update, initialized as $cu$ in the algorithm (line~\ref{algline:init-cu}).
The goal is then to find a suitable candidate $ca$ (line \ref{algline:check_def_affected}).
To this end, the unmatched neighborhood $\widetilde{N}({cu}) \setminus S^{(i)}({cu})$ of $cu$ is searched.
If no eligible candidate is found or the candidate does not fulfill Definition~\ref{def:affected}, \textsc{FindAffected} exits the loop (line \ref{algline:break}).
Otherwise, we continue and match $cu$ and $ca$, maintaining the \textit{S-invariant}.
If $ca$ is saturated, \ie $prevCa$ is assigned to a not \texttt{None} value (lines~\ref{algline:s-invariant2} and~\ref{algline:prevca_not_none}), $ca$ will also be removed from $prevCa$.
Since $prevCa$ is not $saturated$ anymore and therefore might also fulfill Definition \ref{def:affected}, the next iteration of the main loop uses $prevCa$ as the new $cu$ vertex (line~\ref{algline:prevca_cu}).
Similarly to the DM framework from Ref.~\cite{angriman2022batch}, \textsc{FindAffected} traverses a so-called \uppath\ of affected nodes.
This path ends when either $ca$ was not saturated before the update or no suitable $ca$ can be found.

Compared to \staticb, \textsc{FindAffected} directly updates both suitor queues of the affected node $cu$ and the new partner $ca$.
To compensate for the possibility of recurring node updates, we introduce the concept of a \emph{loose end}. 
A loose end is identified whenever there is a \laff\ node that is already \sat\ (line \ref{algline:special}).
After the iterative update path has been finished, this loose end is processed in a second recursive run of \textsc{FindAffected}.
%\iftoggle{report}
%{The situation which leads to a loose end is described in more detail in Appendix~\ref{TODO}. \hmey{Where exactly?}}
%}
%{The situation which leads to a loose end is described in more detail in the extended version of the paper in Appendix~\ref{TODO}.\hmey{Where exactly?}
%}

For \dynb\ we additionally define two more functions: \edgeins\ and \edgerem. 
Given an inserted edge $e=\{u,v\}$, \edgeinsParam{$u$}{$v$} checks whether $u$ and $v$ are \laff\ by the update, \ie whether the new edge should be in the final matching $\widetilde{M}$.
Analogously, \edgeremParam{$u$}{$v$} checks whether the removal of $e$ influences the suitor lists of $u$ and $v$ and therefore might induce changes to $\widetilde{M}$.
If an edge update indeed renders $u$ and $v$ \laff, \edgeinsParam{$u$}{$v$} and \edgeremParam{$u$}{$v$} update $S^{(i)}(u)$ and $S^{(i)}(v)$ and call \findaff\ for both $u$ and $v$, respectively, in order to identify and update all \laff\ vertices.

In addition to the single edge insertion and removal, the algorithm can also be generalized to support multiple edge updates $B=\{e_1, \dots, e_i\}$.
For both variants (removal, insertion) the main idea is to apply their single edge update counterpart in a loop on $B$ and
$\widetilde{G}$.
$\widetilde{G}$ is the updated graph based on $G$ with either multiple removed ($\widetilde{G}=(V,\widetilde{E}=E \setminus B)$) or added edges ($\widetilde{G}=(V,\widetilde{E}=E \cup B)$).
Note that the result of one loop iteration (\ie the insertion/removal of a single edge) is not the final matching $M^{(f)}$
anymore, but instead the intermediate result $M^{\left(i\right)}$.
Once the loop over all edge updates is completed, $M^{\left(i\right)}$ becomes $M^{(f)} = \widetilde{M}$.

The main theoretical result regarding \dynb\ is that, after applying the update routines, the resulting matching $M^{(f)}$
indeed equals $\widetilde{M}$, the $b$-matching computed by \staticb:

\begin{theorem}
    Let $\widetilde{M}$ be the matching computed using the \staticb\ on $\widetilde{G}$ and $B=\{e_1, \dots, e_i\}$ be an batch of edge updates on $G$. After \dynb\ is finished, $M^{(f)}$ equals $\widetilde{M}$.
    $\widetilde{M}$ is a (1/2)-approximation of the MWBM problem.
    \label{theorem:dynb_final}
\end{theorem}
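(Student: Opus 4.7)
The plan is to decompose the claim into two parts: first, that $M^{(f)} = \widetilde{M}$, and second, that $\widetilde{M}$ is a $(1/2)$-approximation. The latter is immediate from Proposition~\ref{proposition:valid_matching_static} applied to the static run of \staticb\ on $\widetilde{G}$, so the substance of the proof is the equality $M^{(f)} = \widetilde{M}$.

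To handle the batch, I would argue by induction on $|B|=i$. The base case $i=0$ is trivial since $M^{(f)}=M=\widetilde{M}$. For the inductive step, I would treat the loop over $B$ as a sequence of single edge updates: the intermediate matching $M^{(i)}$ produced after processing $e_1,\dots,e_{k-1}$ is, by the inductive hypothesis, the static \staticb\ matching on the graph $(V, E\triangle\{e_1,\dots,e_{k-1}\})$; then processing $e_k$ reduces to the single-edge claim applied with that intermediate graph as the ``initial'' state. This reduces everything to proving correctness of \edgeins\ and \edgerem\ for a single edge, which is the heart of the argument.

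For the single-edge case, the strategy is to characterize $\widetilde{M}$ intrinsically via the $S$-invariant together with Definition~\ref{def:affected}: in $\widetilde{M}$, no vertex is \laff\ (because \staticb's output satisfies Eq.~(\ref{eq:static_suitor}) for every vertex). Conversely, I would prove a uniqueness lemma: any $b$-matching on $\widetilde{G}$ for which the $S$-invariant holds and no vertex satisfies Definition~\ref{def:affected} must equal $\widetilde{M}$. This requires the tie-breaking rule from Section~\ref{subsec:definition} and a standard exchange argument on the heaviest edge where two such matchings disagree. Granted this lemma, it suffices to show that when \textsc{FindAffected} terminates on both endpoints of the updated edge, (a) the $S$-invariant holds, and (b) no vertex of $\widetilde{G}$ is \laff. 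Claim (a) follows by inspection of lines~\ref{algline:s-invariant1}--\ref{algline:prevca_s_invariant}, since every insertion into some $S^{(i)}(u)$ is paired with a symmetric insertion and every displacement is paired with a symmetric removal; this should be captured as an invariant of the main loop. Claim (b) is the real content: I would argue by induction along an \uppath. Let $cu_0,cu_1,\dots,cu_\ell$ be the sequence of ``current'' nodes produced by the loop. The loop invariant is that before iteration $j$, every vertex \emph{other than} $cu_j$ and a possibly recorded $looseEnd$ fails Definition~\ref{def:affected}; one must check that choosing $ca$ as in line~\ref{algline:check_def_affected}, updating the two queues, and passing $prevCa$ on as the next $cu$ preserves this invariant. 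Termination follows from a potential argument: either $prevCa=\texttt{None}$ (the path ends cleanly) or $ca$ strictly improves its minimum suitor weight, and a lexicographic-style potential on $\sum_u \omega(u,S^{(i)}(u).min)$ strictly increases along the path, bounding its length.

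The main obstacle will be the \emph{loose end}: when the newly affected $cu$ was itself already saturated (line~\ref{algline:special}), displacing $prevCu$ creates a second vertex that may now be \laff, while the primary path continues through $prevCa$. I need to show that (i) at most one loose end can be created per \textsc{FindAffected} call, so the recursive call at line~\ref{algline:findaffected_loose_end} is well-defined rather than branching, and (ii) the recursive call, by the same single-path induction applied to the state after the primary path has completed, eliminates the last remaining potential violation of Definition~\ref{def:affected} without re-creating earlier ones. Point (ii) is delicate because the recursive call operates on an $S^{(i)}$ that has already been modified; I would argue that the modifications along the primary path can only \emph{increase} the minimum weight in each touched $S^{(i)}(\cdot)$, so they cannot turn a previously non-affected vertex into an affected one. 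Combining this with the uniqueness lemma, after both endpoints are processed the resulting matching satisfies the characterization of $\widetilde{M}$ and is therefore equal to it, completing the proof.
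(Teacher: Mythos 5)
Your proposal follows essentially the same route as the paper: reduce the batch to a sequence of single-edge updates, characterize $\widetilde{M}$ as the (by the tie-breaking rule) unique $b$-matching on $\widetilde{G}$ satisfying the $S$-invariant with no locally affected vertex, and show that \textsc{FindAffected} reaches this fixed point by analyzing update paths with decreasing edge weights plus the loose-end recursion --- which is precisely the content of Lemmas~\ref{lemma:path_property_weight}--\ref{lemma:edge_insertion_both_saturated}, Propositions~\ref{proposition:edge_insertion_final} and~\ref{proposition:edge_removal_final}, and the short proof in Appendix~\ref{sub:appx-proof-main-thm}. The one soft spot is your justification for step (ii) of the loose-end handling: the minimum suitor weight does \emph{decrease} at evicted vertices and at the terminal vertex of a path (the paper instead argues directly that at most one additional path $P_a$ arises and spawns no further ones), but your loop invariant already accounts for this and the overall argument stands.
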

\vspace{-0.2cm}
\begin{proof}
Due to space constraints, we provide the proof, based on a series of auxiliary results in Appendices~\ref{sub:appendix-edge-insertion} to~\ref{sub:appendix_batch_updates}, in Appendix~\ref{sub:appx-proof-main-thm}.
\end{proof}

%~\ref{sub:proofs-edge-insertion} and~\ref{sub:proofs-edge-removal}.

% The described process of handling multiple edge updates is again similar to previous work by Angriman et al., benefiting from the same time saving properties.
% While running time is not changed with respect to handling all edge updates seperately in terms of the graph datastructure, the practical performance is increased by processing directly on the final $\widetilde{G}$.
% For each update path $P^{\left[i\right]}$ it is possible to already process a node of a not yet processed edge update such that it is not anymore \textit{affected}.
% Pseudocodes for the resulting algorithms are shown in Appendix, subsection \ref{sub:pseudocode-appendix}.

% When handling batches of edge updates, \dynb\ assumes that all edge updates are already reflected in the graph $\widetilde{G}=(V,\widetilde{E})$.
% \findaff\ is then applied to each update in the batch sequentially.

\section{Experiments}
\label{sec:experimental_results}

The implementation of the dynamic $b$-suitor algorithm \dynb\ is done in C++, using the open-source framework NetworKit~\cite{angriman2023algorithms}. We plan to integrate the code into the next NetworKit release
after paper acceptance.
\iftoggle{report}
{}
{\footnote{Code, reproducibility instructions, and experimental results are available on Github: \url{https://github.com/hu-macsy/cn-2024-dynamic-b-suitor-experiments
}}}

%For measuring the algorithmic speedup, we use comparability between the examined static and dynamic $b$-suitor algorithms.
% \todo{explain algorithmic speedup}
Since \dynb\ is sequential per design, the main competitor for comparison is the sequential implementation of \staticb.
The main comparison is done by tracking the algorithmic speedup, \ie the running time of one complete run of \staticb\ 
divided by the running time of the update routine of \dynb\ after a certain number of edge updates.
To minimize potential running time differences caused by different toolkits, we implement \staticb\ in NetworKit, too.
\iftoggle{report}
{The code and reproducibility instructions can be found here: \url{https://github.com/hu-macsy/cn-2024-dynamic-b-suitor-experiments}.}
{}

\subsection{Settings}
\label{sub:exp_settings}
\vspace{-0.1cm}
All main experiments for the sequential comparison are executed on a machine with a dual-socket Intel Xeon 6126 processor with 12 cores and 192 GB RAM.
The parallel \staticb\ by Khan \etal\cite{khan2016efficient} is executed on an AMD Epyc 9754 with 128 cores and 192 GB of RAM.
To ensure reproducibility, the command line tool Simexpal~\cite{angriman2019guidelines} is used to automate the experiments.

The datasets include a wide variety of graphs that can be categorized into four different types: sparsely connected graphs (\emph{Sparse}), infrastructure networks (\emph{Infra}), social networks (\emph{Social}) and randomly generated (\emph{Random}) graphs.
Overall, these categories contain 20 different networks, ranging from $10^4$ to $10^7$ nodes and $10^6$ to $10^8$ edges. The networks are also chosen to have different underlying properties regarding average node degree.
See Appendix~\ref{sub:instance-stats} for more detailed information about the datasets. % statistics and preprocessing generation.

The batch size, the type of operation, and the value of $b$ form the major axes of the experimental investigation.
Our tests are conducted with batch sizes ranging from 1 to $10^3$ edges.
The saturation goal $b$ is taken from the set $\{1,3,10\}$, alongside randomly generated values
between $1$ and $10$, using a uniform distribution.
For each cross product of said axes (\eg graph = belgium, operation = insert, batch size = 100, $b$ = 3), \dynb\
 is repeated 50 times.
We use violin plots for visualizing the experimental data distributions.
Also, the correctness of the \dynb\ implementation was checked successfully
by comparing its $b$-matchings with those of \staticb\ on $10^6$ random $G(n,p)$ graphs.

\subsection{Evaluation}
\label{sub:exp_results_speedup}
\vspace{-0.1cm}

Since the results are similar for edge insertions and edge removal concerning the algorithmic speedup, 
we show in this section only the visualization of edge insertion and removal combined for batches of size $10^3$, see Figure~\ref{fig:res_speed}. 
The dynamic algorithm yields an average algorithmic speedup of $4.3\cdot 10^3$ over all categories.
The best results are achieved for infrastructure networks with an average algorithmic speedup of $6.5\cdot 10^3$ 
for single-edge updates, whereas the lowest is the \emph{Sparse} category with $2.1\cdot 10^3$. 
This is in line with how the algorithm works. The chosen infrastructure networks have the lowest average node degree.
As a result, even for low values of $b$, \staticb\ is more likely to revisit nodes.

\begin{figure}[t!]
    \centering
    \includegraphics[width=\linewidth]{./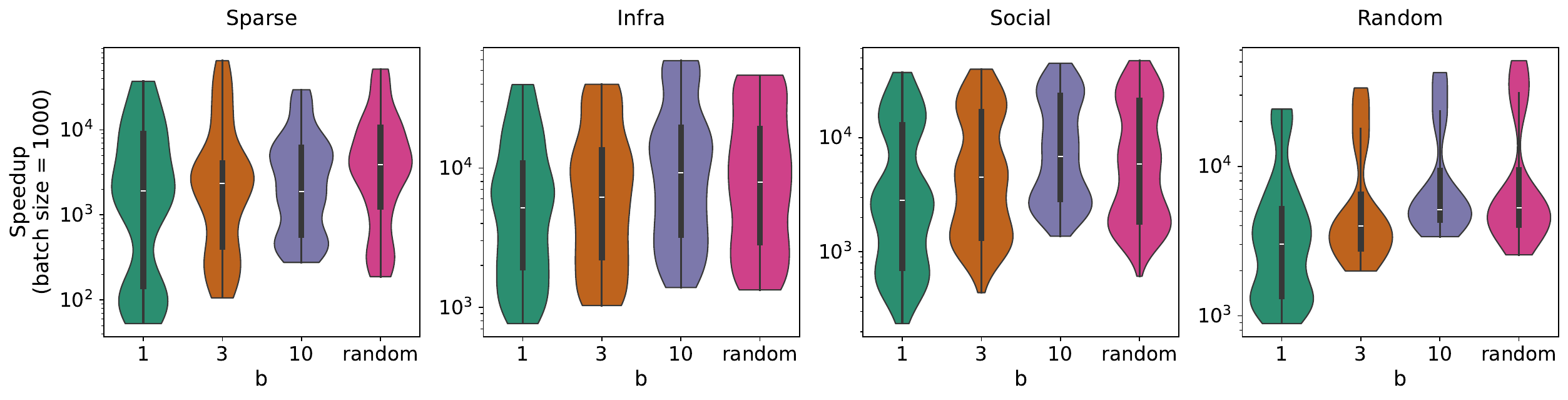}
    \caption{Speedup for batches of size $10^3$ (\textbf{edge insertion} and \textbf{edge removal} combined). For all analysed networks, the average speedup increases with the value of $b$.}
    \label{fig:res_speed}
\end{figure}

In Appendix~\ref{sub:add-exp-results} we also break down the different operations into edge insertion and 
removal and give a more detailed overview of the results with different batch sizes.
As expected, the maximum algorithmic speedup is achieved for single-edge updates:
the combined average (geometric mean) algorithmic speedup is six orders of magnitude.
When processing a batch of size $|B|$, one may expect a slowdown of $|B|$ \wrt to the running time for batch size $1$.
As can be seen from Figures~\ref{fig:res_speed_insertion_misc} and~\ref{fig:res_speed_removal_misc},
the average slowdown for batch size $10^3$ is $980$, very close to expectation.
This slight improvement can be explained by overlaps of the updates, 
\ie a certain update from one edge is already covered by the update of a previous edge in the same batch.

%For insertions the speedup is $2.2\cdot 10^6$ and $2.3\cdot 10^6$ for removals.

%$5.8\cdot 10^3$ for edge insertions and $3.2\cdot 10^3$ for edge removals, respectively.

%There exist slight differences between the different categories of networks.
%The best results are achieved for infrastructure networks with an average algorithmic speedup of $2.3\cdot 10^6$ for single-edge updates, whereas the lowest is the \textit{Sparse} category with $1.6\cdot 10^6$. 

%While this might seem counterintuitive at first sight, it is in line with how the algorithm works.
%The chosen infrastructure networks have the lowest average node degree compared to the other categories.
%As a result, even for low values of $b$, \staticb\ is more likely to revisit nodes. 
%due to the interaction between the helper structure $T$ and the suitor list.
% As already discussed in Section~\ref{sub:exp_results_affected}, the number of affected vertices for single-edge updates in the dynamic algorithm is the lowest among the categories.
% Overall, this explains the slightly higher speedup.

%\TODO{Link zu Appx.}

Furthermore, the algorithmic speedup of \dynb\ against the \emph{parallel} implementation of \staticb\ (on $128$ cores) still shows average speedup of $59\times$ for batch sizes of $10^3$, see Appendix~\ref{sub:exp_results_parallel}. 
\iftoggle{report}
{At the same place, the interested reader can find a more detailed comparison with the {parallel} \staticb.}
{}

For $b=1$, \ie an ordinary matching, we can compare our algorithmic speedups with those published in Angriman \etal\cite{angriman2022batch} directly.
Since \dynb\ iterates over \laff\ nodes only once, one might expect a measurable acceleration compared to the dynamic suitor algorithm. 
However, this is not the case; both algorithms yield the same order of magnitude. % to the dynamic algorithm for ordinary matching. 
% The geometric mean average in this case is $1.3\cdot 10^6$ for single-edge updates and $2.9\cdot 10^3$ for batch updates with size $10^3$, respectively. \todo{ref ``in this case'' unclear, what are the values of the dyn. suitor algo?}
A reason for this deviation from expectation likely stems from line \ref{algline:check_def_affected} in \textsc{FindAffected}: it involves more time-consuming checks compared to the dynamic Suitor algorithm. % due to the non-static neighborhood.

\vspace{-0.4cm}
\section{Conclusions}
\label{sec:conclusions} 
\vspace{-0.25cm}
Our experimental results show that the \dynb\ algorithm outperforms a recomputation from scratch with the sequential \staticb\ by three to six orders of magnitudes. 
Compared to the latter's parallel variant on 128 CPU cores, \dynb\ is still at least $59\times$ faster -- depending on the batch size even up to three orders of magnitudes.
From a theoretical viewpoint, our work is a non-trivial extension of \staticb, the dynamic Suitor algorithm, and their analysis.
We could eventually prove that the resulting matching equals the one by the static algorithm and is thus a (1/2)-approximation. 
In practice, according to previous work, the weight of these $b$-matchings is often near the optimal one.
Overall, our proposed algorithm provides a significant acceleration in all cases for updating $b$-matchings with high quality
in a dynamic setting.
\vspace{-0.3cm}
\paragraph{Acknowledgements:} This work is partially supported by German Research Foundation (DFG) grant GR 5745/1-1 (DyANE).

% ---- OR ------- Uncomment this section to use bibtex
\bibliographystyle{splncs04} % We choose the "plain" reference style
\bibliography{dyn-b-suitor} % Entries are in the refs.bib file

\newpage

\appendix

\section*{Appendix}

\section{Details Concerning \staticb}
\label{sub:appendix-staticb}

The original publication of \staticb\ by Khan \etal\cite{khan2016efficient} include two versions of the algorithm, a recursive one
and an iterative one. Both versions use an additional auxiliary data structure $T$, which tracks the vertices that a certain node
$v \in V$ has already been proposed to as a suitor.
Later publications, among them a GPU-parallel version of \staticb~\cite{naim2018scalable}, remove $T$ and only work with $S$ in an iterative fashion.
We use a very similar variant, shown in Algorithm~\ref{alg:static_suitor_main}.

\begin{algorithm}[h!]
    \begin{algorithmic}[1]
        \begin{scriptsize}
            \Procedure{staticBSuitor}{$G$, $b$}
            \State \textbf{Input:} Graph $G=(V,E, \omega)$, $b \in \mathbb{N}_{\geq 1}$
            \State \textbf{Output:} A $(1/2)$-approximate $b$-matching $M$
            \State $Q \gets b$ copies of $v$  $\forall~v \in V$
            \While {$Q \neq \emptyset$} \label{algline:static_loop_start}
                \State $u \gets Q.pop()$ \label{line:b_u_loop}
                % \ForAll {$u \in Q$} 
                    % \State $i \gets 1$
                    % \While {$i \leq b(u)$}  \label{line:b_u_loop} %and $N(u) \neq$ exhausted
                        \State $p \gets \underset{v \in N(u) \setminus S(u)}{\argmax} \{ \omega(u,v) : \omega(u,v) > \omega(v, S(v).min) \}$ \label{alg_line:static_b_suitor_pot_suitor}
                        \If{$p \neq \texttt{None}$}
                            % \State $i = i + 1$
                            \State $y \gets S(p).insert(u)$ \Comment{Update Suitor list of $p$, removing the min node $y$} \label{algline:static_insert_u_in_p}
                            \If {$y \neq \texttt{None}$}
                                \State $Q.push(y)$ \Comment{Add the previous min entry of $S(p)$ to the queue again.} \label{algline:static_add_y_to_q}
                                % \State $S(y).remove(p)$ 
                            \EndIf
                        % \Else
                        %     \State $N(u) \gets$ exhausted
                        \EndIf 

                    % \EndWhile
                % \EndFor \label{algline:static_loop_end}
            \EndWhile \label{algline:static_loop_end}
            % \ForAll {$u \in V$}    \label{line:main_loop}  \Comment{main loop} \label{algline:static_loop_start}
            % \For {$i \gets 1, \dots, b$}    \label{line:b_u_loop}  \Comment{find up to $b$-suitors}
            % \State $x \gets \underset{v \in N(u) \setminus S(u)}{\argmax} \{ \omega(u,v) : \omega(u,v) > \omega(v, S(v).min) \}$  \label{alg_line:static_b_suitor_pot_suitor}
            % \If{$x = \texttt{None}$}
            % \State \textbf{break} 
            % \EndIf
            % \State $S(u).insert(x)$
            % \State  \textsc{makeSuitor}($u$, $x$) \label{algline:static_make_suitor_call}
            % \EndFor
            % \EndFor
            \State $M \gets \{\{u,v \} \in E \text{ s.t } v \in S(u) \wedge u \in S(v)\}$ \label{algline:static_matching_final}
            \EndProcedure
        \end{scriptsize}
    \end{algorithmic}
    \caption{Sequential $b$-Suitor algorithm}
    \label{alg:static_suitor_main}
\end{algorithm}

\drop{
As an example, a $2$-matching $M_2$ is a subset of $E$
such that no vertex $v \in V$ is incident to more than two edges in $M_2$
(see Figure~\ref{fig:b_matching_toy}).

\begin{figure}
    \centering
    \begin{tikzpicture}[scale=0.6]
        \tikzstyle{node}=[draw,circle,inner sep=1pt]
        \node[node] (v0) at (0,1) {\tiny $v_0$};
        \node[node] (v1) at (1.5,2) {\tiny $v_1$};
        \node[node] (v2) at (1.5,0) {\tiny $v_2$};
        \node[node] (v3) at (3.5,2) {\tiny $v_3$};
        \node[node] (v4) at (5.5,2) {\tiny $v_4$};
        \node[node] (v5) at (5.5,0) {\tiny $v_5$};

        \path[thick, black] (v4) edge node[font=\tiny, right, black] {6} (v5);
        \path[thick, black] (v4) edge node[font=\tiny, above, black] {9} (v3);
        \path[thick, black] (v5) edge node[font=\tiny, below, black] {4} (v3);
        \path[thick, black] (v3) edge node[font=\tiny, above, black] {1} (v1);
        \path[thick, black] (v3) edge node[font=\tiny, below, black] {2} (v2);
        \path[thick, black] (v2) edge node[font=\tiny, left, black] {1} (v1);
        \path[thick, black] (v1) edge node[font=\tiny, above, black] {1} (v0);
        \path[thick, black] (v2) edge node[font=\tiny, below, black] {1} (v0);

        \node[node] (v0m) at (7.5,1) {\tiny $v_0$};
        \node[node] (v1m) at (9,2) {\tiny $v_1$};
        \node[node] (v2m) at (9,0) {\tiny $v_2$};
        \node[node] (v3m) at (11,2) {\tiny $v_3$};
        \node[node] (v4m) at (13,2) {\tiny $v_4$};
        \node[node] (v5m) at (13,0) {\tiny $v_5$};

        \path[ultra thick, black, blue] (v4m) edge node[font=\tiny, right, black] {6} (v5m);
        \path[ultra thick, black, blue] (v4m) edge node[font=\tiny, above, black] {9} (v3m);
        \path[ultra thick, black, blue] (v5m) edge node[font=\tiny, below, black] {4} (v3m);
        \path[thick, black] (v3m) edge node[font=\tiny, above, black] {1} (v1m);
        \path[thick, black] (v3m) edge node[font=\tiny, below, black] {2} (v2m);
        \path[ultra thick, black, blue] (v2m) edge node[font=\tiny, left, black] {1} (v1m);
        \path[ultra thick, black, blue] (v1m) edge node[font=\tiny, above, black] {1} (v0m);
        \path[ultra thick, black, blue] (v2m) edge node[font=\tiny, below, black] {1} (v0m);

    \end{tikzpicture}
    \caption{Given a graph $G=(V,E,w)$ with six nodes eight weighted edges (left) and the corresponding $2$-matching (right).
        The weight of the matching $\omega(M)=22$ (\textcolor{blue}{blue} edges) is also a maximum $2$-matching for the given graph.}
    \label{fig:b_matching_toy}
\end{figure}
}

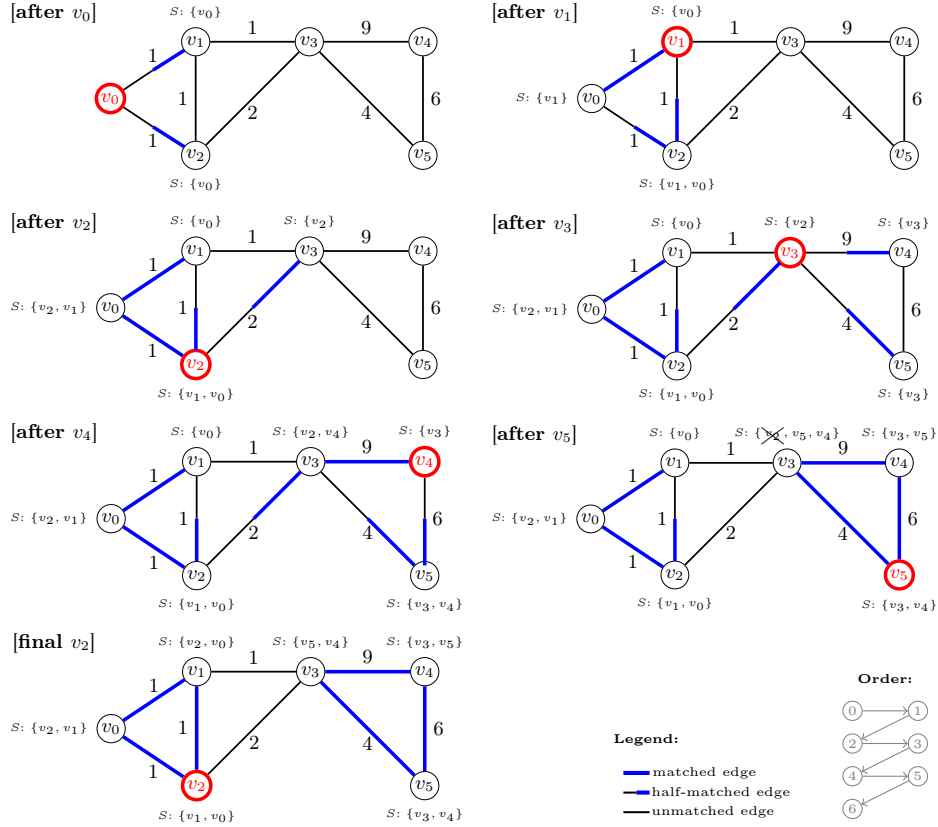
\begin{figure}
    \begin{center}
        \begin{tabular}{ll}
            \resizebox{\staticSize}{!}{\begin{tikzpicture}[scale=0.9]
    \tikzstyle{node}=[draw,circle,inner sep=1pt]
    \tikzstyle{active}=[draw,circle,inner sep=1pt,ultra thick,red]
    \tikzstyle{legend}=[minimum width=2.5cm,minimum height=1cm]

    % %Initial
    % \node[] (initial) at (-1,0.5) {\small{\textbf{initial}}};
    % \node[node] (v0) at (0,-1) {$v_0$};
    % \node[node] (v1) at (1.5,0) {$v_1$};
    % \node[node] (v2) at (1.5,-2) {$v_2$};
    % \node[node] (v3) at (3.5,0) {$v_3$};
    % \node[node] (v4) at (5.5,0) {$v_4$};
    % \node[node] (v5) at (5.5,-2) {$v_5$};

    % \path[thick, black] (v4) edge node[font=\small, left, black] {6} (v5);
    % \path[thick, black] (v4) edge node[font=\small, above, black] {9} (v3);
    % \path[thick, black] (v5) edge node[font=\small, above, black] {4} (v3);
    % \path[thick, black] (v3) edge node[font=\small, above, black] {1} (v1);
    % \path[thick, black] (v3) edge node[font=\small, above, black] {2} (v2);
    % \path[thick, black] (v2) edge node[font=\small, left, black] {1} (v1);
    % \path[thick, black] (v1) edge node[font=\small, above, black] {1} (v0);
    % \path[thick, black] (v2) edge node[font=\small, above, black] {1} (v0);

    %Step 0
    \node[] (step1) at (-1,0.5) {\small{[\textbf{after $v_0$}]}};
    % \node[node] (step1_node) at (0.9,1.25) {$v_0$};
    \node[active] (v01) at (0,-1) {$v_0$};
    % \node[left= of v01, xshift=3em] {\tiny{$T$: $\{v_1, v_2\}$}};
    \node[node] (v11) at (1.5,0) {$v_1$};
    \node[above= of v11, yshift=-3em] {\tiny{$S$: $\{v_0\}$}};
    \node[node] (v21) at (1.5,-2) {$v_2$};
    \node[below= of v21, yshift=3em] {\tiny{$S$: $\{v_0\}$}};
    \node[node] (v31) at (3.5,0) {$v_3$};
    \node[node] (v41) at (5.5,0) {$v_4$};
    \node[node] (v51) at (5.5,-2) {$v_5$};

    \draw[thick, black] (v41) edge node[font=\small, right, black] {6} (v51);
    \path[thick, black] (v41) edge node[font=\small, above, black] {9} (v31);
    \path[thick, black] (v51) edge node[font=\small, below, black] {4} (v31);
    \path[thick, black] (v31) edge node[font=\small, above, black] {1} (v11);
    \path[thick, black] (v31) edge node[font=\small, below, black] {2} (v21);
    \path[thick, black] (v21) edge node[font=\small, left, black] {1} (v11);
    \draw[thick, black] (v01) -- ++(0.75,0.5) edge[ultra thick, blue] ++(0.55,0.35) node[font=\small, above, black] {1} (v11);
    \draw[thick, black] (v01) -- ++(0.75,-0.5) edge[ultra thick, blue] ++(0.55,-0.35) node[font=\small, below, black] {1} (v21);

\end{tikzpicture}} & \resizebox{\staticSize}{!}{\begin{tikzpicture}[scale=0.9]
    \tikzstyle{node}=[draw,circle,inner sep=1pt]
    \tikzstyle{active}=[draw,circle,inner sep=1pt,ultra thick,red]
    \tikzstyle{legend}=[minimum width=2.5cm,minimum height=1cm]

    \node[] (step1) at (-1,-4) {\small{[\textbf{after $v_1$}]}};
    %\node[node] (step2_node) at (2.1,-3.25) {$v_1$};
    \node[node] (v02) at (0,-5.5) {$v_0$};
    \node[left= of v02, xshift=3em, align=left] {\tiny{$S$: $\{v_1\}$}};
    \node[active] (v12) at (1.5,-4.5) {$v_1$};
    \node[above= of v12, yshift=-3em, align=left] {\tiny{$S$: $\{v_0\}$}};
    \node[node] (v22) at (1.5,-6.5) {$v_2$};
    \node[below= of v22, yshift=3em] {\tiny{$S$: $\{v_1, v_0\}$}};
    \node[node] (v32) at (3.5,-4.5) {$v_3$};
    \node[node] (v42) at (5.5,-4.5) {$v_4$};
    \node[node] (v52) at (5.5,-6.5) {$v_5$};

    \draw[thick, black] (v42) edge node[font=\small, right, black] {6} (v52);
    \path[thick, black] (v42) edge node[font=\small, above, black] {9} (v32);
    \path[thick, black] (v52) edge node[font=\small, below, black] {4} (v32);
    \path[thick, black] (v32) edge node[font=\small, above, black] {1} (v12);
    \path[thick, black] (v32) edge node[font=\small, below, black] {2} (v22);
    \draw[thick, black] (v12) -- ++(0.0,-1.0) edge[ultra thick, blue] ++(0.0,-0.75) node[font=\small, left, black] {1} (v22);
    \draw[ultra thick, blue] (v02) edge node[font=\small, above, black] {1} (v12);
    \draw[thick, black] (v02) -- ++(0.75,-0.5) edge[ultra thick, blue] ++(0.55,-0.35) node[font=\small, below, black] {1} (v22);
\end{tikzpicture}} \\
            \resizebox{\staticSize}{!}{\begin{tikzpicture}[scale=0.9]
    \tikzstyle{node}=[draw,circle,inner sep=1pt]
    \tikzstyle{active}=[draw,circle,inner sep=1pt,ultra thick,red]
    \tikzstyle{legend}=[minimum width=2.5cm,minimum height=1cm]

    \node[] (step3) at (-1,-8.75) {\small{[\textbf{after $v_2$}]}};
    % \node[node] (step3_node) at (2.1,-8) {$v_2$};
    \node[node] (v03) at (0,-10.25) {$v_0$};
    \node[left= of v03, xshift=3em, align=left] {\tiny{$S$: $\{v_2,v_1\}$}};
    \node[node] (v13) at (1.5,-9.25) {$v_1$};
    \node[above= of v13, yshift=-3em, align=left] {\tiny{$S$: $\{v_0\}$}};
    \node[active] (v23) at (1.5,-11.25) {$v_2$};
    \node[below= of v23, yshift=3em, align=left] {\tiny{$S$: $\{v_1, v_0\}$}};
    \node[node] (v33) at (3.5,-9.25) {$v_3$};
    \node[above= of v33, yshift=-3em, align=left] {\tiny{$S$: $\{v_2\}$}};
    \node[node] (v43) at (5.5,-9.25) {$v_4$};
    \node[node] (v53) at (5.5,-11.25) {$v_5$};

    \draw[thick, black] (v43) edge node[font=\small, right, black] {6} (v53);
    \path[thick, black] (v43) edge node[font=\small, above, black] {9} (v33);
    \path[thick, black] (v53) edge node[font=\small, below, black] {4} (v33);
    \path[thick, black] (v33) edge node[font=\small, above, black] {1} (v13);
    \draw[thick, black] (v23) -- ++(1,1) edge[ultra thick, blue] ++(0.82,0.82) node[font=\small, below, black] {2} (v33);
    \draw[thick, black] (v13) -- ++(0.0,-1.0) edge[ultra thick, blue] ++(0.0,-0.75) node[font=\small, left, black] {1} (v23);
    \draw[ultra thick, blue] (v03) edge node[font=\small, above, black] {1} (v13);
    \draw[ultra thick, blue] (v03) edge node[font=\small, below, black] {1} (v23);
\end{tikzpicture}} & \resizebox{6.1cm}{!}{\begin{tikzpicture}[scale=0.9]
    \tikzstyle{node}=[draw,circle,inner sep=1pt]
    \tikzstyle{active}=[draw,circle,inner sep=1pt,ultra thick,red]
    \tikzstyle{legend}=[minimum width=2.5cm,minimum height=1cm]

    \node[] (step1) at (6.5,0.5) {\small[{\textbf{after $v_3$}]}};
    \node[node] (v04) at (7.5,-1) {$v_0$};
    \node[left= of v04, xshift=3em, align=left] {\tiny{$S$: $\{v_2,v_1\}$}};
    \node[node] (v14) at (9,0) {$v_1$};
    \node[above= of v14, yshift=-3em, align=left] {\tiny{$S$: $\{v_0\}$}};
    \node[node] (v24) at (9,-2) {$v_2$};
    \node[below= of v24, yshift=3em, align=left] {\tiny{$S$: $\{v_1, v_0\}$}};
    \node[active] (v34) at (11,0) {$v_3$};
    \node[above= of v34, yshift=-3em, align=left]  {\tiny{$S$: $\{v_2\}$}};
    \node[node] (v44) at (13,0) {$v_4$};
    \node[above= of v44, yshift=-3em, align=left] {\tiny{$S$: $\{v_3\}$}};
    \node[node] (v54) at (13,-2) {$v_5$};
    \node[below= of v54, yshift=3em, align=left] {\tiny{$S$: $\{v_3\}$}};

    \draw[thick, black] (v44) edge node[font=\small, right, black] {6} (v54);
    \draw[thick, black] (v34) -- ++(1,0) edge[ultra thick, blue] ++(0.75,0) node[font=\small, above, black] {9} (v44);
    \draw[thick, black] (v34) -- ++(1,-1) edge[ultra thick, blue] ++(0.82,-0.82) node[font=\small, below, black] {4} (v54);
    \path[thick, black] (v34) edge node[font=\small, above, black] {1} (v14);
    \draw[thick, black] (v24) -- ++(1,1) edge[ultra thick, blue] ++(0.82,0.82) node[font=\small, below, black] {2} (v34);
    \draw[thick, black] (v14) -- ++(0.0,-1.0) edge[ultra thick, blue] ++(0.0,-0.75) node[font=\small, left, black] {1} (v24);
    \draw[ultra thick, blue] (v04) edge node[font=\small, above, black] {1} (v14);
    \draw[ultra thick, blue] (v04) edge node[font=\small, below, black] {1} (v24);

\end{tikzpicture}} \\
            \resizebox{6.3cm}{!}{\begin{tikzpicture}[scale=0.9]
    \tikzstyle{node}=[draw,circle,inner sep=1pt]
    \tikzstyle{active}=[draw,circle,inner sep=1pt,ultra thick,red]
    \tikzstyle{legend}=[minimum width=2.5cm,minimum height=1cm]

    \node[] (step5) at (6.5,-4.0) {\small{[\textbf{after $v_4$}]}};
    \node[node] (v05) at (7.5,-5.5) {$v_0$};
    \node[left= of v05, xshift=3em, align=left] {\tiny{$S$: $\{v_2,v_1\}$}};
    \node[node] (v15) at (9,-4.5) {$v_1$};
    \node[above= of v15, yshift=-3em, align=left] {\tiny{$S$: $\{v_0\}$}};
    \node[node] (v25) at (9,-6.5) {$v_2$};
    \node[below= of v25, yshift=3em, align=left] {\tiny{$S$: $\{v_1, v_0\}$}};
    \node[node] (v35) at (11,-4.5) {$v_3$};
    \node[above= of v35, yshift=-3em, align=left]  {\tiny{$S$: $\{v_2, v_4\}$}};
    \node[active] (v45) at (13,-4.5) {$v_4$};
    \node[above= of v45, yshift=-3em, align=left]  {\tiny{$S$: $\{v_3\}$}};
    \node[node] (v55) at (13,-6.5) {$v_5$};
    \node[below= of v55, yshift=3em, align=left] {\tiny{$S$: $\{v_3, v_4\}$}};

    \draw[thick, black] (v45) -- ++(0,-1) edge[ultra thick, blue] ++(0.0,-0.82) node[font=\small, right, black] {6} (v55);
    \draw[ultra thick, blue] (v35) edge node[font=\small, above, black] {9} (v45);
    \draw[thick, black] (v35) -- ++(1,-1) edge[ultra thick, blue] ++(0.82,-0.82) node[font=\small, below, black] {4} (v55);
    \path[thick, black] (v35) edge node[font=\small, above, black] {1} (v15);
    \draw[thick, black] (v25) -- ++(1,1) edge[ultra thick, blue] ++(0.82,0.82) node[font=\small, below, black] {2} (v35);
    \draw[thick, black] (v15) -- ++(0.0,-1.0) edge[ultra thick, blue] ++(0.0,-0.75) node[font=\small, left, black] {1} (v25);
    \draw[ultra thick, blue] (v05) edge node[font=\small, above, black] {1} (v15);
    \draw[ultra thick, blue] (v05) edge node[font=\small, below, black] {1} (v25);
\end{tikzpicture}} & \resizebox{6.2cm}{!}{\begin{tikzpicture}[scale=0.9]
    \tikzstyle{node}=[draw,circle,inner sep=1pt]
    \tikzstyle{active}=[draw,circle,inner sep=1pt,ultra thick,red]
    \tikzstyle{legend}=[minimum width=2.5cm,minimum height=1cm]

    \node[] (step5) at (6.5,-8.75) {\small{[\textbf{after $v_5$}]}};
    \node[node] (v06) at (7.5,-10.25) {$v_0$};
    \node[left= of v06, xshift=3em, align=left] {\tiny{$S$: $\{v_2,v_1\}$}};
    \node[node] (v16) at (9,-9.25) {$v_1$};
    \node[above= of v16, yshift=-3em, align=left] {\tiny{$S$: $\{v_0\}$}};
    \node[node] (v26) at (9,-11.25) {$v_2$};
    \node[below= of v26, yshift=3em, align=left] {\tiny{$S$: $\{v_1, v_0\}$}};
    \node[node] (v36) at (11,-9.25) {$v_3$};
    \node[above= of v36, yshift=-3em, align=left] {\tiny{$S$: $\{\xcancel{v_2}, v_5, v_4\}$}};
    \node[node] (v46) at (13,-9.25) {$v_4$};
    \node[above= of v46, yshift=-3em, align=left] {\tiny{$S$: $\{v_3,v_5\}$}};
    \node[active] (v56) at (13,-11.25) {$v_5$};
    \node[below= of v56, yshift=3em, align=left] {\tiny{$S$: $\{v_3,v_4\}$}};

    \draw[ultra thick, blue] (v46) edge node[font=\small, right, black] {6} (v56);
    \draw[ultra thick, blue] (v36) edge node[font=\small, above, black] {9} (v46);
    \draw[ultra thick, blue] (v36) edge node[font=\small, below, black] {4} (v56);
    \path[thick, black] (v36) edge node[font=\small, above, black] {1} (v16);
    \path[thick, black] (v26) edge node[font=\small, below, black] {2} (v36);
    \draw[thick, black] (v16) -- ++(0.0,-1.0) edge[ultra thick, blue] ++(0.0,-0.75) node[font=\small, left, black] {1} (v26);
    \draw[ultra thick, blue] (v06) edge node[font=\small, above, black] {1} (v16);
    \draw[ultra thick, blue] (v06) edge node[font=\small, below, black] {1} (v26);
\end{tikzpicture}}       \\
            \resizebox{6.3cm}{!}{\begin{tikzpicture}[scale=0.9]
    \tikzstyle{node}=[draw,circle,inner sep=1pt]
    \tikzstyle{active}=[draw,circle,inner sep=1pt,ultra thick,red]
    \tikzstyle{legend}=[minimum width=2.5cm,minimum height=1cm]

    \node[] (step5) at (2.25,-13.5) {\small[{\textbf{final $v_2$}]}};
    \node[node] (v07) at (3.25,-15) {$v_0$};
    \node[left= of v07, xshift=3em, align=left] {\tiny{$S$: $\{v_2, v_1\}$}};
    \node[node] (v17) at (4.75,-14) {$v_1$};
    \node[above= of v17, yshift=-3em, align=left] {\tiny{$S$: $\{v_2, v_0\}$}};
    \node[active] (v27) at (4.75,-16) {$v_2$};
    \node[below= of v27, yshift=3em, align=left] {\tiny{$S$: $\{v_1, v_0\}$}};
    \node[node] (v37) at (6.75,-14) {$v_3$};
    \node[above= of v37, yshift=-3em, align=left] {\tiny{$S$: $\{v_5, v_4\}$}};
    \node[node] (v47) at (8.75,-14) {$v_4$};
    \node[above= of v47, yshift=-3em, align=left] {\tiny{$S$: $\{v_3, v_5\}$}};
    \node[node] (v57) at (8.75,-16) {$v_5$};
    \node[below= of v57, yshift=3em, align=left] {\tiny{$S$: $\{v_3, v_4\}$}};

    \draw[ultra thick, blue] (v47) edge node[font=\small, right, black] {6} (v57);
    \draw[ultra thick, blue] (v37) edge node[font=\small, above, black] {9} (v47);
    \draw[ultra thick, blue] (v37) edge node[font=\small, below, black] {4} (v57);
    \path[thick, black] (v37) edge node[font=\small, above, black] {1} (v17);
    \path[thick, black] (v27) edge node[font=\small, below, black] {2} (v37);
    \draw[ultra thick, blue] (v17) edge node[font=\small, left, black] {1} (v27);
    \draw[ultra thick, blue] (v07) edge node[font=\small, above, black] {1} (v17);
    \draw[ultra thick, blue] (v07) edge node[font=\small, below, black] {1} (v27);
\end{tikzpicture}} & \resizebox{\staticSize}{!}{\begin{tikzpicture}[scale=0.9]
    \tikzstyle{node}=[draw,circle,inner sep=1pt,gray]
    \tikzstyle{active}=[draw,circle,inner sep=1pt,ultra thick,red]
    \tikzstyle{legend}=[minimum width=2.5cm,minimum height=1cm]

    \node (blank) at (-2,3) {};

    %Legend
    \node[legend] (legend) at (1,1) {};
    \node[above= of legend,yshift=-3em,xshift=-1.8em,align=left] {\tiny{\textbf{Legend:}}};
    \draw[thick, black] (0,0.7) edge node[font=\small, right, black] {\tiny{ unmatched edge}} (0.4,0.7);
    \draw[thick, black] (0,1.0) -- ++(0.2,0) edge[ultra thick, blue] ++(0.2,0.0) node[font=\small, right, black] {\tiny{ half-matched edge}} (0.4,1.0);
    \draw[ultra thick, blue] (0,1.3) edge node[font=\small, right, black] {\tiny{ matched edge}} (0.4,1.3);
    \node[] (order) at (4,2.75) {\tiny{\textbf{Order:}}};
    \node[node] (step1) at (3.5,2.25) {\tiny{0}};
    \node[node] (step2) at (4.5,2.25) {\tiny{1}};
    \node[node] (step3) at (3.5,1.75) {\tiny{2}};
    \node[node] (step4) at (4.5,1.75) {\tiny{3}};
    \node[node] (step5) at (3.5,1.25) {\tiny{4}};
    \node[node] (step6) at (4.5,1.25) {\tiny{5}};
    \node[node] (step7) at (3.5,0.75) {\tiny{6}};

    \draw [->,gray] (step1) edge (step2) (step2) edge (step3) (step3) edge (step4) (step4) edge (step5) (step5) edge (step6) (step6) edge (step7);

    % \node[node] (step4) at (4,1) {};
    % \node[node] (step1) at (1,1) {};
    % \node[node] (step1) at (0,0) {};
    % \node[node] (step1) at (0,1) {};

\end{tikzpicture}} \\
        \end{tabular}
    \end{center}
    \caption{Process of the sequential \staticb\ $(\frac{1}{2})$-approximation algorithm by Khan \etal\cite{khan2016efficient}.
    Each subfigure shows the state after one node is handled by Algorithm \ref{alg:static_suitor_main}.
    For [\textbf{after $v_0$}], $\dots$, [\textbf{after $v_4$}], only new suitors are added to their respective priority queues $S$. In step [\textbf{after $v_5$}], a previous suitor from $v_3$ is removed, leading to an additional recursive step [\textbf{final $v_2$}] for $v_2$ in order to find an alternative suitor.
    For this instance, the algorithm produces the maximum weight $2$-matching.}
        \label{fig:static_suitor}
\end{figure}

An example of how a $2$-matching is formed using Algorithm~\ref{alg:static_suitor_main} is shown in Figure~\ref{fig:static_suitor}.
Note that a \textcolor{blue}{blue} half-edge denotes that $u$ is part of $S(v)$ but not vice versa. A \textcolor{blue}{blue} edge denotes $v \in S(u) \Leftrightarrow u \in S(v)$.
During the matching of $v_5$, $v_3$ gets a better proposal $\omega(v_5, v_3) > \omega(v_3, v_2)$, therefore the min suitor $v_2$ is deleted from the suitor list of $v_3$.
After the final recursive step (finding a new possible suitor for $v_2$), the algorithm outputs a valid $b$-matching (highlighted in \textcolor{blue}{blue}).
After the termination of \staticb, the invariant $v \in S(u) \Leftrightarrow u \in S(v)$ holds.
(For \dynb\ this invariant is also maintained during the update phase of the algorithm.)

\newpage

\section{Proving that \dynb\ and \staticb\ Yield the Same Result}
\label{sub:appx-equality-proof}
\subsection{Edge Insertion}
\label{sub:appendix-edge-insertion}

Let us consider the case that a new edge $\{u,v\} \notin E$ is added to the graph.
This edge influences the matching only if its edge weight exceeds the edge weights $\omega\left(u,S^{(i)}(u).min\right)$ and $\omega\left(v,S^{(i)}(v).min\right)$ (note: any edge weight is better than \texttt{None}). This property was shown in Ref.~\cite{angriman2022batch} for $1$-matchings; we extend this result to $b$-matchings
with a completely new proof.

\begin{lemma}
    Let $\widetilde{G}=(V,\widetilde{E})=(V,E\cup\{e\})$ with $e=\{u,v\} \notin E$. Then: $e \in \widetilde{M} \iff \omega(u,v) > \max\{\omega(u,S(u).min), \omega(v,S(v).min)\}$.
    \label{lemma:new_edge_def_1_new}
\end{lemma}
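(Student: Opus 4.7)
I would prove the biconditional by comparing two executions of \staticb: the actual one on $G$ (whose output determines $S(u)$ and $S(v)$) and the one on $\widetilde{G}$ (whose output is $\widetilde{M}$). My two main tools will be the \emph{S-invariant} maintained by \staticb\ and the fact that, during any single execution of \staticb, the minimum weight of each S-queue is monotonically non-decreasing once the queue becomes saturated.

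$(\Leftarrow)$ Assume $\omega(u,v) > \omega(u, S(u).\min)$ and $\omega(u,v) > \omega(v, S(v).\min)$. My plan is to couple the two executions with a common scheduling of the priority queue $Q$. They proceed identically until the first step that involves the new edge, i.e., a pop of $u$ or $v$ in which $\widetilde{G}$ offers $\{u,v\}$ as a candidate. By the coupling up to that moment, the relevant state satisfies $\widetilde{S}^{cur}(v)=S^{cur}(v)$, and by the monotonicity of the queue min during the $G$-execution we have $\omega(v,\widetilde{S}^{cur}(v).\min)=\omega(v,S^{cur}(v).\min)\le \omega(v,S(v).\min)<\omega(u,v)$. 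Hence the acceptance test at $v$'s side succeeds (and symmetrically at $u$'s side), the proposal is made, and via the S-invariant both $v\in \widetilde{S}(u)$ and $u\in \widetilde{S}(v)$. I would then argue that $\{u,v\}$ is not dislodged later: any element $x$ that could displace $v$ from $\widetilde{S}(u)$ must satisfy $\omega(u,x)>\omega(u,v)>\omega(u,S(u).\min)$, so $x$ is strictly heavier than every element of $S(u)$ and would have driven $u$ out of $S(x)$ in the $G$-execution too; tracking this chain gives a contradiction with the assumption that $\{u,v\}\notin\widetilde{S}(u)$ at termination.

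$(\Rightarrow)$ Via the contrapositive, assume without loss of generality $\omega(u,v)\le \omega(u,S(u).\min)$; I want to show $\{u,v\}\notin \widetilde{M}$. Here my plan is to prove the stronger claim $\widetilde{S}(u)=S(u)$ at termination, from which $v\notin \widetilde{S}(u)$ and $\{u,v\}\notin \widetilde{M}$ follow via the S-invariant. Again I would couple the two executions. Whenever the $G$-execution pops a vertex $x\notin\{u,v\}$, the $\widetilde{G}$-execution behaves identically. When $u$ or $v$ is popped, the only difference is that the new edge $\{u,v\}$ appears among the candidates in $\widetilde{G}$. The weight assumption, combined with the non-decreasing nature of $S(u).\min$ during the run, implies that any proposal along $\{u,v\}$ in $\widetilde{G}$ is either rejected outright or, if accepted while $\widetilde{S}^{cur}(u)$ is not yet saturated with its final contents, is transient: the eventual arrival of the (heavier) candidates that populate the final $S(u)$ will displace $v$ again.

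The main technical obstacle is managing the transient case of $(\Rightarrow)$: a temporary insertion of $\{u,v\}$ into $\widetilde{S}^{cur}(u)$ triggers a cascade of kick-outs and re-proposals in $\widetilde{G}$ that has no counterpart in $G$. The crux of the proof is to show that this cascade terminates and that, at termination, the suitor lists of all vertices coincide with those produced by the $G$-execution. I would formalize this via an induction on the steps of the coupled execution, using as invariant that $\widetilde{S}^{cur}(y)\setminus\{v\}\subseteq S(y)$ (and the symmetric inclusion) for every $y$ on the affected update path, together with the monotonicity of the queue minima, so that once enough heavier candidates have been processed the transient membership of $v$ is fully undone.
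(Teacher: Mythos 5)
Your route is genuinely different from the paper's: you couple the executions of \staticb\ on $G$ and $\widetilde{G}$ under a common schedule and try to control the divergence by an invariant, whereas the paper argues directly about the $\widetilde{G}$-execution and derives contradictions from properties of the \emph{final} state $S(\cdot)$ of the $G$-run -- a three-way case analysis on whether $\omega(u,v)$ exceeds neither, one, or both of $\omega(u,S(u).min)$ and $\omega(v,S(v).min)$, plus a counting argument that evicting $u$ from $\widetilde{S}(v)$ would require $b$ neighbors of $v$ all heavier than $\omega(u,v)$, which is incompatible with $\omega(u,v)>\omega(v,S(v).min)$. The coupling idea is legitimate in principle (it implicitly uses the order-independence of the output of \staticb, which the paper also relies on), but as written your sketch has two concrete gaps.

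First, in the $(\Leftarrow)$ direction the step ``$\omega(u,x)>\omega(u,v)>\omega(u,S(u).min)$, so $x$ is strictly heavier than every element of $S(u)$'' is a non sequitur: the hypothesis only makes $x$ heavier than the \emph{minimum} of $S(u)$, and the follow-up claim that $x$ ``would have driven $u$ out of $S(x)$'' does not follow from it. The observation you actually need (and the one the paper uses) is that dislodging $v$ from $\widetilde{S}(u)$ requires $b$ distinct neighbors each heavier than $\omega(u,v)$, and a full set of $b$ such neighbors is incompatible with $S(u)$ having a minimum lighter than $\omega(u,v)$. Second, and more seriously, the invariant carrying your $(\Rightarrow)$ direction, $\widetilde{S}^{cur}(y)\setminus\{v\}\subseteq S(y)$ along the affected path, is not obviously true and is precisely the content of the hard case: when $v$ transiently enters $\widetilde{S}^{cur}(u)$ and evicts some $y$, that $y$ re-proposes and may be accepted by a vertex $z$ that is \emph{not} in its final list $S(y)$ from the $G$-run (it settles for a lighter partner it never needed in $G$), which breaks the stated inclusion. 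You name this as ``the main technical obstacle'' but do not resolve it, so the forward direction remains a plan rather than a proof; the paper instead argues that with $\omega(u,v)<\omega(u,S(u).min)$ either the argmax never returns the other endpoint (when $u$ is processed while already saturated) or the $b$ heavier final partners of $u$ eventually evict the transiently accepted $v$, which cannot re-enter. If you want to keep your stronger claim $\widetilde{S}(u)=S(u)$, a cleaner vehicle than the step-by-step coupling is the equivalence of \staticb\ with the greedy algorithm that processes edges in decreasing weight order: under that ordering $u$ is already saturated by heavier edges when $\{u,v\}$ is reached, so the greedy trace, and hence the output, is unchanged.
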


\begin{proof}%[Proof of Lemma \ref{lemma:new_edge_def_1_new}]
    \label{proof:lemma_edge_insertion_1}
    "$\leftarrow$": We assume $\omega(u,v) > \max\{\omega(u,S(u).min), \omega(v,S(v).min)\}~(\star)$.
    %Due to the \textit{S-invariant}, during the execution of Alg.~\ref{alg:dynamic_edge_insertion_new} (line~\ref{algline:edge_insertion_m_init}), node $u$ inserts its neighbor $v$ into $\widetilde{S}(u)$ and $v$ inserts $u$ into $\widetilde{S}(v)$.
    To prove our claim, we need to show that $\{u,v\} \in \widetilde{M}$, \ie the static Suitor algorithm inserts \{u,v\} into the original matching $M$.
    Thus, when iterating over $u$ (lines~\ref{algline:static_loop_start} - \ref{algline:static_loop_end} of Alg.~\ref{alg:static_suitor_main}) and due to $(\star)$, there must be an iteration \drop{$i \leq b$} in which $v$ is assigned to $p$ in line~\ref{alg_line:static_b_suitor_pot_suitor}. 
    In line~\ref{algline:static_insert_u_in_p}, $u$ is inserted into $\widetilde{S}(v)$ (since $p=v$).
    The same happens vice versa in another iteration for $v$, where $u$ is assigned to $p$ and $v$ is inserted in  $\widetilde{S}(u)$.
    Now that $v \in \widetilde{S}(u)$ and $u \in \widetilde{S}(v)$, we still have to show that they are never evicted.
    We consider $v$, the argument is symmetric for $u$.
    
    Let us assume for sake of contradiction that there are at least $b$ neighbors of $v$ that insert itself into $\widetilde{S}(v)$ and evict $u$.
    Let us denote this set by $C$. Then, for each $x \in C:~\omega(v,x)>\omega(v,u)$.
    Out of the elements in $C$, the top-$b$ would have been part of the original matching $M$, otherwise they would not be part of $\widetilde{S}(v)$. However, due to $(\star)$, $\omega(u,v)$ is larger than $b-1$ of the top-$b$ entries in $C$, leading to a contradiction concerning the eviction of $u$ of $\widetilde{S}(v)$.

    "$\rightarrow$": We assume $e=\{u,v\} \in \widetilde{M}$.
    There exist three cases for $\omega(u,v)$ with respect to $\omega(u,S(u).min)$ and $\omega(v,S(v).min)$: \textbf{(i)} $\omega(u,v)$ is smaller than both, \textbf{(ii)} $\omega(u,v)$ is smaller than one and larger than the other, \textbf{(iii)} $\omega(u,v)$ is larger than both.
    We show that cases \textbf{(i)} and \textbf{(ii)} are not possible if $\{u,v\} \in \widetilde{M}$.

    For case \textbf{(i)}, since their respective $min$-values cannot be \texttt{None}, $u$ and $v$ are both saturated in $M$.
    As a result, both edges $\{u, S(u).min\}$ and $\{v, S(v).min\}$ are part of the original matching $M$.
    If $\{u,v\} \in \widetilde{M}$, that means the two edges $\{u, S(u).min\}$ and $\{v, S(v).min\}$ are evicted from $\widetilde{M}$ due to saturation.
    But the static suitor algorithm would have chosen $\widetilde{S}(v).min$ over $u$ $\left[\widetilde{S}(u).min \text{ over } v\right]$ due to its higher weight, leading to a contradiction. 

    For case \textbf{(ii)}, we assume w.l.o.g. that $\omega(u,v) < \omega(u,S(u).min)$.
    If $u$ gets processed in Algorithm~\ref{alg:static_suitor_main} before $v$, then $u$ becomes saturated in this process due to $S(u).min$ being not \texttt{None}.
    Also, $u \notin \widetilde{S}(v)$ due to our assumption.
    While processing $v$, line~\ref{alg_line:static_b_suitor_pot_suitor} cannot return $u$, since $\{u,v\}$ is not a better proposal.
    If $v$ gets processed before $u$, then the new edge might be regarded, \ie $v \in \widetilde{S}(u)$.
    However, there exist at least $b$ other vertices which insert themselves to $\widetilde{S}(u)$, eventually evicting $v$.
    Finally, due to line~\ref{algline:static_add_y_to_q}, $v$ is inserted into $Q$ again.
    But then, line~\ref{alg_line:static_b_suitor_pot_suitor} cannot return $u$, since $\{u,v\}$ is not a better proposal.
    Thus, we have shown that edge $e=\{u,v\} \notin \widetilde{M}$ for \textbf{(i)} and \textbf{(ii)}. 
    
    Case \textbf{(iii)} is now the only remaining option for $e=\{u,v\}$. According to $(\star)$, $u$ is added to $\widetilde{S}(v)$ when $u$ is processed. 
    The same happens vice versa for $v$.
    Over the course of the main loop, there are at most $b-1$ other insertions into $\widetilde{S}(u)$ and $\widetilde{S}(v)$, not evicting $v$ (or $u$, respectively) again.
    Finally, in line~\ref{algline:static_matching_final}, $\widetilde{M}$ is set such that $e=\{u,v\}$ is included.
\qed
\end{proof}

\begin{algorithm}[tb]
    \begin{algorithmic}[1]
        \begin{scriptsize}
            \Procedure{EdgeInsertion}{$\widetilde{G}$, $\{u, v\}$}
            \State \textbf{Input:} Graph $\widetilde{G}=(V,\widetilde{E})$, an edge  $\{u, v\} \notin E$
            \State \textbf{Output:} New $(1/2)$-approximate b-Matching $M^{(f)}$
            \vspace{0.2cm}
            \State $M^{(i)} \gets \{\{u^\prime,v^\prime \} \in E \text{ s.t } v^\prime \in S(u^\prime) \wedge u^\prime \in S(v^\prime)\}$ \label{algline:edge_insertion_m_init} 
            \If{$\omega(u,v) > \max\{\omega(u,S(u).min), \omega(v,S(v).min)\}$} %\Comment{Lemma \ref{lemma:new_edge_def_1_new}}
            \State $startU \gets S^{(i)}(u).insert(v)$ \Comment{Update suitor queue of $u$, returns replaced item} \label{algline:push_uv}
            \State $startV \gets S^{(i)}(v).insert(u)$ \label{algline:push_vu}
            \If{ $startU \neq \texttt{None}$}
            \State $S^{(i)}(startU).remove(u)$ \Comment{Maintain \textit{S-invariant}} \label{algline:edge_insertion_s_invariant}
            \State $\textsc{FindAffected}(startU)$ \Comment{Check for new suitors.}
            \EndIf
            \If{ $startV \neq \texttt{None}$}
            \State $S^{(i)}(startV).remove(v)$
            \State $\textsc{FindAffected}(startV)$
            \EndIf
            \State $M^{(i)} \gets \{\{u^\prime,v^\prime \} \in \widetilde{E} \text{ s.t } v^\prime \in S^{(i)}(u^\prime) \wedge u^\prime \in S^{(i)}(v^\prime)\}$ \label{algline:edge_insertion_mi}
            \EndIf
            \State $M^{(f)} \gets M^{(i)}$
            \EndProcedure
        \end{scriptsize}
    \end{algorithmic}
    \caption{Dynamic Edge Insertion}
    \label{alg:dynamic_edge_insertion_new}
\end{algorithm}

Using Lemma \ref{lemma:new_edge_def_1_new}, we design Algorithm \ref{alg:dynamic_edge_insertion_new} for the already altered graph $\widetilde{G}$. If according to Lemma \ref{lemma:new_edge_def_1_new} the new edge creates a better suitor for $u$ and $v$, the suitor lists of both endpoints $u$ and $v$ have to be updated.
If in this process a previous suitor is removed from $S^{(i)}(u)$ $[S^{(i)}(v)]$ (line~\ref{algline:push_uv}), the {S-invariant} is directly maintained by removing $u$ $[v]$ from the respective suitor lists.
In addition, we call \findaff\ for the removed nodes, as they might be able to find new partners (\ie they are {locally affected} according to Def.~\ref{def:affected}). We refer to these nodes as $startU$ and $startV$. By calling \findaff\ twice (once for $startU$, once for $startV$), we identify and update all \laff\ vertices influenced by inserting $e=\{u,v\}$. For proving the properties of our \dynb\ algorithm, it is useful to define the notion of an \uppath.

\begin{definition}
The vertices visited and updated during the execution of \\ \findaff\ form a path which we call \uppath. 
An update path $P_x$ (of length $q \geq 0$) starts with vertex $p^{(x)}_0 = x$ and ends with some vertex $p^{(x)}_q$.
\label{definition:update_path}
\end{definition}

If $q=0$, then $P_x$ consists only of $p^{(x)}_0$ and we refer to it as an \textit{empty} \uppath.
We denote every node in the path as $p^{(x)}_j$, $0 \leq j \leq q$, with $j$ being its position in the path.
Each node pair $(p^{(x)}_j, p^{(x)}_{j+1})$ relates to an edge $e^{(x)}_j=\{p^{(x)}_j, p^{(x)}_{j+1}\}$ which is also present in $\widetilde{G}$.

When calling \findaff, we process an update path, changing intermediate suitor lists $S^{(i)}$ 
of all nodes on the (non-empty) path.
We continue by proving several properties of an \uppath\ as they arise from the program flow of \findaff. The following lemma shows that the edge weights decrease when traversing an update path from its start to its end.

\begin{lemma}
    Let $P_x$ be an update path traversed by \findaff. For two edges $e_i \neq e_j \in P_x$, it holds that
    $\omega(e_i) < \omega(e_j)$ [$\omega(e_i) > \omega(e_j)$] if $i>j$ [$i<j$]. 
    \label{lemma:path_property_weight}
\end{lemma}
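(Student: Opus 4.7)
The plan is to show the strictly decreasing property for consecutive edges on $P_x$ and then invoke transitivity. Concretely, I claim that for every $0 \le j \le q-2$ we have $\omega(e^{(x)}_j) > \omega(e^{(x)}_{j+1})$, from which the full statement follows by a trivial induction on $|i-j|$.

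To prove the consecutive case, I would trace through one iteration of the repeat-loop (lines~\ref{algline:findaffected_main_loop_start}--\ref{algline:findaffected_main_loop_end}) of \textsc{FindAffected} starting with $cu = p^{(x)}_j$. By construction of the update path (Def.~\ref{definition:update_path}), the candidate selected in line~\ref{algline:check_def_affected} of this iteration is exactly $ca = p^{(x)}_{j+1}$. Definition~\ref{def:affected}, which the selection in line~\ref{algline:check_def_affected} enforces, then yields
\[
\omega\bigl(p^{(x)}_j,\, p^{(x)}_{j+1}\bigr) \;>\; \omega\bigl(p^{(x)}_{j+1},\, S^{(i)}(p^{(x)}_{j+1}).\mathrm{min}\bigr),
\]
where $S^{(i)}(p^{(x)}_{j+1})$ refers to the state of that queue immediately before the insertion on line~\ref{algline:s-invariant2}. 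The next step is to identify $S^{(i)}(p^{(x)}_{j+1}).\mathrm{min}$ with $p^{(x)}_{j+2}$. Because the path is extended only when the insertion in line~\ref{algline:s-invariant2} returns a non-\texttt{None} value $prevCa$, and $cu$ is reassigned to $prevCa$ on line~\ref{algline:prevca_cu}, we obtain $p^{(x)}_{j+2} = prevCa = S^{(i)}(p^{(x)}_{j+1}).\mathrm{min}$ (taken before insertion). Substituting gives $\omega(e^{(x)}_j) > \omega(e^{(x)}_{j+1})$, as desired.

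Extending to arbitrary $i<j$ is then a one-line induction: $\omega(e^{(x)}_i) > \omega(e^{(x)}_{i+1}) > \dots > \omega(e^{(x)}_j)$. The symmetric case $i>j$ follows by the same chain read in reverse.

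The main obstacle I foresee is bookkeeping: making sure that the value of $S^{(i)}(p^{(x)}_{j+1}).\mathrm{min}$ used in Definition~\ref{def:affected} really coincides with the queue entry that is evicted and re-labelled $prevCa$ in the very same loop iteration. This requires verifying that no intervening modification touches $S^{(i)}(p^{(x)}_{j+1})$ between lines~\ref{algline:check_def_affected} and~\ref{algline:s-invariant2}, which is immediate from the code but worth stating explicitly. The special-case branch producing a $looseEnd$ (lines~\ref{algline:special}--\ref{algline:special_end}) does not affect the argument since the eventual recursive call in line~\ref{algline:findaffected_loose_end} launches a \emph{new} update path, to which the lemma then applies independently.
\qed
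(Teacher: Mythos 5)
There is a genuine gap, and it lies in the step you yourself flag as ``bookkeeping'': the identification of every consecutive pair $(p^{(x)}_{j}, p^{(x)}_{j+1})$ with a $(cu, ca)$ pair produced by line~\ref{algline:check_def_affected}. One iteration of the repeat-loop appends \emph{two} nodes to the path, not one: $p^{(x)}_{j+1} = ca$ and $p^{(x)}_{j+2} = prevCa$ (the evicted minimum of $S^{(i)}(p^{(x)}_{j+1})$), and only $p^{(x)}_{j+2}$ is promoted to $cu$ for the next iteration. The nodes $p^{(x)}_1, p^{(x)}_3, \dots$ never play the role of $cu$. Hence the candidate condition of line~\ref{algline:check_def_affected} yields only the inequalities $\omega(e^{(x)}_{2k}) > \omega(e^{(x)}_{2k+1})$ (``the new proposal beats the evicted minimum''). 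The complementary inequalities $\omega(e^{(x)}_{2k+1}) > \omega(e^{(x)}_{2k+2})$ --- comparing the matched edge $\{p^{(x)}_{2k+1}, p^{(x)}_{2k+2}\}$ that $p^{(x)}_{2k+2}$ \emph{loses} with the new edge $\{p^{(x)}_{2k+2}, p^{(x)}_{2k+3}\}$ it subsequently \emph{gains} --- are not produced by your argument at all, and they are precisely the links without which your transitivity chain breaks (e.g.\ you cannot conclude $\omega(e^{(x)}_0) > \omega(e^{(x)}_2)$). These missing inequalities need a different justification: before the update reached $p^{(x)}_{2k+2}$, its configuration was locally stable (Eq.~(\ref{eq:static_suitor}) returned \texttt{None}), so a replacement partner reachable via an edge \emph{heavier} than the one just lost would already have been an eligible candidate earlier, contradicting the termination of \staticb\ on the prior state.

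The paper's proof follows the path iteration by iteration and states the two kinds of consecutive inequalities as separate steps of the induction ($\omega(e^{(x)}_1) < \omega(e^{(x)}_0)$ from the candidate condition, then $\omega(e^{(x)}_2) < \omega(e^{(x)}_1)$ for the re-matched node), which is structurally the right decomposition even though its justification of the second step is terse. To repair your proof, split the ``consecutive edges'' claim into the even and odd cases, keep your current argument for the even case, and supply the stability argument for the odd case; the transitivity step is then fine. Your remarks about the loose-end branch and the absence of intervening queue modifications are correct and unaffected.
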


\begin{proof}
    Let $P_x$ be an update path traversed by \findaff.
    The corresponding start of $P_x$ is $p^{(x)}_0 = x$ as given by Definition~\ref{definition:update_path}.
    The first visited node is assigned to $cu$ (line~\ref{algline:init-cu}). \findaff\ looks for a new potential partner $ca$ in its neighborhood (line~\ref{algline:check_def_affected}).
    If one is found, Definition~\ref{def:affected} is fulfilled ($x$ is a \laff\ vertex), leading to a suitor list update such that $ca \in S^{(i)}(x)$ and vice versa (lines~\ref{algline:s-invariant1} and~\ref{algline:s-invariant2}).
    We add $ca$ also to the path $P_x$, effectively setting $p^{(x)}_1 = ca$ and $e^{(x)}_0=\{x,p^{(x)}_1\}$.
    There exist two cases for $p^{(x)}_1$. If $p^{(x)}_1$ was \usat\ before adding $x$, our update path $P_x$ is finished.

    Since $e_i \neq e_j$, for our proof we assume that $P_x$ has more than one edge, \ie $p^{(x)}_1$ was \usat\ before adding $x$. 
    Assuming this situation, $x$ replaces the previous $min$-entry $prevCa$.
    This effectively adds $p^{(x)}_2= prevCa$ to $P_x$.
    According to Lemma \ref{lemma:new_edge_def_1_new} and the tie-breaking rule, $\omega(p^{(x)}_1,p^{(x)}_2) < \omega(x,p^{(x)}_1)$;
    otherwise $p^{(x)}_1$ would not have been chosen to be a potential suitor for~$x$.

    In line~\ref{algline:prevca_cu} we set $cu$ to $prevCa$, effectively repeating the main loop.
    Again, there are two cases for $p^{(x)}_2$: either it is \laff\ or not.
    In case it is not, \ie there is no potential new suitor in the neighborhood, the update path is not continued.
    If $p^{(x)}_2$ is \laff, we follow the same process as for $x$ (main loop in \findaff) and another $p^{(x)}_3$ to $P_x$ with again $\omega(p^{(x)}_2,p^{(x)}_3) < \omega(p^{(x)}_1,p^{(x)}_2)$.
    This process continues until either we find no new potential suitor $p^{(x)}_{j+1}$ for $p^{(x)}_j$ with $j$ even or $p^{(x)}_{j+1}$ is \usat. In both cases, the loop terminates.
    %Each successing node pair in $(v_j,v_{j+1})$ in the resulting update path $P_v$ either belongs to $M$ ($j$ is odd) or adds entries to $S^{(i)}(v_j)$ and $S^{(i)}(v_{j+1})$ ($j$ is even).
    
    We can continue the above argument inductively. Thus, given that we add the nodes to $P_x$ in the same order as updates occur, the edges formed by successive node pairs $(p^{(x)}_{j},p^{(x)}_{j+1}) \in P_x$ have decreasing weight, \ie $\omega(e_{j})=\omega(p^{(x)}_{j},p^{(x)}_{j+1}) < \omega(p^{(x)}_{j-1},p^{(x)}_{j})=\omega(e_{j-1})$.
    \qed
\end{proof}

\begin{lemma}
    Let $P_x$ be an update path traversed by \findaff. Then $P_x$ is simple.
    \label{lemma:path_property_matching_simple}
\end{lemma}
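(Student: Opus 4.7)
I proceed by contradiction. Assume $P_x$ has a repeated vertex and let $j$ be the smallest index with $p_j = p_i$ for some $i < j$, so that $p_0, p_1, \ldots, p_{j-1}$ are pairwise distinct. Setting $v := p_i = p_j$, the closed walk $p_i, p_{i+1}, \ldots, p_j$ forms a simple cycle of length $j - i$ in $\widetilde{G}$, whose edges $e_i, e_{i+1}, \ldots, e_{j-1}$ are, by the previous lemma, strictly decreasing in weight.

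The short-cycle cases are immediate from Lemma~\ref{lemma:path_property_weight}. If $j - i = 1$, then $e_i$ would be a self-loop, which is excluded since $\widetilde{G}$ is simple. If $j - i = 2$, then $e_i$ and $e_{i+1}$ denote the same undirected edge $\{p_i, p_{i+1}\}$ and hence share a weight, contradicting the strict inequality $\omega(e_i) > \omega(e_{i+1})$.

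For $j - i \geq 3$ I perform a case analysis on the parities of $i$ and $j$, which determine whether $v$ plays the role of $cu$ or of $ca$ at each of its two occurrences. In each case I track the modifications of $S^{(i)}(v)$ between step $i$ and step $j$: by the minimality of $j$, $v$ is not visited as $cu$ or $ca$ in any intermediate step, so $S^{(i)}(v)$ is altered only directly at step $i$ (insertion and possible eviction of a new suitor) and indirectly at step $j-1$ via the S-invariance maintenance in line~\ref{algline:prevca_s_invariant}. Combining this with the strictly decreasing chain $\omega(e_i) > \omega(e_{i+1}) > \cdots > \omega(e_{j-1})$ and the argmax selection in line~\ref{algline:check_def_affected}, the algorithmic action involving $p_{j+1}$ at step $j$ would violate either the argmax property or the eligibility condition $\omega(v, p_{j+1}) > \omega(p_{j+1}, S^{(i)}(p_{j+1}).min)$, yielding the desired contradiction.

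The main obstacle will be the bookkeeping in this case analysis: precisely identifying which elements reside in the affected suitor queues just before step $j$ and tracking how their minima evolve, so that the weight inequality that closes each case can be extracted cleanly. The short-cycle case $j - i = 2$ serves as the key prototype and already exhibits how the strict weight decrease from Lemma~\ref{lemma:path_property_weight} is the essential lever.
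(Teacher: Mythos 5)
Your overall strategy -- deriving a contradiction from the strictly decreasing edge weights of Lemma~\ref{lemma:path_property_weight} -- is exactly the lever the paper uses, and your treatment of the degenerate cases $j-i=1$ and $j-i=2$ is fine. But the proof has a genuine gap: the entire main case $j-i\geq 3$ is a plan rather than an argument. You announce a case analysis on the parities of $i$ and $j$, assert that ``the algorithmic action \ldots would violate either the argmax property or the eligibility condition,'' and then explicitly defer the bookkeeping that would establish this (``the main obstacle will be \ldots so that the weight inequality that closes each case can be extracted cleanly''). That weight inequality is the whole content of the lemma; without deriving it, nothing has been proved. There is also a targeting issue: the contradiction must come from the step that \emph{creates} the repetition, i.e., the selection or eviction that makes $p_j$ equal to $p_i$ (which happens while processing $e_{j-1}$), not from ``the algorithmic action involving $p_{j+1}$ at step $j$,'' which concerns what happens after the repetition already exists.

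The missing step can be closed in one observation, which is essentially what the paper's proof does. Whenever a vertex $w$ is touched on the path -- either as a candidate $p_{k+1}$ receiving the new suitor $p_k$, or as an evicted node $p_k$ that then acquires the new suitor $p_{k+1}$ -- the value $\omega\left(w, S^{(i)}(w).min\right)$ immediately afterwards is at least the weight of the path edge incident to $w$ at that step (in the first case the evicted old minimum had weight $\omega(e_{k+1})$, so the surviving minimum exceeds it; in the second case the new minimum is exactly $\omega(e_k)$). Any later attempt to re-enter $w$ at position $j>k$ must pass the eligibility test $\omega(e_j) > \omega\left(w, S^{(i)}(w).min\right)$ in line~\ref{algline:check_def_affected} of \findaff, but by Lemma~\ref{lemma:path_property_weight} we have $\omega(e_j) < \omega(e_{k+1})$ (respectively $\omega(e_j) < \omega(e_k)$), so the test fails and $w$ is never revisited. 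Until you carry out this comparison between the later edge's weight and the revisited vertex's current queue minimum, the proposal does not establish simplicity of $P_x$.
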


\begin{proof}
    % Based on the construction scheme as described in Lemma~\ref{lemma:edge_insertion_one_saturated}, the edges in $P_x$ have decreasing weight, \ie $\omega(x,p^{(u)}_1) > \omega(p^{(u)}_1,p^{(u)}_2)$. At the same time, $p^{(u)}_2=S^{(i)}(p^{(u)}_1).min$, otherwise $P_x$ would be finished with $p^{(u)}_1$.
    % Based on Proposition~\ref{proposition:valid_matching_static},\todo{Where? Lemma 1?} edge $e^{(u)}_1$ is therefore part of matching $M$. 
    % Due to the vice-versa removal in $S^{(i)}$, $M^{(i)}$ in line~\ref{algline:edge_insertion_mi} does not contain $e^{(u)}_1$ anymore. The next edge $e^{(u)}_2$ is representing an insertion into the suitor list of $p^{(u)}_2$, namely $p^{(u)}_3$.
    % $e^{(u)}_2$ was not in $M$, otherwise $p^{(u)}_3$ would have been the minimum entry in the suitor list of $p^{(u)}_2$.
    % Following the same arguments, every edge in $P_u$ is either representing an edge in $M^{(i)} \setminus M$ or in $M \setminus M^{(i)}$ (statement $(i)$). 
    Due to the decreasing weight of edges in the update path (Lemma~\ref{lemma:path_property_weight}), the weight of $e^{(x)}_{j+1}$ is smaller than the weight of an edge $e^{(x)}_{j}$, \ie $\omega(p^{(x)}_{j+1},p^{(x)}_{j+2}) < \omega(p^{(x)}_{j},p^{(x)}_{j+1})$. In other words, whenever a vertex $x_j$ loses its suitor $x_{prev}=S^{(i)}(x_j).min$, it holds: if it finds a new suitor $x_k$, then $\omega(x_j, x_k)$ is smaller than $\omega(x_j, x_{prev})$. As a result, $x_j$ cannot be added to the suitor list of a vertex already present in the update path, otherwise the corresponding edge weight would have been larger than the previous min entry of that node. Hence, the update path is simple.
    \qed 
\end{proof}

Due to the construction structure of \findaff, we know that the first edge on an update path represents an addition to $M^{(i)}$.
Together with the path properties from Lemma~\ref{lemma:path_property_matching_simple} and its proof, we can derive:

\begin{lemma}
    %hen the following statements hold: $(i)$ each edge in $P_x$ is alternately part of $M^{(i)}\setminus M$ or $M\setminus M^{(i)}$ and $(ii)$ 
    Let $P_x$ be a path processed when calling \findaff\ and let $M^{(i)} = \{\{u^\prime,v^\prime \} \in \widetilde{E} \text{ s.t } v^\prime \in S^{(i)}(u^\prime) \wedge u^\prime \in S^{(i)}(v^\prime)\}$. If $j$ is even [odd], the operation connected to an edge $e^{(x)}_j$ is an insertion [removal] of an element into [from] $M^{(i)}$.\label{lemma:even_odd}
\end{lemma}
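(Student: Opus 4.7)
The plan is to prove the lemma by induction on the edge index $j$ along the update path $P_x$, by tracing what \findaff\ does to the suitor queues in each iteration of its main loop and appealing to the S-invariant, which \dynb\ maintains throughout its execution. Throughout the argument, "$M^{(i)}$ contains $\{a,b\}$" is equivalent to the symmetric membership $a \in S^{(i)}(b) \wedge b \in S^{(i)}(a)$.

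For the base case $j=0$, when \findaff\ is invoked with $x = p_0^{(x)}$ and the path is non-empty, the first pass through the loop finds a candidate $ca = p_1^{(x)}$ in line \ref{algline:check_def_affected}. Lines \ref{algline:s-invariant1} and \ref{algline:s-invariant2} then insert $p_1^{(x)}$ into $S^{(i)}(p_0^{(x)})$ and $p_0^{(x)}$ into $S^{(i)}(p_1^{(x)})$, so $e_0^{(x)}$ now satisfies the defining condition for membership in $M^{(i)}$. Since $0$ is even, this is exactly the insertion claimed.

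For the inductive step I distinguish parities. If $j$ is odd, the inductive hypothesis states that $e_{j-1}^{(x)}$ was just inserted into $M^{(i)}$ through the current pair of insertions in lines \ref{algline:s-invariant1}--\ref{algline:s-invariant2} with $cu = p_{j-1}^{(x)}$ and $ca = p_j^{(x)}$. Because the path continues past $p_j^{(x)}$ (so the loop did not break in line \ref{algline:break} nor fall through with $prevCa = \texttt{None}$), line \ref{algline:s-invariant2} must have returned a non-\texttt{None} value $prevCa = p_{j+1}^{(x)}$. Since the S-invariant held immediately before this step, $p_{j+1}^{(x)} \in S^{(i)}(p_j^{(x)})$ implied $p_j^{(x)} \in S^{(i)}(p_{j+1}^{(x)})$, hence $e_j^{(x)} \in M^{(i)}$. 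Line \ref{algline:prevca_s_invariant} now removes $p_j^{(x)}$ from $S^{(i)}(p_{j+1}^{(x)})$, while the insertion on line \ref{algline:s-invariant2} had already displaced $p_{j+1}^{(x)}$ from $S^{(i)}(p_j^{(x)})$, so $e_j^{(x)}$ leaves $M^{(i)}$ — a removal, as required. If $j$ is even and $j \geq 2$, the inductive hypothesis says $e_{j-1}^{(x)}$ was a removal, which set $cu$ to $prevCa = p_j^{(x)}$ in line \ref{algline:prevca_cu} and looped back. Since $p_j^{(x)}$ extends the path, the search in line \ref{algline:check_def_affected} succeeds with $ca = p_{j+1}^{(x)}$, and lines \ref{algline:s-invariant1}--\ref{algline:s-invariant2} insert $e_j^{(x)}$ into $M^{(i)}$ exactly as in the base case.

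The main obstacle is bookkeeping: one must verify that at each step the precondition required for the next argument — in particular, that the S-invariant holds just before each edit — is actually in force, and that the identification of $prevCa$ with $p_{j+1}^{(x)}$ is consistent with how the path was defined in Lemma \ref{lemma:path_property_weight}. This is ensured because \dynb\ pairs every insertion into a suitor list with either a corresponding symmetric insertion (lines \ref{algline:s-invariant1}, \ref{algline:s-invariant2}) or a counterpart removal that restores the invariant (lines \ref{algline:special}--\ref{algline:special_end} and \ref{algline:prevca_s_invariant}). Once this alternation is confirmed at each iteration of the main loop, the parity claim follows immediately.
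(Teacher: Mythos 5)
Your proof is correct and follows essentially the same route as the paper's: the paper establishes the insertion for $e^{(x)}_0$ and the removal for $e^{(x)}_1$ from the loop body of \findaff\ and then states that the argument repeats inductively, which is exactly the induction you carry out explicitly (with the parity case split and the appeal to the S-invariant to certify that the displaced pair was symmetric, hence in $M^{(i)}$, before being removed). Your version is merely a more detailed write-up of the same argument.
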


\begin{proof}
    Based on the construction of $P_x$ in \findaff, the first edge represents the insertion of $p^{(x)}_{0}$ into $S^{(i)}(p^{(x)}_{1})$ and 
    vice versa. For $M^{(i)} = \{\{u^\prime,v^\prime \} \in \widetilde{E} \text{ s.t } v^\prime \in S^{(i)}(u^\prime) \wedge u^\prime \in S^{(i)}(v^\prime)\}$, the algorithm inserts $e^{(x)}_0$ into $M^{(i)}$. The next edge consists of $p^{(x)}_{1}$ and its previous minimum suitor before the update resulting from processing $e^{(x)}_0$. During the main loop of \findaff, $p^{(x)}_{1}$ is \drop{directly} removed from $S^{(i)}(p^{(x)}_{2})$ and vice versa. In order for the entries to be removed symmetrically, $e^{(x)}_1$ was contained in $M^{(i)}$ before the
    update and is removed afterwards.
    
    Due to the construction of $P_x$, the above argument about $e^{(x)}_0$ and $e^{(x)}_1$ can be repeated inductively for all following edges. Therefore each edge with an even index represents an insertion of an element in $M^{(i)}$ and each edge with an odd index represents a removal of an element from $M^{(i)}$.
    \qed
\end{proof}

In the following, we discuss in more detail under which circumstances our dynamic procedure \edgeins\ calls \findaff\ and therefore processes update paths.
Similar to Angriman \etal\cite[Sec. 3.1]{angriman2022batch}, we examine three distinct cases -- addressing the saturation state of the involved endpoints of the new edge $\{u,v\}$: \textbf{(i)} both $u$ and $v$ are {unsatured} (Lemma \ref{lemma:edge_insertion_unsaturated}), \textbf{(ii)} only one of them is {saturated} (Lemma \ref{lemma:edge_insertion_one_saturated}) and \textbf{(iii)} both are {saturated} (Lemma~\ref{lemma:edge_insertion_loose_end} and \ref{lemma:edge_insertion_both_saturated}).

\subsubsection{Case \textbf{(i)}: Both $u$ and $v$ are \usat} 

If both $u$ and $v$ are \usat, we can directly update the suitor lists of both endpoints. Also, since no other vertex is influenced by this update, \findaff\ is not called. This situation can be formalized as follows:

\begin{lemma}
    Let $e := \{u,v\} \notin E$ and and $\omega(u,v) > \max\{\omega(u,S(u).min),$ \\ $\omega(v,S(v).min)\}$. If both $u$ and $v$ are {unsatured}, then $M^{(f)} = \widetilde{M} = M^{(i)} \cup\{e\} = M\cup\{e\}$ and no other vertices are locally affected.
    \label{lemma:edge_insertion_unsaturated}
\end{lemma}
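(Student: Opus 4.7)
The plan is to prove the chain of equalities and the \laff\ claim in three steps: tracing \edgeinsParam{u}{v} on one side, and comparing \staticb\ on $G$ and $\widetilde{G}$ on the other.

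First I would trace Algorithm~\ref{alg:dynamic_edge_insertion_new} to establish $M^{(f)} = M \cup \{e\}$. The outer conditional is entered by hypothesis. Because both $u$ and $v$ are \usat, $|S(u)| < b(u)$ and $|S(v)| < b(v)$, so the \texttt{insert} calls in lines~\ref{algline:push_uv}--\ref{algline:push_vu} do not evict anything; hence $startU = startV = \none$, both inner conditionals are skipped, and \findaff\ is never invoked. The only modifications to the suitor structures are $v \in S^{(i)}(u)$ and $u \in S^{(i)}(v)$, so the recomputation of $M^{(i)}$ in line~\ref{algline:edge_insertion_mi} yields $M \cup \{e\}$, giving $M^{(f)} = M \cup \{e\}$. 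The intermediate $M^{(i)}$ in the sense of Table~\ref{tab:dynamic_notation} (evaluated immediately after the graph update to $\widetilde{E}$ but before any suitor modification) coincides edge-for-edge with $M$, since $S$ is unchanged at that point and $v \notin S(u)$ (as $e \notin E$); hence $M^{(i)} \cup \{e\} = M \cup \{e\}$.

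Next I would apply Lemma~\ref{lemma:new_edge_def_1_new} to conclude $e \in \widetilde{M}$, and then establish $\widetilde{M} = M \cup \{e\}$ by comparing \staticb\ on $G$ and on $\widetilde{G}$. The only structural difference between the two graphs is the edge $e$, incident only to $u$ and $v$. Since both endpoints are \usat\ in $M$, we have $|S(u)| < b(u)$ and $|S(v)| < b(v)$ at termination on $G$. In the run on $\widetilde{G}$, the only new candidate seen by $u$ (respectively $v$) is $v$ (respectively $u$), and by the hypothesis the eligibility condition $\omega(u,v) > \omega(v,S(v).min)$ (and symmetrically) is satisfied, so at some pop $u$ picks $v$ and at some pop $v$ picks $u$. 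Because $|S(u)|$ and $|S(v)|$ remain strictly below $b$-capacity, these insertions evict no element and nothing is re-pushed to $Q$ via line~\ref{algline:static_add_y_to_q} as a consequence of~$e$. Hence the sequence of pops and insertions into $S(x)$ for every $x \notin \{u,v\}$ is identical to the one in the run on $G$, so $\widetilde{S}(x) = S(x)$, while $\widetilde{S}(u) = S(u) \cup \{v\}$ and $\widetilde{S}(v) = S(v) \cup \{u\}$, yielding $\widetilde{M} = M \cup \{e\}$.

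Finally I would verify the \laff\ claim via Definition~\ref{def:affected}. For any $x \notin \{u,v\}$, $\widetilde{N}(x) = N(x)$ and $S^{(i)}(x) = S(x)$, so the argmax set coincides with the one at the end of \staticb\ on $G$, which returns \none; hence $x$ is not \laff. For $u$ (symmetrically for $v$), the update places $v$ into $S^{(i)}(u)$, so $v$ is excluded from $\widetilde{N}(u) \setminus S^{(i)}(u) = N(u) \setminus S(u)$; this set coincides with the one considered at the end of \staticb\ on $G$, where the argmax was \none\ because $u$ was \usat, so $u$ is not \laff\ either. The main obstacle I anticipate is the second step: formally ruling out cascading changes in \staticb\ on $\widetilde{G}$ relative to $G$. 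A fully rigorous version would likely require an inductive argument over the sequence of pops from $Q$, showing that $u$ and $v$ remain \usat\ throughout the execution on $\widetilde{G}$ and hence that the presence of $e$ triggers no additional eviction and no additional push to $Q$ anywhere else.
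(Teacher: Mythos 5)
Your proposal is correct and follows essentially the same route as the paper's proof: trace \edgeinsParam{u}{v} to see that no evictions occur and \findaff\ is never called, observe that therefore only $S(u)$ and $S(v)$ change, and conclude both that $M^{(f)} = M \cup \{e\}$ and that \staticb\ on $\widetilde{G}$ produces $\widetilde{M} = M \cup \{e\}$. Your version is somewhat more careful in separating the dynamic-algorithm trace from the comparison of the two static runs, and the informal step you flag at the end (ruling out cascading changes in the static run on $\widetilde{G}$) is asserted with the same level of rigor in the paper itself, so no additional gap is introduced.
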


\begin{proof}
    After the initial computation of $M$, for both $u$ and $v$ there is no better suitor $\argmaxA_{v \in N(u) \setminus S(u)} \{ \omega\left(u,v\right) \vert \omega\left(u,v\right) > \omega\left(v, S(v).min\right) \}$.
    Once edge $\{u,v\}$ is inserted, $u$ and $v$ become locally affected according to Definition \ref{def:affected} with ${s}(u) = v$ and vice versa.
    Since $S^{(i)}(u).min$ and $S^{(i)}(v).min$ were both $\texttt{None}$, pushing into the respective suitor lists does not remove the previous $min$-entry.
    The suitor lists of the remaining vertices $V\setminus \{u,v\}$ are therefore not influenced by the edge insertion. 
    As a result, all edges in $M$ are still part of $\widetilde{M}$, since the only change is the addition of $u$ in $S(v)$ and vice versa;
    in particular, we do not remove any items from any suitor lists.
    With Proposition~\ref{proposition:valid_matching_static}, it follows that $\widetilde{M} =  M\cup\{e\}$.

    The dynamic algorithm follows that logic by only changing the suitor lists of $u$ and $v$ in case $S^{(i)}(u).min$ and $S^{(i)}(v).min$ are $\texttt{None}$ before the update.
    That concludes the \textsc{EdgeInsertion} algorithm and the final suitor lists $S^{(f)}$ fulfill the {S-invariant}.
    $M^{(f)}$ contains $\{u,v\}$ and therefore also $\widetilde{M} = M^{(f)}$.
        \qed 
\end{proof}

As given by Lemma~\ref{lemma:edge_insertion_unsaturated}, $startU$ and $startV$ are \texttt{None} for case \textbf{(i)} and the resulting update paths are empty. Non-empty paths can only occur for case \textbf{(ii)} and \textbf{(iii)}, \ie if at least one endpoint is {saturated}.

\subsubsection{Case \textbf{(ii)}: One endpoint is \sat} 

If one endpoint of the new edge is already \sat, then an update to its suitor lists also influences the suitor list of one of its neighbors, \ie the previous minimum suitor. If this influenced vertex can find a new suitor, we know by Definition~\ref{def:affected} that it
    is \laff. We formalize this in the following lemma.

\begin{lemma}
    Let $e := \{u,v\} \notin E$ and $\omega(u,v) > \max\{\omega(u,S(u).min),$ \\ $\omega(v,S(v).min)\}$. If $u$ is {saturated} and $v$ is {unsaturated} $[$or vice versa$]$, then %all affected vertices 
    \textsc{EdgeInsertion} traverses an update path $P_u$ $\left[P_v\right]$.
    % with decreasing edge weights starting from $startU$ $\left[startV\right]$.
    % Each edge in $P_u$ $\left[P_v\right]$ is alternately part of $M^{(i)}\setminus M$ and $M\setminus M^{(i)}$, respectively.
    \label{lemma:edge_insertion_one_saturated}
\end{lemma}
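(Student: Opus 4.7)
The plan is to prove this by a direct trace of Algorithm~\ref{alg:dynamic_edge_insertion_new}, using the saturation hypotheses to pin down which branches are executed. First I would observe that by assumption $\omega(u,v) > \max\{\omega(u,S(u).min), \omega(v,S(v).min)\}$, so the if-branch of \textsc{EdgeInsertion} is entered and both \texttt{insert} operations on $S^{(i)}(u)$ and $S^{(i)}(v)$ take place. The real work is then to identify the return values $startU$ and $startV$ and thereby which of the two conditional calls to \textsc{FindAffected} actually fire.

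Next I would analyse the two \texttt{insert} calls. Because $u$ is saturated, $|S(u)| = b(u)$, and the assumed weight condition forces $\omega(u,v) > \omega(u, S(u).min)$, so pushing $v$ into $S^{(i)}(u)$ evicts the previous minimum; hence $startU = S(u).min \neq \texttt{None}$. Because $v$ is unsaturated, $|S(v)| < b(v)$, so pushing $u$ into $S^{(i)}(v)$ does not evict anything and $startV = \texttt{None}$. Consequently the algorithm enters the $startU$-branch, performs the \textit{S-invariant} repair $S^{(i)}(startU).remove(u)$, and calls $\textsc{FindAffected}(startU)$ exactly once; the $startV$-branch is skipped.

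Finally, I would invoke Definition~\ref{definition:update_path}: the single execution of \textsc{FindAffected} initiated on the $u$-side walks a (possibly empty) update path, which is exactly the $P_u$ claimed by the statement (interpreting the subscript as indicating the side of the inserted edge on which saturation triggers the traversal, with $p^{(u)}_0 = startU$). The symmetric case where $v$ is saturated and $u$ is unsaturated is handled identically, yielding an update path $P_v$. The auxiliary path properties from Lemmas~\ref{lemma:path_property_weight} and~\ref{lemma:path_property_matching_simple} then apply to this single path.

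The main obstacle I anticipate is a purely notational one: reconciling the ``$P_u$'' in the statement with Definition~\ref{definition:update_path}, which technically starts an update path at its subscript vertex, whereas the algorithm calls \textsc{FindAffected} on the evicted neighbor $startU$ rather than on $u$ itself. I would address this up front by clarifying that $P_u$ refers to the update path triggered by saturation on the $u$-endpoint of the inserted edge, so that its starting vertex is $startU = S(u).min$; the remaining arguments are then mechanical consequences of the algorithmic trace.
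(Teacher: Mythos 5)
Your proposal is correct and follows essentially the same route as the paper's proof: it identifies that the saturation of $u$ forces $startU \neq \texttt{None}$ (triggering the single call $\textsc{FindAffected}(startU)$ that traverses $P_u$) while the unsaturation of $v$ gives $startV = \texttt{None}$ (so no path on that side), with the other case by symmetry. Your explicit resolution of the notational point that $P_u$ starts at $startU$ rather than at $u$ matches the convention the paper itself adopts in Corollary~\ref{corollary:edge_insertion_both_saturated}.
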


\begin{proof}
    Since the proof is symmetric for both cases, we assume w.l.o.g. that $v$ is the {unsaturated} vertex, while $u$ is {saturated}.
    For $v$, the situation is similar to the one in Lemma \ref{lemma:edge_insertion_unsaturated}, where \textit{startV} is \texttt{None} and therefore \textsc{FindAffected} is not called.
    Since $u$ is saturated, we initiate the update procedure by calling $\textsc{FindAffected}(startU)$,
    which traverses the \uppath.
    \qed 
\end{proof}

\subsubsection{Case \textbf{(iii)}: Both $u$ and $v$ are \sat} 
When both endpoints of the new edge are \sat, we can easily derive Corollary~\ref{corollary:edge_insertion_both_saturated} from Lemma~\ref{lemma:edge_insertion_one_saturated}. 
The only necessary change is that \textsc{EdgeInsertion}$(u,v)$ now traverses two update paths $P_u$ and $P_v$.

\begin{corollary}
    Let $e := \{u,v\} \notin E$ and $\omega(u,v) > \max\{\omega(u,S(u).min),$ \\ $\omega(v,S(v).min)\}$. If both $u$ and $v$ are {saturated}, then two update paths $P_u$ and $P_v$ are traversed; they start at $startU$ and $startV$, respectively.
    \label{corollary:edge_insertion_both_saturated}
\end{corollary}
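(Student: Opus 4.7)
The plan is to reduce the claim to Lemma~\ref{lemma:edge_insertion_one_saturated} by observing that the two endpoints can be handled symmetrically when both are saturated. First, I would note that saturation of $u$ and $v$ means that $S^{(i)}(u).min \neq \texttt{None}$ and $S^{(i)}(v).min \neq \texttt{None}$ at the moment \textsc{EdgeInsertion} is invoked, and that the precondition $\omega(u,v) > \max\{\omega(u,S(u).min), \omega(v,S(v).min)\}$ ensures we enter the main branch of Algorithm~\ref{alg:dynamic_edge_insertion_new}.

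Next, I would trace lines~\ref{algline:push_uv}--\ref{algline:push_vu}: inserting $v$ into the full priority queue $S^{(i)}(u)$ evicts its minimum entry and returns it as $startU$, and symmetrically for $startV$. Because $u$ and $v$ are saturated, both returned values are different from $\texttt{None}$, so both conditional blocks are entered. In each block the \textit{S-invariant} is restored (lines removing $u$ from $S^{(i)}(startU)$ and $v$ from $S^{(i)}(startV)$), after which \textsc{FindAffected} is invoked on $startU$ and on $startV$.

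By Definition~\ref{definition:update_path}, each call to \textsc{FindAffected} generates an update path rooted at its argument, so two update paths $P_u$ starting at $startU$ and $P_v$ starting at $startV$ are indeed traversed, exactly as claimed. The main obstacle I anticipate is a conceptual one rather than a technical one: since the two calls are executed sequentially and the intermediate suitor lists $S^{(i)}$ are shared, one might worry that the second traversal sees a state that is not symmetric to the first. However, the statement of the corollary only asserts that two paths are traversed from the designated start vertices, and this follows directly from the control flow of Algorithm~\ref{alg:dynamic_edge_insertion_new}; any finer invariants about the interaction of $P_u$ and $P_v$ would be needed only for the subsequent proof that $M^{(f)} = \widetilde{M}$, not for the present claim.
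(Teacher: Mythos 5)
Your proposal is correct and follows essentially the same route as the paper, which derives the corollary directly from Lemma~\ref{lemma:edge_insertion_one_saturated} by noting that when both endpoints are saturated, both conditional branches of Algorithm~\ref{alg:dynamic_edge_insertion_new} fire and \textsc{FindAffected} is invoked on both $startU$ and $startV$. Your explicit remark that the corollary only asserts the existence of the two traversals --- deferring any interaction between $P_u$ and $P_v$ (loose ends, the path $P_a$) to the later lemmas --- matches the paper's intent exactly.
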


Both $P_u$ and $P_v$ are simple paths with decreasing edge weights. Therefore, updates to nodes can only happen multiple times if there exists some overlap between $P_u$ and $P_v$. Under the circumstances described below, these overlaps can lead to a situation where an update path leads to another update path. This is handled by \findaff\ in line~\ref{algline:findaffected_loose_end}, where the starting point of this additional update path is initalized as $looseEnd$. After \findaff\ has finished identifying and updating all vertices on one update path, a recursive call handles this additional update path. In the following, we refer to this update path as $P_a$ and describe the situation in which this additional path occurs.

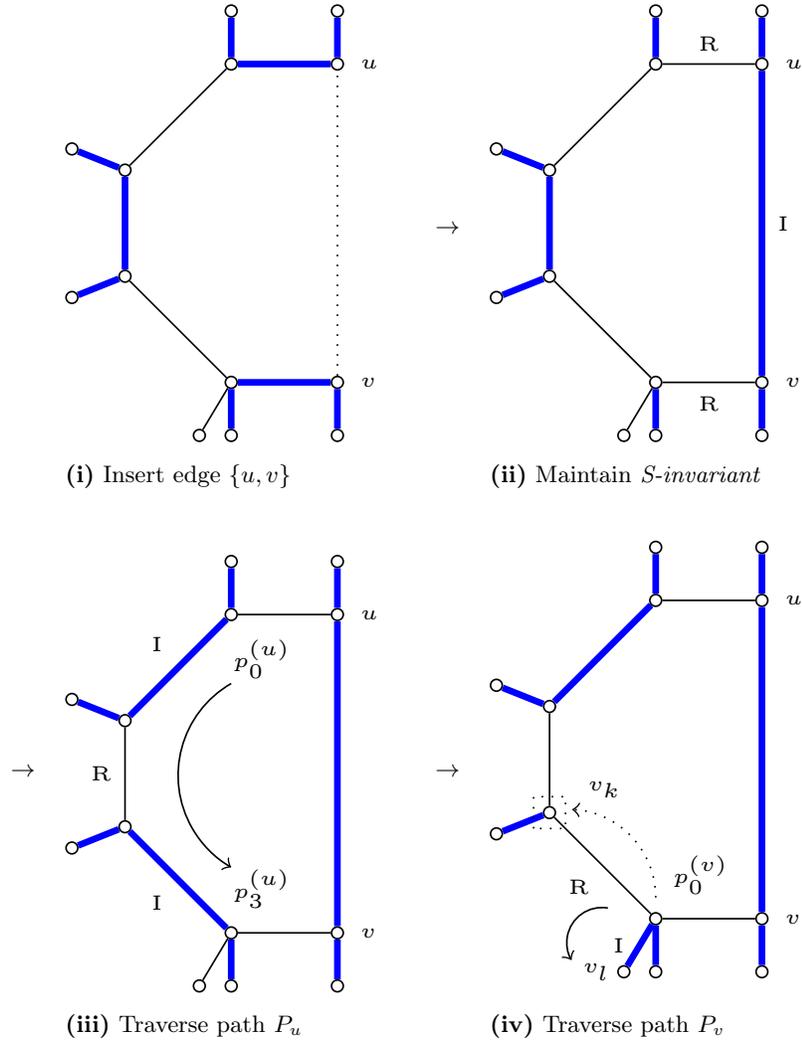
\begin{figure*}
    \begin{center}
        \begin{tblr}{Q[0.3cm,valign=m] Q[\looseEndSize, valign=h] Q[0.3cm,valign=m] Q[\looseEndSize, valign=h]}
            & \resizebox{\looseEndSize}{!}{\begin{tikzpicture}[scale=0.9]
    \tikzstyle{node}=[draw,circle,inner sep=1pt]
    \tikzstyle{active}=[draw,circle,inner sep=1pt,ultra thick,red]
    \tikzstyle{legend}=[minimum width=2.5cm,minimum height=1cm]

    %Nodes
    \node[node] (v) at (0,0) {};
    \node[right= of v, xshift=-3em] {\tiny{$v$}};
    \node[node] (vadd) at (0,-0.5) {};

    \node[node] (startV) at (-1,0) {};
    \node[node] (startVadd1) at (-1,-0.5) {};
    \node[node] (startVadd2) at (-1.3,-0.5) {};

    \node[node] (v1) at (-2,1) {};
    \node[node] (v1add) at (-2.5,0.8) {};

    \node[node] (v2) at (-2,2) {};
    \node[node] (v2add) at (-2.5,2.2) {};

    \node[node] (startU) at (-1,3) {};
    \node[node] (startUadd) at (-1,3.5) {};
    
    \node[node] (u) at (0,3) {};
    \node[right= of u, xshift=-3em] {\tiny{$u$}};
    \node[node] (uadd) at (0, 3.5) {};

    % Edges
    \draw[ultra thick, blue] (u) edge (uadd);
    \draw[ultra thick, blue] (u) edge (startU);
    \draw[ultra thick, blue] (startU) edge (startUadd);
    \draw[black] (startU) edge (v2);
    \draw[ultra thick, blue] (v2) edge (v1);
    \draw[ultra thick, blue] (v2) edge (v2add);
    \draw[ultra thick, blue] (v1) edge (v1add);
    \draw[black] (v1) edge (startV);
    \draw[ultra thick, blue] (startV) edge (startVadd1);
    \draw[black] (startV) edge (startVadd2);
    \draw[ultra thick, blue] (startV) edge (v);
    \draw[ultra thick, blue] (v) edge (vadd);
    \draw[black, dotted] (v) edge (u);

\end{tikzpicture}} & $\rightarrow$ & \resizebox{\looseEndSize}{!}{\begin{tikzpicture}[scale=0.9]
    \tikzstyle{node}=[draw,circle,inner sep=1pt]
    \tikzstyle{active}=[draw,circle,inner sep=1pt,ultra thick,red]
    \tikzstyle{legend}=[minimum width=2.5cm,minimum height=1cm]

    %Nodes
    \node[node] (v) at (0,0) {};
    \node[right= of v, xshift=-3em] {\tiny{$v$}};
    \node[node] (vadd) at (0,-0.5) {};

    \node[node] (startV) at (-1,0) {};
    \node[node] (startVadd1) at (-1,-0.5) {};
    \node[node] (startVadd2) at (-1.3,-0.5) {};

    \node[node] (v1) at (-2,1) {};
    \node[node] (v1add) at (-2.5,0.8) {};

    \node[node] (v2) at (-2,2) {};
    \node[node] (v2add) at (-2.5,2.2) {};

    \node[node] (startU) at (-1,3) {};
    \node[node] (startUadd) at (-1,3.5) {};
    
    \node[node] (u) at (0,3) {};
    \node[right= of u, xshift=-3em] {\tiny{$u$}};
    \node[node] (uadd) at (0, 3.5) {};

    % Edges
    \draw[ultra thick, blue] (u) edge (uadd);
    \draw[black] (u) edge node[font=\tiny, above, black] {R} (startU);
    \draw[ultra thick, blue] (startU) edge (startUadd);
    \draw[black] (startU) edge (v2);
    \draw[ultra thick, blue] (v2) edge (v1);
    \draw[ultra thick, blue] (v2) edge (v2add);
    \draw[ultra thick, blue] (v1) edge (v1add);
    \draw[black] (v1) edge (startV);
    \draw[ultra thick, blue] (startV) edge (startVadd1);
    \draw[black] (startV) edge (startVadd2);
    \draw[black] (startV) edge node[font=\tiny, below, black] {R} (v);
    \draw[ultra thick, blue] (v) edge (vadd);
    \draw[ultra thick, blue] (v) edge node[font=\tiny, right, black] {I} (u);

\end{tikzpicture}} \\
            & \textbf{(\romannum{1})} Insert edge $\{u,v\}$ & & \textbf{(\romannum{2})} Maintain \textit{S-invariant} \\
            & & & &\\
            $\rightarrow$ & \resizebox{\looseEndSize}{!}{\begin{tikzpicture}[scale=0.9]
    \tikzstyle{node}=[draw,circle,inner sep=1pt]
    \tikzstyle{active}=[draw,circle,inner sep=1pt,ultra thick,red]
    \tikzstyle{legend}=[minimum width=2.5cm,minimum height=1cm]

    %Nodes
    \node[node] (v) at (0,0) {};
    \node[right= of v, xshift=-3em] {\tiny{$v$}};
    \node[node] (vadd) at (0,-0.5) {};

    \node[node] (startV) at (-1,0) {};
    \node[above right= of startV, yshift=-2.9em, xshift=-3.5em] {\tiny{$p^{(u)}_3$}};
    \node[node] (startVadd1) at (-1,-0.5) {};
    \node[node] (startVadd2) at (-1.3,-0.5) {};

    \node[node] (v1) at (-2,1) {};
    \node[node] (v1add) at (-2.5,0.8) {};

    \node[node] (v2) at (-2,2) {};
    \node[node] (v2add) at (-2.5,2.2) {};

    \node[node] (startU) at (-1,3) {};
    \node[below right= of startU, yshift=2.9em, xshift=-3.5em] {\tiny{$p^{(u)}_0$}};
    \node[node] (startUadd) at (-1,3.5) {};
    
    \node[node] (u) at (0,3) {};
    \node[right= of u, xshift=-3em] {\tiny{$u$}};
    \node[node] (uadd) at (0, 3.5) {};

    % Edges
    \draw[ultra thick, blue] (u) edge (uadd);
    \draw[black] (u) edge (startU);
    \draw[ultra thick, blue] (startU) edge (startUadd);
    \draw[ultra thick, blue] (startU) edge node[font=\tiny, above left, black] {I} (v2);
    \draw[black] (v2) edge node[font=\tiny, left, black] {R} (v1);
    \draw[ultra thick, blue] (v2) edge (v2add);
    \draw[ultra thick, blue] (v1) edge (v1add);
    \draw[ultra thick, blue] (v1) edge node[font=\tiny, below left, black] {I} (startV);
    \draw[ultra thick, blue] (startV) edge (startVadd1);
    \draw[black] (startV) edge (startVadd2);
    \draw[black] (startV) edge (v);
    \draw[ultra thick, blue] (v) edge (vadd);
    \draw[ultra thick, blue] (v) edge (u);

    \draw[->] (-1,2.35) arc (120:240:1);

\end{tikzpicture}} & $\rightarrow$ & \resizebox{\looseEndSize}{!}{\begin{tikzpicture}[scale=0.9]
    \tikzstyle{node}=[draw,circle,inner sep=1pt]
    \tikzstyle{active}=[draw,circle,inner sep=1pt,ultra thick,red]
    \tikzstyle{legend}=[minimum width=2.5cm,minimum height=1cm]

    %Nodes
    \node[node] (v) at (0,0) {};
    \node[right= of v, xshift=-3em] {\tiny{$v$}};
    \node[node] (vadd) at (0,-0.5) {};

    \node[node] (startV) at (-1,0) {};
    \node[above right= of startV, yshift=-2.9em, xshift=-3.1em] {\tiny{$p^{(v)}_0$}};
    \node[node] (startVadd1) at (-1,-0.5) {};
    \node[node] (startVadd2) at (-1.3,-0.5) {};

    \node[node] (v1) at (-2,1) {};
    \node[right= of v1, xshift=-2.6em, yshift=0.6em] {\tiny{$v_k$}};
    \node[node] (v1add) at (-2.5,0.8) {};

    \node[node] (v2) at (-2,2) {};
    \node[node] (v2add) at (-2.5,2.2) {};

    \node[node] (startU) at (-1,3) {};
    % \node[below right= of startU, yshift=2.9em, xshift=-3.5em] {\tiny{$p^{(u)}_0$}};
    \node[node] (startUadd) at (-1,3.5) {};
    
    \node[node] (u) at (0,3) {};
    \node[right= of u, xshift=-3em] {\tiny{$u$}};
    \node[node] (uadd) at (0, 3.5) {};

    % Edges
    \draw[ultra thick, blue] (u) edge (uadd);
    \draw[black] (u) edge (startU);
    \draw[ultra thick, blue] (startU) edge (startUadd);
    \draw[ultra thick, blue] (startU) edge (v2);
    \draw[black] (v2) edge (v1);
    \draw[ultra thick, blue] (v2) edge (v2add);
    \draw[ultra thick, blue] (v1) edge (v1add);
    \draw[black] (v1) edge node[font=\tiny, below left, black] {R} (startV);
    \draw[ultra thick, blue] (startV) edge (startVadd1);
    \node[right= of startVadd1, xshift=-5.5em, yshift=0.0em] {\tiny{$v_l$}};
    \draw[ultra thick, blue] (startV) edge node[font=\tiny, left, black] {I} (startVadd2);
    \draw[black] (startV) edge (v);
    \draw[ultra thick, blue] (v) edge (vadd);
    \draw[ultra thick, blue] (v) edge (u);

    \draw[->, dotted] (-1,0.2) arc (0:87:0.83);
    \draw[->] (-1.45,0.1) arc (80:210:0.33);

    \draw[black, dotted] (-2.15,0.85) rectangle ++(0.3,0.3);

\end{tikzpicture}}\\
            & \textbf{(\romannum{3})} Traverse path $P_u$ & & \textbf{(\romannum{4})} Traverse path $P_v$ \\
        \end{tblr}
        \caption{Example of the identification of a loose end for edge insertions ($b=2$). $I$ denotes an insertion into $M^{(i)}$, while $R$ represents a removal from $M^{(i)}$. From left to right: \textbf{(\romannum{1})} Insert edge $\{u,v\}$ with $\omega(u,v) > \max\{\omega(u,S(u).min), \omega(v,S(v).min)\}$, where $u$ and $v$ are {saturated}. \textbf{(\romannum{2})} Update suitor lists $S^{(i)}$ of $u$ and $v$ and remove both nodes from the previous $S(u).min$ and $S(v).min$. \textbf{(\romannum{3})} Traverse update path $P_u =(p^{(u)}_0=startU,\dots,p^{(u)}_3=startV)$, which ends at $startV$. \textbf{(\romannum{4})} Traverse update path $P_v$ with $p^{(v)}_0=startV$. Assuming $\omega(p^{(v)}_0,v_l) > \omega(p^{(v)}_0,v_k)$, then $v_l$ is chosen as $p^{(v)}_1$. $P_v$ continues at $v_l$, leaving $v_k$ as {unsaturated} and identifying a loose end.}
        \label{fig:loose_end}
    \end{center}
\end{figure*}

An example for the existence of an update path $P_a$ can be seen in Figure~\ref{fig:loose_end}. The loose end and therefore $P_a$ are detected at the end of either $P_u$ or $P_v$ (depending on which was traversed first). This can be formalized as follows:
\begin{lemma}
    Let $P_u$ and $P_v$ be two update paths resulting from inserting the edge $e = \{u,v\}$ where $P_u$ ends at $startV$ and is traversed before $P_v$. Then an update path $P_a$ is traversed starting at $p^{(u)}_{|P_u|-2}$  iff  the following properties hold: %$(i)$ $P_u$  ends at $startV$, 
    $(i)$ $\omega(p^{(u)}_{|P_u|-2},startV) \leq \omega(startV,S^{(i)}(startV).min)$, $(ii)$ the length of $P_u$ is odd and $(iii)$ $startV$ is {locally affected} after the construction of $P_u$.
    \label{lemma:edge_insertion_loose_end}
\end{lemma}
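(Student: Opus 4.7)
The plan is to prove both directions of the biconditional, in each case translating the three listed conditions into the specific state of $S^{(i)}(startV)$ that triggers (or is forced by) the loose-end branch at line~\ref{algline:special} of \textsc{FindAffected}. Before starting, I would establish one preliminary observation: the variable $looseEnd$ can only be set during the first iteration of the main loop of $P_v$. This is because after the first iteration $cu \gets prevCa$, but line~\ref{algline:prevca_s_invariant} has just removed $ca$ from $S^{(i)}(prevCa)$, so $prevCa$ is \usat, and any subsequent insertion into $S^{(i)}(cu)$ returns $prevCu = \none$. Consequently, $P_a$ is fully determined by the state of $S^{(i)}(startV)$ at the instant $P_v$ begins.

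For the $(\Leftarrow)$ direction, I would assume $(i)$, $(ii)$, and $(iii)$ hold and chain three observations. From $(ii)$ the last edge of $P_u$ has index $|P_u|-2$, which is even, so Lemma~\ref{lemma:even_odd} classifies it as an insertion and hence places $p^{(u)}_{|P_u|-2}$ into $S^{(i)}(startV)$. Condition $(i)$ then forces $S^{(i)}(startV).min$ to be well-defined (since edge weights are strictly positive and ties are broken consistently by node ID), so $startV$ is \sat, and combining $(i)$ with the presence of $p^{(u)}_{|P_u|-2}$ in the queue identifies $p^{(u)}_{|P_u|-2}$ itself as $S^{(i)}(startV).min$. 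Finally, condition $(iii)$ guarantees that the first iteration of the main loop inside \textsc{FindAffected}$(startV)$ picks a valid candidate $ca$ on line~\ref{algline:check_def_affected}, so the insertion on line~\ref{algline:s-invariant1} displaces exactly $p^{(u)}_{|P_u|-2}$, sets $looseEnd \gets p^{(u)}_{|P_u|-2}$, and the recursive call on line~\ref{algline:findaffected_loose_end} then initiates $P_a$ from this node.

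For the $(\Rightarrow)$ direction, I would assume $P_a$ starts at $p^{(u)}_{|P_u|-2}$ and invert the argument. By the preliminary observation the loose end was produced in the first iteration of $P_v$, so $ca$ was indeed found in that iteration, which is condition $(iii)$. Since a displacement occurred, $startV$ was \sat\ at that moment and $p^{(u)}_{|P_u|-2}$ was its minimum suitor, immediately yielding condition $(i)$. Since $p^{(u)}_{|P_u|-2}$ had to already reside in $S^{(i)}(startV)$, the last edge of $P_u$ must have been an insertion; by Lemma~\ref{lemma:even_odd} its index $|P_u|-2$ must be even, i.e.\ the length of $P_u$ is odd, giving $(ii)$.

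The hardest part will be pinning down the precise state of $S^{(i)}(startV)$ at the instant $P_v$ begins. One must carefully combine (a) the removal of $v$ performed by Algorithm~\ref{alg:dynamic_edge_insertion_new} before \textsc{FindAffected}$(startV)$ is invoked, (b) the insertion of $p^{(u)}_{|P_u|-2}$ performed during the last iteration of $P_u$, and (c) the simplicity of $P_u$ established in Lemma~\ref{lemma:path_property_matching_simple}, which rules out $startV$ being visited earlier on $P_u$ in a way that would scramble its queue. A secondary subtlety is verifying that condition $(i)$ together with the tie-breaking rule uniquely identifies $p^{(u)}_{|P_u|-2}$ (and no equally-weighted competitor) as the element that gets displaced.
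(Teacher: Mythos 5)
Your proposal is correct and follows essentially the same route as the paper's proof: both directions hinge on Lemma~\ref{lemma:even_odd} to tie the parity of $|P_u|$ to the final insertion into $S^{(i)}(startV)$, on condition $(i)$ to identify $p^{(u)}_{|P_u|-2}$ as the minimum entry that gets displaced, and on condition $(iii)$ to guarantee that the first iteration of \textsc{FindAffected}$(startV)$ actually performs that displacement. Your preliminary observation that $looseEnd$ can only be produced in the first iteration of $P_v$ (because every subsequent $cu$ is \usat) is exactly the paper's argument that the saturated, locally affected $cu$ must be $startV$ itself, just stated more crisply.
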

\begin{proof}
    "$\leftarrow$": Before both update paths $P_u$ and $P_v$ are traversed, both $startU$ and $startV$ become {unsaturated} during \textsc{EdgeInsertion} due to Corollary~\ref{corollary:edge_insertion_both_saturated}.
    Given that $\omega(p^{(u)}_{|P_u|-2},startV) \leq \omega(startV,S^{(i)}(startV).min)$, we can also conclude that $startV$ was {saturated} before the removal of $v$.
    Let $P_u =(p^{(u)}_0=startU,\dots,p^{(u)}_k=startV)$, where $k$ is the length of $P_u$. % and the end is at $startV$. 
    The last edge $e_{k-1}=\{p^{(u)}_{k-1},p^{(u)}_k\}$ then represents an insertion into $M^{(i)}$ based on Lemma~\ref{lemma:even_odd} and property (ii). As a result, $startV$ gets a new suitor, namely $p^{(u)}_{k-1}$. After the insertion of $p^{(u)}_{k-1}$ into $S^{(i)}(startV)$, $startV$ is {saturated}.
    For the traversal of $P_v$, \textsc{FindAffected} tries to find a new potential suitor for $startV$. Such a suitor is found, since the vertex is still {locally affected} (property (iii)), \ie a vertex $v_l \neq p^{(u)}_{k-1}$ is found with $\omega(startV,v_l) > \omega(v_l, S^{(i)}(v_l).min)$ and $\omega(startV,v_l) > \omega(v_l, p^{(u)}_{k-1})$.
    At the same time, $p^{(u)}_{k-1}$ is the minimum entry in $S^{(i)}(startV).min$, since $\omega(p^{(u)}_{k-1},startV) \leq \omega(startV,S^{(i)}(startV).min)$ due to property (i). As a result, $p^{(u)}_{k-1}$ is removed again due to the {S-invariant} and $P_v$ is continued at the previous $S^{(i)}(v_l).min$.
    Since $p^{(u)}_{k-1}$ is now {unsaturated}, a new update path $P_a$ (possibly of length $0$) occurs with $p^{(a)}_0=p^{(u)}_{k-1}=p^{(u)}_{|P_u|-2}$.

    "$\rightarrow$": Let $P_u =(p^{(u)}_0=startU,\dots,p^{(u)}_k=startV)$, where $k$ is the length of $P_u$. % and the end is at $startV$. 
    The first vertex $p^{(a)}_0$ of $P_a$ is then $p^{(u)}_{k-1}$. In order to be a starting point of that update path, $p^{(a)}_0$ has to be a not \texttt{None} vertex in line~\ref{algline:findaffected_loose_end} in \textsc{FindAffected}.
    Such a vertex is the previous minimum suitor of some vertex $cu$ (line~\ref{algline:s-invariant1} in \textsc{FindAffected}) during the traversal of $P_v$. This also implies that $cu$ is {saturated} and {locally affected} after the traversal of $P_u$ (property (iii)).
    As a result, $cu$ has to be the first vertex $startV$ on $P_v$, otherwise it is {unsaturated} (see lines~\ref{algline:prevca_s_invariant} and~\ref{algline:prevca_cu}).

    For $startV$ being {saturated} at the beginning of the construction of $P_v$, the previous operation concerning the suitor lists of $startV$ has to be an insertion, \ie an edge $e_{+}$ with an even index on another update path.
    Since $P_u$ has been traversed before $P_v$, $e_{+}$ has to belong to $P_u$ with its endpoints being $\{p^{(a)}_0, startV\}$.
    $e_{+}$ is therefore also the last edge on $P_u$ and $|P_u|$ is odd (property (ii)).
    The weight of $e_{+}$ has to be smaller than or equal to $\omega(startV,S^{(i)}(startV).min)$, otherwise $p^{(a)}_0$ would not have been removed from $S^{(i)}(startV)$ (property (i)).
    \qed
\end{proof}

Note that the additional update path $P_a$ also occurs if $P_v$ is traversed before $P_u$. This leads to Corollary~\ref{corollary:pa_reversed}, which we include for completeness. The proof is easily done by swapping $u$ and $v$ in the proof of Lemma~\ref{lemma:edge_insertion_loose_end}.

\begin{corollary}
    Let $P_u$ and $P_v$ be two update paths resulting from inserting the edge $e = \{u,v\}$ where $P_v$ ends at $startU$ and is traversed before $P_u$. Then an update path $P_a$ is traversed starting at $p^{(v)}_{|P_v|-2}$  iff the following properties hold: %$(i)$ $P_u$  ends at $startV$, 
    $(i)$ $\omega(p^{(v)}_{|P_v|-2},startU) \leq \omega(startU,S^{(i)}(startU).min)$, $(ii)$ the length of $P_v$ is odd and $(iii)$ $startU$ is {locally affected} after the construction of $P_v$.
    \label{corollary:pa_reversed}
\end{corollary}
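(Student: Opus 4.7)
The plan is to piggyback directly on Lemma~\ref{lemma:edge_insertion_loose_end}, exploiting that the procedure \textsc{EdgeInsertion} treats the two endpoints $u$ and $v$ of the inserted edge symmetrically. Rather than redoing the long case analysis, my proof is a formal relabelling argument.

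First I would observe that Algorithm~\ref{alg:dynamic_edge_insertion_new} performs the two suitor-list updates in lines~\ref{algline:push_uv}--\ref{algline:push_vu} as a symmetric pair, and that the two subsequent calls to \textsc{FindAffected} on $startU$ and $startV$ are independent subroutine invocations whose order is not enforced by the pseudocode. Hence the entire setup of the dynamic update is invariant under the transposition $u \leftrightarrow v$, $startU \leftrightarrow startV$, $P_u \leftrightarrow P_v$. Next I would apply this transposition throughout the statement and proof of Lemma~\ref{lemma:edge_insertion_loose_end}. Under the relabelling, the hypothesis ``$P_u$ is traversed before $P_v$ and ends at $startV$'' becomes ``$P_v$ is traversed before $P_u$ and ends at $startU$'', which is precisely the hypothesis of the corollary; the three properties $(i)$, $(ii)$, $(iii)$ map one-to-one to those of the corollary; and the conclusion ``an update path $P_a$ is traversed starting at $p^{(u)}_{|P_u|-2}$'' becomes ``$P_a$ starts at $p^{(v)}_{|P_v|-2}$'', as required. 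Both directions of the iff transfer verbatim: the forward direction still relies only on Lemma~\ref{lemma:even_odd} (parity of operations along an update path) together with line~\ref{algline:findaffected_loose_end}, while the backward direction still uses the \textit{S-invariant} and the construction of the second path through \textsc{FindAffected}. None of these ingredients distinguishes $u$ from $v$.

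The main obstacle is purely bookkeeping: I would audit the proof of Lemma~\ref{lemma:edge_insertion_loose_end} step by step and verify that no sub-argument silently uses an asymmetry between the two endpoints. The candidate danger spots are the initialization block of \textsc{EdgeInsertion} and the order of the two \textsc{FindAffected} calls, but both arise from arbitrary algorithmic choices rather than from a semantic distinction between $u$ and $v$; hence the swap preserves the logical content of every step. Once that check is completed the corollary follows immediately, justifying its status as a corollary rather than an independent lemma.
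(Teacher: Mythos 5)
Your proposal matches the paper's own argument: the authors likewise prove Corollary~\ref{corollary:pa_reversed} simply by swapping $u$ and $v$ (and correspondingly $startU$ with $startV$ and $P_u$ with $P_v$) in the proof of Lemma~\ref{lemma:edge_insertion_loose_end}, relying on the symmetry of \textsc{EdgeInsertion}. Your additional care in auditing that no step of that proof depends on an asymmetry between the two endpoints only makes the relabelling argument more explicit; the approach is the same.
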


Now that we have shown that (non-empty) $P_a$ may exist, we still need to prove that no other paths can occur. Also all \laff\ vertices need to be found by \edgeinsParam{u}{v}\ in order to let \dynb\ compute the same matching as \staticb.

\begin{lemma}
    Let $e := \{u,v\} \notin E$ and $\omega(u,v) > \max\{\omega(u,S(u).min), $  \\$\omega(v,S(v).min)\}$. If both $u$ and $v$ are {saturated}, then all \laff\ vertices
    are reached during \edgeinsParam{u}{v}\ via edges from $X := P_u \cup P_v \cup P_a \cup \{u,v\} \cup \{u, startU\} \cup \{v, startV\}$. Moreover, the vertices that are not locally affected do not change their suitor lists.
    \label{lemma:edge_insertion_both_saturated}
\end{lemma}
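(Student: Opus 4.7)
The plan is to prove the lemma in two directions: first, to verify that Algorithm~\ref{alg:dynamic_edge_insertion_new} together with the calls to \textsc{FindAffected} actually updates precisely the vertices incident to edges in $X$ (soundness); and second, to show that no locally affected vertex lies outside $X$ (completeness). The setup is given by Corollary~\ref{corollary:edge_insertion_both_saturated}: since both $u$ and $v$ are saturated, \edgeins\ inserts $v$ into $S^{(i)}(u)$ and $u$ into $S^{(i)}(v)$, then removes $u$ from $S^{(i)}(startU)$ and $v$ from $S^{(i)}(startV)$, and finally calls \textsc{FindAffected} on $startU$ and $startV$. Together with Lemma~\ref{lemma:edge_insertion_loose_end} (and Corollary~\ref{corollary:pa_reversed} for the symmetric case), all suitor-list updates performed by \edgeins\ occur exactly along $P_u$, $P_v$, $P_a$, and the two edges $\{u,startU\}$ and $\{v,startV\}$, plus the newly inserted edge $\{u,v\}$ itself.

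For the first direction, I would proceed by induction on the order in which vertices are visited by the two calls to \textsc{FindAffected}. The base cases are $u$ and $v$, which are locally affected by hypothesis (the new edge yields a strictly better suitor, by Lemma~\ref{lemma:new_edge_def_1_new}). The inductive step uses the main loop of \textsc{FindAffected}: whenever a vertex $p^{(x)}_{j+1}$ is appended to the current update path, it has just lost its previous minimum suitor in line~\ref{algline:prevca_s_invariant}, and the algorithm checks Definition~\ref{def:affected} in line~\ref{algline:check_def_affected} before continuing; the loop only progresses if the vertex is indeed locally affected. Lemma~\ref{lemma:path_property_weight} and Lemma~\ref{lemma:path_property_matching_simple} guarantee that each update path is simple with strictly decreasing edge weights, so no vertex is processed twice within the same path. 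The loose-end case handled by Lemma~\ref{lemma:edge_insertion_loose_end} extends this to the third path $P_a$.

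For completeness, I would argue by contradiction: suppose some vertex $w \notin V(X)$ is locally affected after \edgeins\ terminates. By Definition~\ref{def:affected}, this requires the existence of a neighbor $w'$ such that $\omega(w,w') > \omega(w,S^{(f)}(w).min)$ and $\omega(w,w') > \omega(w',S^{(f)}(w').min)$, while before \edgeins\ no such neighbor existed (otherwise $w$ would already have been locally affected w.r.t.\ $M$, contradicting Proposition~\ref{proposition:valid_matching_static}). Hence either the incident edge set of $w$ or the suitor list of one of its neighbors must have changed. The only edge-set change is $\{u,v\}$, and by the assumption $w \notin V(X)$ we have $w \notin \{u,v\}$. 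Thus some neighbor $w'$ of $w$ must have had its suitor list modified by the algorithm. However, all suitor-list modifications performed by \edgeins\ happen either for the endpoints $u,v$ themselves or along the update paths $P_u, P_v, P_a$ (covering $startU, startV$ and their subsequent path vertices). Every such modified vertex lies in $V(X)$; so $w' \in V(X)$ and the edge $\{w,w'\}$ witnesses that $w$ is reached from $X$, contradicting the assumption that $w$ could only have been found via such an edge yet was not. This simultaneously yields the second claim: a vertex outside $V(X)$ has no modified neighbor's suitor list either, so its own $S^{(i)}$ is unchanged.

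The main obstacle will be the completeness argument, specifically making precise that a vertex whose neighbors' suitor lists are all unchanged cannot become newly locally affected. This boils down to observing that Definition~\ref{def:affected} is monotone in the neighbors' suitor minima: if neither the edge set incident to $w$ nor any $S^{(i)}(w')$ for $w'\in \widetilde N(w)$ changes, then the right-hand side of the $\argmax$ in Definition~\ref{def:affected} is identical to its value before the edge insertion, and by Proposition~\ref{proposition:valid_matching_static} applied to $G$ no improving suitor existed then. Care must be taken with the loose end: even though $p^{(a)}_0$ temporarily acquires and then loses the suitor $startV$ (or $startU$) during the traversal of the other path, the recursive call on $looseEnd$ in line~\ref{algline:findaffected_loose_end} ensures that $p^{(a)}_0$ itself is re-examined, so the inductive argument captures it in the path $P_a$ rather than leaving it as an unprocessed locally affected vertex.
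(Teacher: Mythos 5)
Your overall decomposition (soundness of the traversal plus a completeness argument by contradiction, resting on the observation that a vertex can only become \laff\ if either its incident edge set or some neighbor's suitor list changes) matches the paper's second step closely. However, there is a genuine gap in the first step: you \emph{assume} that all suitor-list modifications occur along $P_u$, $P_v$, and a single additional path $P_a$, citing Lemma~\ref{lemma:edge_insertion_loose_end} and Corollary~\ref{corollary:pa_reversed} as justification. Those results only characterize \emph{when} the specific path $P_a$ starting at $p^{(u)}_{|P_u|-2}$ is traversed; they do not rule out that further loose ends are detected in line~\ref{algline:findaffected_loose_end} of \findaff\ -- for instance during the traversal of $P_a$ itself, or at some other vertex visited twice because $P_u$ and $P_v$ overlap. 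If additional update paths could arise, the set $X$ as defined (with exactly one $P_a$) would not cover all modified vertices, and both your soundness and completeness arguments would break. The paper devotes roughly half of its proof to exactly this point: it argues indirectly that a second loose end would require some vertex $v_k$ to be assigned to $cu$ and updated twice while being \sat\ at the second update, which forces $v_k$ to be the \emph{first} vertex of the second path, and hence to coincide with the $looseEnd$ already identified in Lemma~\ref{lemma:edge_insertion_loose_end}. You need an argument of this kind (or an explicit induction showing the recursion on $looseEnd$ bottoms out after one level) before you may write $X = P_u \cup P_v \cup P_a \cup \dots$.

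A secondary, smaller issue: in your completeness step you conclude that a \laff\ vertex $w \notin V(X)$ would have a neighbor $w' \in V(X)$ whose suitor list changed, and call this a contradiction. As written this only shows $w$ is \emph{adjacent} to $X$, not that the edge $\{w,w'\}$ lies in $X$ or that $w$ is actually reached. The missing link is that \findaff, when it updates $w'$ as the current node $cu$, re-evaluates line~\ref{algline:check_def_affected} over the whole neighborhood $\widetilde{N}(w')$ and would therefore select $w$ (or a better candidate) and append the corresponding edge to the \uppath\ -- which is how the paper closes this step. Make that traversal property explicit rather than leaving it implicit in the phrase \quot{witnesses that $w$ is reached.}
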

\begin{proof}
    \drop{We first show that \laff\ vertices reached by \edgeinsParam{u}{v}\ \\ are connected by edges in $X := P_u \cup P_v \cup P_a \cup \{u,v\} \cup \{u, startU\} \cup \{v, startV\}$.
    If an edge insertion is occuring and $\omega(u,v) > \max\{\omega(u,S(u).min),$ $\omega(v,S(v).min)\}$, we already know that $u$ and $v$ are \sat.  As a result, $startU$ and $startV$ are not \none\ in \edgeins. They also mark the first vertices of update paths $P_u$ and $P_v$ according in \findaff\ (Corollary~\ref{corollary:edge_insertion_both_saturated}).
    Due to Lemma~\ref{lemma:edge_insertion_loose_end} and Corollary~\ref{corollary:pa_reversed}, the additional update path $P_a$ either starts at a vertex in $P_u$ or in $P_v$.
    Due to Lemma~\ref{TODO}, all vertices in $P_u$, $P_v$, and $P_a$ except the respective last ones are \laff. }

    \drop{Let $P_c$ denote the path $P_c := \{startU, u\} \cup \{u,v\} \cup \{v, startV\}$, which connects $P_u$, $P_v$, and $P_a$.
    We know that $u$ and $v$ are \laff\ due to $\omega(u,v) > \max\{\omega(u,S(u).min),$ $\omega(v,S(v).min)\}$.
    Therefore, all \laff\ vertices reached by \edgeins\ are connected by edges in $P_u \cup P_v \cup P_a \cup P_c = X$.}

    We first show that no update paths other than $P_u$, $P_v$ and $P_a$ can occur in \edgeinsParam{u}{v}.
    We will do this by an indirect proof.
    Assume we have an additional update path, \ie another $looseEnd$ besides the one in Lemma~\ref{lemma:edge_insertion_loose_end} is detected in line~\ref{algline:findaffected_loose_end} of \textsc{FindAffected}.
    For this to happen, the same vertex has to be assigned to $cu$ and then updated at least twice.
    Let us assume these updates correspond to update paths $P_x$ and $P_y$, where $P_x$ is traversed before $P_y$.
    Specifically for a vertex $v_k$, the update has to be such that after the first update in $P_x$ the vertex is \sat\ but still \laff\
    (line \ref{algline:s-invariant1} in \findaff).
    As a result, the first update represents the insertion of an entry into the suitor list of $S^{(i)}(v_k)$; otherwise $v_k$ cannot be \sat\ after the update.
    In order to get a not \texttt{None} vertex $prevCu$ in line~\ref{algline:s-invariant1}, $cu$ has still to be \sat\ during the second update in $P_y$.
    However, $v_k$ and therefore $cu$ can only be \sat\ during the second update if it is the first vertex on $P_y$. 
    Then it would be equal to $looseEnd$ due to Lemma~\ref{lemma:edge_insertion_loose_end}.
    As a result, \edgeinsParam{u}{v}\ does not detect any additional update path other than $P_u$, $P_v$, and $P_a$.

    The next and final step for the proof is to show that $X$ indeed represents all \laff\ vertices after an edge insertion.
    In order for a vertex to be \laff\, either there exist new neighbors fulfilling Def.~\ref{def:affected} or its suitor list $S^{(i)}$ has changed with respect to $S$.
    The former case is covered in \edgeinsParam{u}{v}\ directly by updating $S^{(i)}(u)$ and $S^{(i)}(v)$.
    For the latter case, the update paths traversed by \findaff\ identify all vertices for which $S^{(i)}$ is changed by searching the respective neighborhood.
    We already covered that there exists no additional update path besides $P_u$, $P_v$, and $P_a$.
    Since \findaff\ traverses via the neighborhood of $cu$, a \laff\ vertex cannot occur outside one of those \uppath s,
    so that no vertex outside of $X$ changes its suitor lists.
%    otherwise $S^{(i)}$ could not have been changed in the first place.
    As a result, \dynb\ successfully updates all \laff\ vertices (and only those) after an edge insertion.
    \qed
    % Due to Lemma~\ref{lemma:edge_insertion_loose_end}, additional update paths can only occur under certain circumstances, one of which is the occurence of a vertex on multiple \uppath s (overlap) or the multi occurence of a vertex on one \uppath\ (loop).
    % Again, this can only occur if two \uppath s are overlapping since the \uppath s are simple due to Lemma~\ref{lemma:path_properties}.
    % This leaves us with overlaps. The overlap happening at the end of the update paths is already covered by Lemma~\ref{lemma:edge_insertion_loose_end} and therefore covered by \edgeins.
    % Therefore, let the overlap occur at some vertex $v_k$ at some point of two update paths $P_x$ and $P_y$.
    % Assume $P_x$ is traversed before $P_y$ and let vertex $v_k$ be the first vertex of the overlap.
    % Then, four cases can occur concerning the first occurence of vertex in an edge $e_k$ on $P_x$ and $P_y$: \textbf{(i)} $e_k$ is even in $P_x$ and $P_y$ \textbf{(ii)}/\textbf{(iii)} $e_k$ is even [odd] in $P_x$ and odd [even] in $P_y$, \textbf{(iv)} $e_k$ is odd in $P_x$ and $P_y$.

    % Case \textbf{(i)} is already covered in the previous part of the proof. Also case \textbf{(iv)} is , since 
\end{proof}

We have covered all cases concerning the saturation state of the involved endpoints of the new edge $\{u,v\}$. It remains to show
that the final $b$-matching $M^{(f)}$ indeed equals $\widetilde{M}$, the result of \staticb.

\begin{proposition}
    After \edgeins\ is finished, $M^{(f)}$ equals the $b$-matching $\widetilde{M}$ computed by \staticb\ on $\widetilde{G}$. $\widetilde{M}$ is a (1/2)-approximation of the MWBM problem.
    \label{proposition:edge_insertion_final}  
\end{proposition}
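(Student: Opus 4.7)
The plan is to proceed by a case analysis on whether the insertion condition of Lemma~\ref{lemma:new_edge_def_1_new} is satisfied. If $\omega(u,v) \leq \max\{\omega(u,S(u).min), \omega(v,S(v).min)\}$, then by Lemma~\ref{lemma:new_edge_def_1_new} the new edge does not belong to $\widetilde{M}$, and \edgeinsParam{u}{v} leaves all suitor queues untouched. I would then argue that in this situation the entire execution of \staticb\ on $\widetilde{G}$ coincides step-by-step with its execution on $G$, because the extra edge $\{u,v\}$ is never selected as a potential suitor in Line~\ref{alg_line:static_b_suitor_pot_suitor} of Algorithm~\ref{alg:static_suitor_main}; hence $\widetilde{M} = M = M^{(f)}$.

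In the complementary situation, where the insertion condition holds, I would split along the saturation states of $u$ and $v$ and invoke the three case-specific results already established. Case (i) (both endpoints \usat) is handled directly by Lemma~\ref{lemma:edge_insertion_unsaturated}. Case (ii) (exactly one endpoint \sat) is handled by Lemma~\ref{lemma:edge_insertion_one_saturated}, which ensures that \findaff\ traverses a single \uppath\ $P_u$ (or $P_v$). Case (iii) (both endpoints \sat) combines Corollary~\ref{corollary:edge_insertion_both_saturated}, Lemma~\ref{lemma:edge_insertion_loose_end}, and Corollary~\ref{corollary:pa_reversed} to conclude that at most three paths $P_u$, $P_v$, and $P_a$ are traversed. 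By Lemma~\ref{lemma:edge_insertion_both_saturated}, these exhaust the set of \laff\ vertices while preserving all remaining suitor lists.

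The key remaining step is to show that, once all \laff\ vertices have been correctly identified and updated, the resulting $S^{(f)}$ equals the $\widetilde{S}$ produced by \staticb. My plan is to use the characterization of $\widetilde{M}$ as a fixed point: by Proposition~\ref{proposition:valid_matching_static} together with the tie-breaking rule, $\widetilde{S}$ is the unique assignment of min-priority queues on $\widetilde{G}$ satisfying (a) the \textit{S-invariant}, and (b) for every vertex $u$ there is no $v \in \widetilde{N}(u) \setminus \widetilde{S}(u)$ with $\omega(u,v) > \omega(u,\widetilde{S}(u).min)$ and $\omega(u,v) > \omega(v,\widetilde{S}(v).min)$. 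Condition (a) is enforced throughout \findaff\ by Lines~\ref{algline:s-invariant1}, \ref{algline:s-invariant2}, and~\ref{algline:prevca_s_invariant} of Algorithm~\ref{alg:dynamic_suitor_find_affected_new}, and condition (b) is precisely the negation of Def.~\ref{def:affected}. Vertices outside the set $X$ of Lemma~\ref{lemma:edge_insertion_both_saturated} already inherit (b) from $S$; along each \uppath\ I would argue inductively, using the decreasing-weight property (Lemma~\ref{lemma:path_property_weight}), simplicity (Lemma~\ref{lemma:path_property_matching_simple}), and the alternation of insertions and removals (Lemma~\ref{lemma:even_odd}), that (b) is restored at each visited vertex by the time the traversal (and the loose-end recursion) completes. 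Uniqueness of the fixed point then forces $S^{(f)} = \widetilde{S}$ and hence $M^{(f)} = \widetilde{M}$; the approximation guarantee is inherited from Proposition~\ref{proposition:valid_matching_static}.

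The main obstacle, I expect, will be the inductive argument that the greedy chain of insert-then-evict operations performed locally by \findaff\ traces the same outcome as the global iteration of \staticb\ on $\widetilde{G}$. The delicate point is reconciling the loose-end recursion with the order in which $P_u$ and $P_v$ are explored, since the statically computed $\widetilde{S}$ is order-independent while the dynamic traversal is not; here I would lean on the decreasing-weight invariant together with Lemma~\ref{lemma:edge_insertion_both_saturated}'s guarantee that no vertex outside $X$ changes its suitor queue, so that any admissible traversal order converges to the same fixed point and thus to $\widetilde{S}$.
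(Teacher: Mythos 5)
Your proposal follows essentially the same route as the paper's proof: invoke the case lemmas (Lemmas~\ref{lemma:edge_insertion_unsaturated}, \ref{lemma:edge_insertion_one_saturated}, \ref{lemma:edge_insertion_both_saturated}) to conclude that all \laff{} vertices are identified and updated and no others change, then argue that the resulting suitor assignment is a fixed point of Eq.~(\ref{eq:dynamic_suitor}) which, by the tie-breaking rule and Proposition~\ref{proposition:valid_matching_static}, is unique and hence equals $\widetilde{S}$. The paper's proof is in fact terser than your plan — it simply asserts the uniqueness of the no-better-suitor configuration rather than carrying out the inductive traversal argument you anticipate as the main obstacle — so your added detail (and the explicit treatment of the case where the insertion condition of Lemma~\ref{lemma:new_edge_def_1_new} fails) only strengthens the argument.
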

\begin{proof}
    Let $e$ be the edge inserted into $G = (V, E)$ \ie $\widetilde{G}= (V, E \cup \{e\})$. After \staticb\ is finished, Eq.~(\ref{eq:static_suitor}) returns \texttt{None} for every node in $\widetilde{G}$. In Lemmas~\ref{lemma:edge_insertion_unsaturated},~\ref{lemma:edge_insertion_one_saturated}, and~\ref{lemma:edge_insertion_both_saturated} we showed that the update paths for all possible cases in \textsc{EdgeInsertion}$(e)$ together with \textsc{FindAffected} identify and update all {locally affected} nodes successfully.
    
    Hence, after finishing \textsc{EdgeInsertion}$(e)$, for no vertex in $\widetilde{G}$ a better suitor according to Eq.~(\ref{eq:dynamic_suitor}) exists. Due to the tie-breaking rule, the resulting matching $M^{(f)}$ is unique and therefore equals $\widetilde{M}$. Due to Proposition~\ref{proposition:valid_matching_static}, this $b$-matching is a (1/2)-approximation of a  maximum weight $b$-matching.
    \qed
\end{proof}

\subsection{Edge Removal}
\label{sub:proofs-edge-removal}

Using our results from Section~\ref{sub:appendix-edge-insertion}, we now discuss the behavior of \dynb\ 
when the edge update is a removal, \ie $\widetilde{E} = E \setminus \{u,v\}$ with $ \{u,v\} \in E$. 

\begin{lemma}
    Let $\widetilde{G}=(V,\widetilde{E})=(V,E \setminus \{e\})$ with $e=\{u,v\} \in E$. If $e \notin M$, then $u$ and $v$ are not {locally affected}.
    Otherwise $u$ and $v$ are potentially \laff.
    \label{lemma:edge_removal_basic}
\end{lemma}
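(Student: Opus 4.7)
The plan is to split the argument according to whether the removed edge $e=\{u,v\}$ belongs to $M$, and to use Proposition~\ref{proposition:valid_matching_static} (which characterizes $M$ in terms of the suitor lists $S$) to translate the assumption about $M$ into a statement about $S(u)$ and $S(v)$.

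For the case $e \notin M$, Proposition~\ref{proposition:valid_matching_static} gives $v \notin S(u)$ and $u \notin S(v)$. Consequently, the transition from $G$ to $\widetilde{G}$ requires no modification of the suitor lists, so we may set $S^{(i)}(u) = S(u)$ and $S^{(i)}(v) = S(v)$; the only change is $\widetilde{N}(u) = N(u) \setminus \{v\}$ (symmetrically for $v$). Since \staticb\ has terminated on $G$, Eq.~(\ref{eq:static_suitor}) already returned \texttt{None} for $u$, so no $v' \in N(u) \setminus S(u)$ satisfies the weight condition in Definition~\ref{def:affected}. Because $\widetilde{N}(u) \setminus S^{(i)}(u) \subseteq N(u) \setminus S(u)$, this non-existence carries over unchanged. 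Hence Definition~\ref{def:affected} fails for $u$, and the symmetric argument shows it also fails for $v$, so neither endpoint is \laff.

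For the case $e \in M$, Proposition~\ref{proposition:valid_matching_static} instead yields $v \in S(u)$ and $u \in S(v)$. To keep $S^{(i)}$ consistent with $\widetilde{G}$, the procedure \edgerem\ removes $v$ from $S^{(i)}(u)$ and $u$ from $S^{(i)}(v)$, which leaves both vertices \usat\ in the slot that was previously occupied by $e$. Whether $u$ is genuinely \laff\ now depends on its updated neighborhood: if there exists $v' \in \widetilde{N}(u) \setminus S^{(i)}(u)$ with $\omega(u,v') > \omega(v', S^{(i)}(v').min)$ and $\omega(u,v') > \omega(u, S^{(i)}(u).min)$, then Definition~\ref{def:affected} is satisfied; otherwise it is not. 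Existence of such a $v'$ is not guaranteed (for example, if $u$ has no further neighbors of sufficient weight), which is precisely what the qualifier \emph{potentially} captures. The same reasoning applies to $v$, which justifies invoking \findaff\ on both endpoints in \edgerem: the routine itself verifies the definition and either starts an update path or exits immediately.

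The main obstacle here is conceptual rather than computational. Formally, the lemma is almost immediate once Proposition~\ref{proposition:valid_matching_static} has been imported; the care needed is in articulating the \emph{potentially} qualifier (so that the statement does not overcommit) while still making it strong enough to justify that \edgerem\ need only ever call \findaff\ on the endpoints of the removed edge and not on any other vertex. This separation between the $e\notin M$ and $e\in M$ cases is the entry point for the subsequent analysis of update paths arising from removals, in parallel to the insertion analysis in Section~\ref{sub:appendix-edge-insertion}.
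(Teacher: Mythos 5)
Your proposal is correct and follows essentially the same route as the paper's proof: in the $e\notin M$ case you argue that the suitor lists are unchanged and the candidate set in Definition~\ref{def:affected} only shrinks, so the \texttt{None} outcome guaranteed by the termination of \staticb\ carries over; in the $e\in M$ case you observe that removing the mutual suitor entries leaves both endpoints \usat, so a better candidate \emph{may} now exist but need not. Your explicit appeal to Proposition~\ref{proposition:valid_matching_static} to translate $e\in M$ into $v\in S(u)\wedge u\in S(v)$ is a small elaboration the paper leaves implicit, but the argument is the same.
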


\begin{proof}
    If $e =\{u,v\}\notin M$, then the removal clearly does not change the suitor lists of $u$/$v$ nor the saturation state.
    Also $\widetilde{N}(u)$ $\left[\widetilde{N}(v)\right]$ does not contain any new neighbor.
    Therefore ${s}$ according to Definition \ref{def:affected} is \texttt{None}; otherwise it would have been
    picked by \staticb\ for $S(u)$ $\left[S(v)\right]$ in the initial $b$-matching.

    Let now $e =\{u,v\}\in M$; since the proof is symmetric, we focus only on $u$.
    After removing $u$ from $S^{(i)}(v)$, $v$ becomes {unsaturated}.
    Before the removal, $u$ might have been either \usat\ or \sat.
    Let us assume, $u$ is \sat. Then, the minimum of $u$ has changed from $S(v).min$ to \texttt{None}.
    As a result, there might exist a vertex $x \in \widetilde{N}(u) \setminus S^{(i)}(u)$ such that $ \omega\left(u,x\right) > \omega\left(x, S^{(i)}(x).min\right)$; then $u$ is affected according to Definition~\ref{def:affected}.
\qed 
\end{proof}

\begin{algorithm}[tb]
    \begin{algorithmic}[1]
        \begin{scriptsize}
            \Procedure{EdgeRemoval}{$\widetilde{G}$, $\{u, v\}$}
            \State \textbf{Input:} Graph $\widetilde{G}=(V,\widetilde{E})$, an edge  $\{u, v\} \in E$
            \State \textbf{Output:} New $(1/2)$-approximate b-Matching $M^{(f)}$
            \If{ $u \in S^{(i)}(v) \textbf{ and } v \in S^{(i)}(u)$} \Comment{Lemma~\ref{lemma:edge_removal_basic}} \label{algline:edge_removal_exclusion}
            \State $S^{(i)}(u).remove(v)$ 
            \State $S^{(i)}(v).remove(u)$ %\Comment{Maintain \textit{S-invariant}}
            \State $\textsc{FindAffected}(u)$ \label{algline:edge_removal_u}
            % \State $M^{(i)} \gets \{\{u,v \} \in \widetilde{E} \text{ s.t } v \in S^{(i)}(u) \wedge u \in S^{(i)}(v)\}$
            \State $\textsc{FindAffected}(v)$ \label{algline:edge_removal_v}
            \State $M^{(i)} \gets \{\{u^\prime,v^\prime \} \in \widetilde{E} \text{ s.t } v^\prime \in S^{(i)}(u^\prime) \wedge u^\prime \in S^{(i)}(v^\prime)\}$
            \EndIf
            \State $M^{(f)} \gets M^{(i)}$
            \EndProcedure
        \end{scriptsize}
    \end{algorithmic}
    \caption{Dynamic Edge Removal}
    \label{alg:dynamic_edge_removal_new}
\end{algorithm}

Similarly to the edge insertion case, we use Lemma~\ref{lemma:edge_removal_basic} to derive the general program flow of \edgerem.
The input is given by an already updated graph with edge set $\widetilde{E} = E \setminus \{u, v\}$ and the removed edge $\{u, v\}$.
Before processing the endpoint nodes of the edge, the algorithm only processes nodes in $M^{(i)}$ (line \ref{algline:edge_removal_exclusion}) exploiting Lemma~\ref{lemma:edge_removal_basic}.
Different from \textsc{EdgeInsertion}, $u$ and $v$ are only \emph{potentially} \laff~-- even if the removed $\{u, v\}$ was originally part of the $b$-matching $M$.
This leads again to three cases -- which address the \laff\ state of the involved endpoints of the removed edge $\{u,v\}$: \textbf{(i)} both $u$ and $v$ are not \laff\ (Lemma \ref{lemma:edge_removal_unaffected}), \textbf{(ii)} only one of them is \laff\ (Lemma \ref{lemma:edge_removal_one_affected}) and \textbf{(iii)} both are \laff\ (Lemma~\ref{lemma:edge_removal_loose_end} and~\ref{lemma:edge_removal_both_affected}).

\subsubsection{Case \textbf{(i)}: Both $u$ and $v$ are not \laff.} 
In this case, no new suitors can be found for $u$ and $v$:
\begin{lemma}
    Let $e = \{u,v\} \in E$ and $\widetilde{G}=(V,\widetilde{E} = E \setminus \{u,v\})$. If both $u$ and $v$ are not \laff\ after the removal of $e$, then $M^{(f)} = \widetilde{M} = M \setminus \{u,v\}$ and no other vertices are locally affected.
    \label{lemma:edge_removal_unaffected}
\end{lemma}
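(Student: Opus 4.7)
The plan is to split into two cases depending on whether the removed edge $e = \{u,v\}$ belongs to $M$, and to reduce each case to the characterization that the final state $S^{(f)}$ satisfies the fixed-point condition of Eq.~(\ref{eq:static_suitor}) on $\widetilde{G}$, i.e.\ no vertex has a better suitor available.

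First, consider the case $e \notin M$. By Lemma~\ref{lemma:edge_removal_basic}, $u$ and $v$ are not locally affected by the removal. Moreover, since $e \notin M$, we have $v \notin S(u)$ or $u \notin S(v)$; by the \textit{S-invariant} that holds after \staticb, both conditions fail, so the guard in line~\ref{algline:edge_removal_exclusion} of \textsc{EdgeRemoval} is false and the algorithm returns with $M^{(f)} = M = M \setminus \{u,v\}$. To see that this equals $\widetilde{M}$, note that every vertex $x \in V$ has the same neighborhood in $\widetilde{G}$ as in $G$ with respect to edges in $M$, and Eq.~(\ref{eq:static_suitor}) returns \texttt{None} for every vertex under $S = S^{(f)}$ (inherited from \staticb\ on $G$), so the tie-breaking uniqueness (used in Proposition~\ref{proposition:edge_insertion_final}) forces $\widetilde{M} = M = M \setminus \{u,v\}$.

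Next, consider the case $e \in M$. Then \textsc{EdgeRemoval} enters the conditional, removes $v$ from $S^{(i)}(u)$ and $u$ from $S^{(i)}(v)$, so $M^{(i)} = M \setminus \{u,v\}$. Since $u$ is not locally affected by assumption, the $\argmax$ in line~\ref{algline:check_def_affected} of \textsc{FindAffected}$(u)$ returns \texttt{None} on the first iteration, and the loop terminates at line~\ref{algline:break} without modifying any suitor list; in particular no $looseEnd$ is produced. The same argument applies to \textsc{FindAffected}$(v)$. Therefore $M^{(f)} = M^{(i)} = M \setminus \{u,v\}$ and the suitor lists of all vertices in $V \setminus \{u,v\}$ are unchanged, so no other vertex is locally affected.

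Finally, to conclude $M^{(f)} = \widetilde{M}$, I would argue as in Proposition~\ref{proposition:edge_insertion_final}: after the procedure, for every vertex $x \in V$ the set $\widetilde{N}(x) \setminus S^{(f)}(x)$ contains no candidate satisfying Def.~\ref{def:affected}. For $x \notin \{u,v\}$ this follows because neither $\widetilde{N}(x)$ nor $S^{(f)}(x)$ nor the relevant $S^{(f)}(y).min$ values have changed with respect to the state after \staticb\ on $G$, where Eq.~(\ref{eq:static_suitor}) already returned \texttt{None}. For $x \in \{u,v\}$ it is exactly the hypothesis of the case. Hence $S^{(f)}$ satisfies the fixed-point condition of \staticb\ on $\widetilde{G}$, and by tie-breaking uniqueness $M^{(f)} = \widetilde{M}$. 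The main (mild) obstacle is the last step: justifying that the fixed-point condition plus the tie-breaking rule uniquely determines the $b$-matching, which is exactly the tool already used for edge insertion and can be invoked here verbatim.
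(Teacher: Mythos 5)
Your proposal is correct and follows essentially the same route as the paper's proof: since neither endpoint is \laff, the calls to \findaff\ terminate immediately, only $S^{(i)}(u)$ and $S^{(i)}(v)$ change, and hence $M^{(f)} = M \setminus \{e\} = \widetilde{M}$. You are somewhat more careful than the paper in two places --- the explicit split on $e \in M$ versus $e \notin M$ (which the paper delegates to Lemma~\ref{lemma:edge_removal_basic} and the guard in \textsc{EdgeRemoval}) and the justification of $\widetilde{M} = M \setminus \{e\}$ via the fixed-point condition plus tie-breaking uniqueness, which the paper merely asserts here but uses explicitly in Propositions~\ref{proposition:edge_insertion_final} and~\ref{proposition:edge_removal_final}.
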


\begin{proof}
    In case both $u$ are $v$ are not {locally affected}, \ie ${s}(u) = {s}(v) = \texttt{None}$ according to Definition~\ref{def:affected}, the matching produced by \staticb\ is $\widetilde{M} = M \setminus \{u,v\}$.
    Moreover, this also means that both calls to \findaff\ (in lines~\ref{algline:edge_removal_u} and~\ref{algline:edge_removal_v} of \edgeremParam{u}{v}) do not find potential new partners for $u$ and $v$  (line~\ref{algline:check_def_affected} in \textsc{FindAffected}).
    As a result, only $S^{(i)}(u)$ and $S^{(i)}(v)$ are changed and therefore $M^{(f)} = \widetilde{M}$.
\qed
\end{proof}

\subsubsection{Case \textbf{(ii)}: One endpoint is \laff} 
Similarly to \edgeins, after removing $u$ from $S^{(i)}(u)$ and vice versa, \edgerem\ calls \findaff\ for both endpoints.
If only one vertex is \laff, then one of the traversed update paths is empty.
The other update path is similar to the one in Lemmma~\ref{lemma:edge_insertion_one_saturated};
the only change is that the \uppath\ starts at the \laff\ node $u$ [$v$] instead of $startU$ [$startV$].

\begin{lemma}
    Let $e = \{u,v\} \in E$ and $\widetilde{G}=(V,\widetilde{E} = E \setminus \{u,v\})$. If only $u$ $[v]$ is \laff\ after the edge removal,
    \edgeremParam{u}{v}\ traverses one non-empty update path $P_u$ $\left[P_v\right]$.
    % with decreasing edge weights starting from $u$ $\left[v\right]$.
    \label{lemma:edge_removal_one_affected}
\end{lemma}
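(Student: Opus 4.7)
The plan is to exploit the symmetry in \edgeremParam{u}{v}, which calls \findaff\ once for $u$ and once for $v$ after removing the edge from both suitor queues. Without loss of generality, I assume that $u$ is the \laff\ endpoint while $v$ is not; the argument for the opposite case is identical after swapping the labels. Note that by Lemma~\ref{lemma:edge_removal_basic}, only endpoints of edges in $M$ can become \laff\ upon removal, so in the regime of the lemma we already have $\{u,v\}\in M$, which justifies that \edgerem\ actually enters the branch on line~\ref{algline:edge_removal_exclusion} and performs both \findaff\ calls.

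First I would handle the call \textsc{FindAffected}$(v)$ and show that it produces an empty update path. Since by assumption $v$ is not \laff\ after the edge removal, Definition~\ref{def:affected} implies that the \textit{argmax} in line~\ref{algline:check_def_affected} evaluates to \texttt{None} already in the first iteration of the main loop of \findaff. Therefore the algorithm takes the break in line~\ref{algline:break} before ever modifying any suitor queue, and by Definition~\ref{definition:update_path} the corresponding update path $P_v$ is empty (of length $q=0$), so no vertices or edges are added to it.

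Next I would show that \textsc{FindAffected}$(u)$ traverses a non-empty path $P_u$. Since $u$ is \laff, the same \textit{argmax} on line~\ref{algline:check_def_affected} returns a non-\texttt{None} candidate $ca=p^{(u)}_1 \in \widetilde{N}(u)$; in particular $p^{(u)}_1 \neq v$ because $v \notin \widetilde{N}(u)$ after the removal. The algorithm then updates both suitor queues on lines~\ref{algline:s-invariant1}--\ref{algline:s-invariant2}, which adds the edge $\{u,p^{(u)}_1\}$ to $P_u$, so $P_u$ has length at least one. The remainder of the traversal proceeds exactly as in the main-loop analysis already used for Lemma~\ref{lemma:edge_insertion_one_saturated}: whenever $ca$ was saturated before the insertion, its evicted minimum becomes the new $cu$ and the path is extended; otherwise the loop terminates. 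Lemmas~\ref{lemma:path_property_weight} and~\ref{lemma:path_property_matching_simple} then guarantee that $P_u$ is simple and has strictly decreasing edge weights.

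I expect the main obstacle to be a very small but easily overlooked point: confirming that no interaction with $v$ during the traversal of $P_u$ can falsify the conclusion. Since $v$ has been removed from $\widetilde{N}(u)$ and is not \laff, $v$ cannot appear as a candidate $ca$ on $P_u$ unless some vertex on the path strictly prefers $v$ to its current minimum suitor; but even if this happens, the S-invariant maintenance in lines~\ref{algline:s-invariant1}--\ref{algline:prevca_cu} still yields a single, simple path rooted at $u$, and the separate call on $v$ remains empty because its first \textit{argmax} already returned \texttt{None}. Hence exactly one non-empty update path, namely $P_u$, is produced by \edgeremParam{u}{v}, completing the proof. \qed
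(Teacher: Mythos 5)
Your proof is correct and follows essentially the same route as the paper's: argue by symmetry, observe that the non-\laff\ endpoint yields an empty update path because the \emph{argmax} returns \texttt{None} immediately, and note that the \laff\ endpoint produces at least one edge, after which the traversal reduces to the analysis of Lemma~\ref{lemma:edge_insertion_one_saturated}. The extra details you supply (invoking Lemma~\ref{lemma:edge_removal_basic} to justify that the guarded branch of \edgeremParam{u}{v} is actually entered, and observing that $v \notin \widetilde{N}(u)$ so $v$ cannot reappear as the first candidate) are sound and only strengthen the argument.
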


\begin{proof}
    Since the proof is symmetric for both cases, we assume w.l.o.g.\ that $u$ is the \laff\ vertex, while $v$ is not \laff.
    According to Definition~\ref{def:affected} and~\ref{definition:update_path}, $P_v$ is an empty \uppath, since it only consists of its starting vertex $v$ (as called from \edgerem). 
    Since $u$ is \laff, $P_u$ consists at least of two vertices: starting vertex $u$ (as called from \edgeremParam{u}{v}) and a not \texttt{None} vertex $ca$ as identified in line~\ref{algline:check_def_affected}. In case $ca$ is already \sat\ before maintaining the {S-invariant} in \findaff, the path continues in the same way as described in Lemma~\ref{lemma:edge_insertion_one_saturated}. \drop{As a result, the edge weights are decreasing in $P_u$.}
\qed
\end{proof}

\subsubsection{Case \textbf{(iii)}: Both $u$ and $v$ are \laff} 

Naturally, if $u$ and $v$ are both \laff\ after the edge removal, we traverse at least two non-empty update paths $P_u$ and $P_v$. 
Analogously to edge insertion, we can derive this as a consequence of Lemma~\ref{lemma:edge_removal_one_affected}.

\begin{corollary}
    Let $e = \{u,v\} \in E$ and $\widetilde{G}=(V,\widetilde{E} = E \setminus \{u,v\})$. If both $u$ and $v$ are \laff\ after the edge removal, then two non-empty update paths $P_u$ and $P_v$ are traversed; they start at $u$ and $v$, respectively.
    \label{corollary:edge_removal_both_affected}
\end{corollary}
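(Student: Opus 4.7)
The plan is to derive the statement directly from Lemma~\ref{lemma:edge_removal_one_affected} by applying it twice, once to each endpoint of the removed edge. After $\{u,v\}$ is excised from the suitor queues in \edgeremParam{u}{v}, the algorithm invokes \textsc{FindAffected} on $u$ (line~\ref{algline:edge_removal_u}) and on $v$ (line~\ref{algline:edge_removal_v}). The hypothesis that $u$ is \laff\ means that the $\argmax$ in line~\ref{algline:check_def_affected} of \textsc{FindAffected} returns a non-\texttt{None} candidate at the very first iteration; by Definition~\ref{definition:update_path} this already yields a path $P_u$ of length at least one whose first vertex is $p^{(u)}_0 = u$. The symmetric argument, applied to $v$, produces $P_v$ with $p^{(v)}_0 = v$.

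The only subtlety, and the place where I expect the main obstacle, is the second call: \textsc{FindAffected}$(v)$ operates on the intermediate state $S^{(i)}$ that has already been rewritten by \textsc{FindAffected}$(u)$, so I have to argue that $v$ is still \laff\ at the moment the second call starts. My plan is to observe that the removal of $\{u,v\}$ happens before either call, so $v \notin \widetilde{N}(u)$ and $v$ cannot be picked as the candidate $ca$ in the first iteration of \textsc{FindAffected}$(u)$. Combined with Lemmas~\ref{lemma:path_property_weight} and~\ref{lemma:path_property_matching_simple}, $P_u$ is simple with strictly decreasing edge weights, so any change $P_u$ induces on $S^{(i)}(v)$ can only insert a new suitor of weight strictly less than $\omega(u,v)$ or remove such a suitor. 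In either case, the witness $s_v$ that made $v$ \laff\ at the moment of removal remains an eligible witness when \textsc{FindAffected}$(v)$ begins, because its weight exceeds that of any vertex on $P_u$ that could have entered $S^{(i)}(v)$.

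Given this persistence, Lemma~\ref{lemma:edge_removal_one_affected} applies verbatim to the second call and delivers a non-empty $P_v$ starting at $v$. Putting the two applications together yields the claim. The bulk of the argument has already been absorbed into Lemma~\ref{lemma:edge_removal_one_affected} and the update-path properties, so the corollary is essentially a bookkeeping step; the only nontrivial piece is the persistence observation just sketched, and I would present it as a short paragraph before citing the lemma twice.
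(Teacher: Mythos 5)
Your first paragraph matches what the paper does: it derives the corollary by invoking Lemma~\ref{lemma:edge_removal_one_affected} once for each endpoint, and the paper offers no further justification. The interesting part of your proposal is that you noticed the real issue -- that \findaff$(v)$ runs on a state already rewritten by \findaff$(u)$ -- but the persistence argument you sketch to close that gap does not work. Whether $v$ is still \laff\ when the second call starts is governed by Definition~\ref{def:affected}, which depends not only on $S^{(i)}(v)$ but also on $\omega\bigl(s_v, S^{(i)}(s_v).min\bigr)$ for the prospective witness $s_v$. Your argument only controls the changes to $S^{(i)}(v)$; it says nothing about the case where $P_u$ passes through (or ends at) $s_v$ and raises $S^{(i)}(s_v).min$ above $\omega(v,s_v)$, which is exactly how the witness gets destroyed.

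Concretely, take $b=1$ and the path $t - s - u$, $s - v$, $u - v$ with $\omega(u,v)=10$, $\omega(u,s)=6$, $\omega(v,s)=5$, $\omega(s,t)=3$, so that $M=\{\{u,v\},\{s,t\}\}$. After removing $\{u,v\}$, both $u$ and $v$ are \laff\ (each has the witness $s$, whose min is $3$). But \findaff$(u)$ matches $u$ with $s$ and evicts $t$, so when \findaff$(v)$ runs, line~\ref{algline:check_def_affected} finds $\omega(v,s)=5 \not> \omega(s,u)=6$ and returns \none: $P_v$ is empty, contradicting the literal statement of the corollary (the final matching $\{u,s\}$ is nevertheless correct). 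So the persistence claim is false in general, and the corollary only holds if \quot{\laff\ after the edge removal} is read as \quot{\laff\ at the moment the respective call to \findaff\ is issued} -- which is also the only reading used downstream, e.g.\ in property $(iii)$ of Lemma~\ref{lemma:edge_removal_loose_end}, which explicitly conditions on the state after $P_u$ has been traversed. The fix is not a better persistence argument but a weakening or reinterpretation of the hypothesis; as written, neither your proof nor the paper's one-line justification establishes that both paths are non-empty.
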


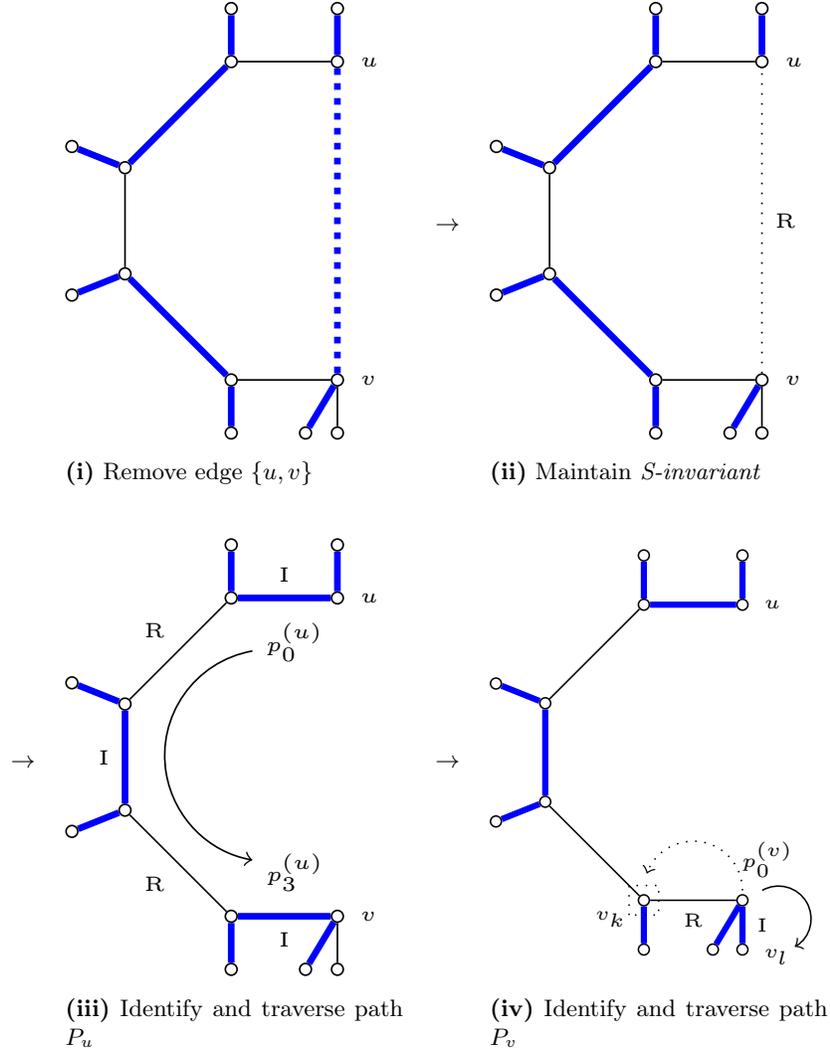
\begin{figure*}
    \begin{center}
        \begin{tblr}{Q[0.3cm,valign=m] Q[\looseEndSize, valign=h] Q[0.3cm,valign=m] Q[\looseEndSize, valign=h]}
            & \resizebox{\looseEndSize}{!}{\begin{tikzpicture}[scale=0.9]
    \tikzstyle{node}=[draw,circle,inner sep=1pt]
    \tikzstyle{active}=[draw,circle,inner sep=1pt,ultra thick,red]
    \tikzstyle{legend}=[minimum width=2.5cm,minimum height=1cm]

    %Nodes
    \node[node] (v) at (0,0) {};
    \node[right= of v, xshift=-3em] {\tiny{$v$}};
    \node[node] (vadd) at (0,-0.5) {};
    \node[node] (vadd2) at (-0.3,-0.5) {};

    \node[node] (startV) at (-1,0) {};
    \node[node] (startVadd1) at (-1,-0.5) {};

    \node[node] (v1) at (-2,1) {};
    \node[node] (v1add) at (-2.5,0.8) {};

    \node[node] (v2) at (-2,2) {};
    \node[node] (v2add) at (-2.5,2.2) {};

    \node[node] (startU) at (-1,3) {};
    \node[node] (startUadd) at (-1,3.5) {};
    
    \node[node] (u) at (0,3) {};
    \node[right= of u, xshift=-3em] {\tiny{$u$}};
    \node[node] (uadd) at (0, 3.5) {};

    % Edges
    \draw[ultra thick, blue] (u) edge (uadd);
    \draw[black] (u) edge (startU);
    \draw[ultra thick, blue] (startU) edge (startUadd);
    \draw[ultra thick, blue] (startU) edge (v2);
    \draw[black] (v2) edge (v1);
    \draw[ultra thick, blue] (v2) edge (v2add);
    \draw[ultra thick, blue] (v1) edge (v1add);
    \draw[ultra thick, blue] (v1) edge (startV);
    \draw[ultra thick, blue] (startV) edge (startVadd1);
    \draw[black] (startV) edge (v);
    \draw[black] (v) edge (vadd);
    \draw[ultra thick, blue] (v) edge (vadd2);
    \draw[ultra thick, blue, dotted] (v) edge (u);

\end{tikzpicture}} & $\rightarrow$ & \resizebox{\looseEndSize}{!}{\begin{tikzpicture}[scale=0.9]
    \tikzstyle{node}=[draw,circle,inner sep=1pt]
    \tikzstyle{active}=[draw,circle,inner sep=1pt,ultra thick,red]
    \tikzstyle{legend}=[minimum width=2.5cm,minimum height=1cm]

    %Nodes
    \node[node] (v) at (0,0) {};
    \node[right= of v, xshift=-3em] {\tiny{$v$}};
    \node[node] (vadd) at (0,-0.5) {};
    \node[node] (vadd2) at (-0.3,-0.5) {};

    \node[node] (startV) at (-1,0) {};
    \node[node] (startVadd1) at (-1,-0.5) {};

    \node[node] (v1) at (-2,1) {};
    \node[node] (v1add) at (-2.5,0.8) {};

    \node[node] (v2) at (-2,2) {};
    \node[node] (v2add) at (-2.5,2.2) {};

    \node[node] (startU) at (-1,3) {};
    \node[node] (startUadd) at (-1,3.5) {};
    
    \node[node] (u) at (0,3) {};
    \node[right= of u, xshift=-3em] {\tiny{$u$}};
    \node[node] (uadd) at (0, 3.5) {};

    % Edges
    \draw[ultra thick, blue] (u) edge (uadd);
    \draw[black] (u) edge (startU);
    \draw[ultra thick, blue] (startU) edge (startUadd);
    \draw[ultra thick, blue] (startU) edge (v2);
    \draw[black] (v2) edge (v1);
    \draw[ultra thick, blue] (v2) edge (v2add);
    \draw[ultra thick, blue] (v1) edge (v1add);
    \draw[ultra thick, blue] (v1) edge (startV);
    \draw[ultra thick, blue] (startV) edge (startVadd1);
    \draw[black] (startV) edge (v);
    \draw[black] (v) edge (vadd);
    \draw[ultra thick, blue] (v) edge (vadd2);
    \draw[black, dotted] (v) edge node[font=\tiny, right, black] {R} (u);

\end{tikzpicture}} \\
            & \textbf{(\romannum{1})} Remove edge $\{u,v\}$ & & \textbf{(\romannum{2})} Maintain \textit{S-invariant} \\
            & & & &\\
            $\rightarrow$ & \resizebox{\looseEndSize}{!}{\begin{tikzpicture}[scale=0.9]
    \tikzstyle{node}=[draw,circle,inner sep=1pt]
    \tikzstyle{active}=[draw,circle,inner sep=1pt,ultra thick,red]
    \tikzstyle{legend}=[minimum width=2.5cm,minimum height=1cm]

    %Nodes
    \node[node] (v) at (0,0) {};
    \node[right= of v, xshift=-3em] {\tiny{$v$}};
    \node[node] (vadd) at (0,-0.5) {};
    \node[node] (vadd2) at (-0.3,-0.5) {};

    \node[node] (startV) at (-1,0) {};
    \node[above right= of v, yshift=-2.9em, xshift=-5.4em] {\tiny{$p^{(u)}_3$}};
    \node[node] (startVadd1) at (-1,-0.5) {};

    \node[node] (v1) at (-2,1) {};
    \node[node] (v1add) at (-2.5,0.8) {};

    \node[node] (v2) at (-2,2) {};
    \node[node] (v2add) at (-2.5,2.2) {};

    \node[node] (startU) at (-1,3) {};
    \node[below left= of u, yshift=2.9em, xshift=3.2em] {\tiny{$p^{(u)}_0$}};
    \node[node] (startUadd) at (-1,3.5) {};
    
    \node[node] (u) at (0,3) {};
    \node[right= of u, xshift=-3em] {\tiny{$u$}};
    \node[node] (uadd) at (0, 3.5) {};

    % Edges
    \draw[ultra thick, blue] (u) edge (uadd);
    \draw[ultra thick, blue] (u) edge node[font=\tiny, above, black] {I} (startU);
    \draw[ultra thick, blue] (startU) edge (startUadd);
    \draw[black] (startU) edge node[font=\tiny, above left, black] {R} (v2);
    \draw[ultra thick, blue] (v2) edge node[font=\tiny, left, black] {I} (v1);
    \draw[ultra thick, blue] (v2) edge (v2add);
    \draw[ultra thick, blue] (v1) edge (v1add);
    \draw[black] (v1) edge node[font=\tiny, below left, black] {R} (startV);
    \draw[ultra thick, blue] (startV) edge (startVadd1);
    \draw[ultra thick, blue] (startV) edge node[font=\tiny, below, black] {I} (v);
    \draw[black] (v) edge (vadd);
    \draw[ultra thick, blue] (v) edge (vadd2);

    \draw[->] (-0.8,2.50) arc (100:260:1);

\end{tikzpicture}} & $\rightarrow$ & \resizebox{\looseEndSize}{!}{\begin{tikzpicture}[scale=0.9]
    \tikzstyle{node}=[draw,circle,inner sep=1pt]
    \tikzstyle{active}=[draw,circle,inner sep=1pt,ultra thick,red]
    \tikzstyle{legend}=[minimum width=2.5cm,minimum height=1cm]

    %Nodes
    \node[node] (v) at (0,0) {};
    % \node[right= of v, xshift=-3em] {\tiny{$v$}};
    \node[node] (vadd) at (0,-0.5) {};
    \node[node] (vadd2) at (-0.3,-0.5) {};

    \node[node] (startV) at (-1,0) {};
    \node[above right= of startV, yshift=-2.9em, xshift=-0.8em] {\tiny{$p^{(v)}_0$}};
    \node[node] (startVadd1) at (-1,-0.5) {};

    \node[node] (v1) at (-2,1) {};
    \node[right= of v1, xshift=-2.2em, yshift=-3.3em] {\tiny{$v_k$}};
    \node[node] (v1add) at (-2.5,0.8) {};

    \node[node] (v2) at (-2,2) {};
    \node[node] (v2add) at (-2.5,2.2) {};

    \node[node] (startU) at (-1,3) {};
    % \node[below right= of startU, yshift=2.9em, xshift=-3.5em] {\tiny{$p^{(u)}_0$}};
    \node[node] (startUadd) at (-1,3.5) {};
    
    \node[node] (u) at (0,3) {};
    \node[right= of u, xshift=-3em] {\tiny{$u$}};
    \node[node] (uadd) at (0, 3.5) {};

    % Edges
    \draw[ultra thick, blue] (u) edge (uadd);
    \draw[ultra thick, blue] (u) edge (startU);
    \draw[ultra thick, blue] (startU) edge (startUadd);
    \draw[black] (startU) edge (v2);
    \draw[ultra thick, blue] (v2) edge (v1);
    \draw[ultra thick, blue] (v2) edge (v2add);
    \draw[ultra thick, blue] (v1) edge (v1add);
    \draw[black] (v1) edge (startV);
    \draw[ultra thick, blue] (startV) edge (startVadd1);
    \node[right= of vadd, xshift=-3.0em, yshift=-0.2em] {\tiny{$v_l$}};
    \draw[ultra thick, blue] (v) edge node[font=\tiny, right, black] {I} (vadd);
    \draw[black] (startV) edge node[font=\tiny, below, black] {R} (v);
    \draw[ultra thick, blue] (v) edge (vadd2);

    \draw[->, dotted] (0,0.1) arc (0:160:0.5);
    \draw[->] (0.2,0.1) arc (120:-60:0.33);

    \draw[black, dotted] (-1.15,-0.15) rectangle ++(0.3,0.3);

\end{tikzpicture}}\\
            & \textbf{(\romannum{3})} Identify and traverse path $P_u$ & & \textbf{(\romannum{4})} Identify and traverse path $P_v$ \\
        \end{tblr}
        \caption{Example of the creation of a loose end for edge removal ($b=2$). $I$ denotes an insertion into $M^{(i)}$, while $R$ represents a removal from $M^{(i)}$. From left to right: \textbf{(\romannum{1})} Remove edge $\{u,v\}$.
        \textbf{(\romannum{2})} Update suitor lists $S^{(i)}$ of $u$ and $v$ and remove both nodes from the previous $S(u).min$ and $S(v).min$. $u$ and $v$ are \laff\ after the removal. \textbf{(\romannum{3})} Identify and traverse update path $P_u =(p^{(u)}_0=u,\dots,p^{(u)}_5=v)$, which ends at $v$. \textbf{(\romannum{4})} Identify and traverse update path $P_v$ with $p^{(v)}_0=v$. Assume $\omega(p^{(v)}_0,v_l) > \omega(p^{(v)}_0,v_k)$, then $v_l$ is chosen as $p^{(v)}_1$. $P_v$ continues at $v_l$, leaving $v_k$ \usat\ and creating a loose end.}
        \label{fig:removal_loose_end}
    \end{center}
\end{figure*}

Following the same arguments as for edge insertion, a potential overlap between $P_u$ and $P_v$ can lead to a situation where an update path leads to another update path. 
This is again detected by \findaff\ in line~\ref{algline:findaffected_loose_end}, where the starting point of this additional update path is initalized as $looseEnd$. Note that this situation is again very similar to Lemma~\ref{lemma:edge_insertion_loose_end} on edge insertion. As a result, we again refer to this additional update path as $P_a$. An example of an update path $P_a$ during edge removal can be seen in Figure~\ref{fig:removal_loose_end}.

\begin{lemma}
    Let $P_u$ and $P_v$ be two update paths resulting from the removal of $e = \{u,v\}$ where $P_u$ ends at $v$ and is traversed before $P_v$. Then an update path $P_a$ is traversed starting at $p^{(u)}_{|P_u|-2}$  iff  the following properties hold: %$(i)$ $P_u$  ends at $startV$, 
    $(i)$ $\omega(p^{(u)}_{|P_u|-2},v) \leq \omega(v,S^{(i)}(v).min)$, $(ii)$ the length of $P_u$ is odd and $(iii)$ $v$ is \sat\ after traversing $P_u$.
    \label{lemma:edge_removal_loose_end}
\end{lemma}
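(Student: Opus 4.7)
The plan is to mirror the argument of Lemma~\ref{lemma:edge_insertion_loose_end} essentially verbatim, since the internal logic of \findaff\ does not distinguish whether its caller is \edgeins\ or \edgerem; only the starting vertices of $P_u$ and $P_v$ differ. Concretely, in the removal setting the paths are seeded at $u$ and $v$ themselves (rather than at $startU$/$startV$), because by Lemma~\ref{lemma:edge_removal_basic} it is $u$ and $v$ that become candidates for being \laff\ after the removal, and by Corollary~\ref{corollary:edge_removal_both_affected} two non-empty paths originating there are indeed traversed. Let $P_u = (p^{(u)}_0=u,\dots,p^{(u)}_k=v)$ with $k=|P_u|$.

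For the ``$\Leftarrow$'' direction I would assume (i)--(iii) and trace the execution of \findaff$(v)$. By (ii), $k$ is odd, so the last edge $e^{(u)}_{k-1}$ has even index and, by Lemma~\ref{lemma:even_odd}, corresponds to an insertion into $M^{(i)}$; hence $p^{(u)}_{k-1}$ was pushed into $S^{(i)}(v)$ and, by the S-invariant maintained in lines~\ref{algline:s-invariant1}--\ref{algline:s-invariant2}, $v \in S^{(i)}(p^{(u)}_{k-1})$. Property (iii) guarantees that $v$ is \sat\ when $P_v$ begins, while the hypothesis that $v$ is \laff\ (the precondition for \findaff$(v)$ to extend the path) ensures that line~\ref{algline:check_def_affected} returns a candidate $ca \neq p^{(u)}_{k-1}$. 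Inserting $ca$ into $S^{(i)}(v)$ must then evict the current minimum entry, which by property (i) is exactly $p^{(u)}_{k-1}$; the block at lines~\ref{algline:special}--\ref{algline:special_end} therefore fires with $prevCu = p^{(u)}_{k-1}$ and sets $looseEnd \gets p^{(u)}_{k-1}$. The recursive call in line~\ref{algline:findaffected_loose_end} then launches $P_a$ starting at $p^{(u)}_{|P_u|-2}$, as required.

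For the ``$\Rightarrow$'' direction I would reverse-engineer the occurrence of $P_a$. Since $P_a$ is spawned only in line~\ref{algline:findaffected_loose_end}, some iteration of the main loop must have executed line~\ref{algline:special} with $prevCu = p^{(u)}_{k-1}$. That iteration cannot belong to $P_u$ (whose loose ends are processed before $P_v$ starts) nor to a still later path, so it belongs to $P_v$. The key observation is that every $cu$ visited after the very first iteration of \findaff\ enters that iteration \usat\ (it was just stripped of its minimum suitor in lines~\ref{algline:prevca_s_invariant}--\ref{algline:prevca_cu}), so its insertion in line~\ref{algline:s-invariant1} cannot evict anyone and line~\ref{algline:special} cannot fire. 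Hence the offending iteration is the first one of $P_v$, i.e.\ $cu = v$; this forces $v$ to be \sat\ when $P_v$ starts, giving (iii). The evicted vertex being $p^{(u)}_{k-1}$ forces $p^{(u)}_{k-1} \in S^{(i)}(v)$ at that moment, which in turn requires the last edge of $P_u$ to be an insertion and therefore even-indexed, yielding (ii) via Lemma~\ref{lemma:even_odd}; finally, for $p^{(u)}_{k-1}$ to be the evicted minimum rather than some other suitor of $v$, (i) must hold.

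The main obstacle I anticipate is the ``first iteration only'' step in the converse: one has to argue cleanly that the S-invariant maintenance performed in lines~\ref{algline:prevca_s_invariant}--\ref{algline:prevca_cu} guarantees every subsequent $cu$ on $P_v$ is \usat\ at the moment it is processed, so that $prevCu$ in line~\ref{algline:s-invariant1} is necessarily \texttt{None}. This rules out alternative loose ends and ties the evicted vertex uniquely to $p^{(u)}_{k-1}$; once that is established, conditions (i)--(iii) read off directly from the properties of the first iteration of $P_v$.
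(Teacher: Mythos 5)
Your proposal is correct and follows essentially the same route as the paper, whose own proof of this lemma is simply the observation that the argument of Lemma~\ref{lemma:edge_insertion_loose_end} carries over verbatim after exchanging $startU$ with $u$ and $startV$ with $v$ -- exactly the substitution you carry out, including the correct handling of the modified condition (iii) (saturation here, with local affectedness supplied by Corollary~\ref{corollary:edge_removal_both_affected}). Your explicit justification in the ``$\Rightarrow$'' direction that only the first iteration of $P_v$ can trigger the loose-end branch (since every later $cu$ enters its iteration \usat) is the same mechanism the paper invokes via lines~\ref{algline:prevca_s_invariant} and~\ref{algline:prevca_cu}, just spelled out more carefully.
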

\begin{proof}
The proof for the existence is nearly identical to the proof for Lemma~\ref{lemma:edge_insertion_loose_end}.
One just needs to exchange $startU$ with $u$ and $startV$ with $v$.  
\qed 
\end{proof}

\begin{lemma}
    Let $e = \{u,v\} \in E$ and $\widetilde{G}=(V,\widetilde{E} = E \setminus \{u,v\})$. If both $u$ and $v$ are \laff\ after the edge removal, then all \laff\ vertices are reached during \edgeremParam{u}{v}\ via edges from $X := P_u \cup P_v \cup P_a$. Moreover, the vertices that are not locally affected do not change their suitor lists.
    \label{lemma:edge_removal_both_affected}
\end{lemma}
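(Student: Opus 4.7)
My plan is to mirror the strategy of Lemma \ref{lemma:edge_insertion_both_saturated}, adapting it to the removal setting. The argument splits naturally into two parts: first, that \edgeremParam{u}{v}\ traverses no update paths beyond $P_u$, $P_v$, and $P_a$; second, that every \laff\ vertex lies on one of these paths and that no vertex outside $X$ sees its suitor list modified.

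For the first part, I would argue by contradiction. Suppose a further $looseEnd$ is detected in line \ref{algline:findaffected_loose_end} of \findaff\ beyond the one identified in Lemma \ref{lemma:edge_removal_loose_end}. Then some vertex $v_k$ must be assigned to $cu$ in two distinct iterations of the main loop. Inspecting line \ref{algline:s-invariant1}, the first iteration can leave $v_k$ \sat\ only if $prevCu$ was non-\texttt{None}, in which case the algorithm has already moved on via line \ref{algline:prevca_cu}. But any $cu$ entering the loop through lines \ref{algline:prevca_s_invariant}--\ref{algline:prevca_cu} is immediately \usat\ by construction, so $v_k$ can be \sat\ upon its second visit only if this second visit is the very first iteration of a fresh call to \findaff. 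By Lemma \ref{lemma:edge_removal_loose_end} (and the obvious symmetric variant when $P_v$ is traversed first), this coincides exactly with the starting vertex of $P_a$; hence no additional path can arise.

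For the second part, I would exploit that edge removal can only shrink the neighborhood: the only vertices that become \laff\ through a change in $\widetilde{N}$ are $u$ and $v$ themselves, since their suitor lists are directly altered in \edgeremParam{u}{v}. Any other vertex can become \laff\ only because its suitor list has been modified, and such modifications occur exclusively in lines \ref{algline:s-invariant1}, \ref{algline:s-invariant2}, and \ref{algline:prevca_s_invariant} of \findaff, i.e.\ on vertices belonging to a traversed update path. Combined with the first part, these vertices must lie on $P_u \cup P_v \cup P_a$, and consecutive entries are linked by edges in $X$. When the main loop terminates at $cu$ with $ca = \texttt{None}$ in line \ref{algline:check_def_affected}, the neighborhood search certifies that no further \laff\ neighbor exists, so propagation cannot escape $X$. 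Vertices outside $X$ are therefore never touched, proving both assertions.

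The main obstacle will be the rigorous justification of "at most one extra update path": one has to track precisely when and how a vertex could be revisited across the two outer calls \findaff{(u)} and \findaff{(v)} as well as during the recursive call on $looseEnd$. The strictly decreasing edge weights along an update path (Lemma \ref{lemma:path_property_weight}), together with the \emph{S-invariant} maintained on every update, should be enough to rule out any revisit other than the loose-end mechanism already analysed in Lemma \ref{lemma:edge_removal_loose_end}; making this chain of implications airtight is the technical heart of the proof.
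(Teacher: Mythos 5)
Your proposal is correct and follows essentially the same route as the paper's proof: the same two-part decomposition into (i) ruling out any update path beyond $P_u$, $P_v$, $P_a$ via the ``a revisited vertex can only be saturated if it is the starting vertex of a fresh \findaff\ call, hence the loose end'' argument transferred from Lemma~\ref{lemma:edge_insertion_both_saturated}, and (ii) observing that a vertex can only become \laff\ through a direct neighborhood/suitor change (only $u$, $v$) or through a suitor-list modification, which occurs exclusively on traversed paths. Your remark that removal only shrinks neighborhoods, so the first cause is confined to $u$ and $v$, is a slightly sharper phrasing of the same point the paper makes.
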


\begin{proof}
    We use the same proof structure as for Lemma~\ref{lemma:edge_insertion_both_saturated}, \ie 
    %(i) we have to prove that \laff\ vertices reached by \edgerem\ are connected by edges in $X := P_u \cup P_v \cup P_a$; 
    (i) we have to prove that no other update paths besides $P_u$, $P_v$ and $P_a$ exist, and (ii) that $X$ indeed represents all \laff\ vertices after an edge removal.
    
    The proof for (i) can be directly transferred from Lemma~\ref{lemma:edge_insertion_both_saturated}.
    We already know that $P_u$, $P_v$, and $P_a$ are the only update paths traversed when \findaff\ is called twice for two different vertices. As a result, \edgeremParam{$u$}{$v$} only detects $P_u$, $P_v$, and $P_a$.

    Concerning (ii): In order for a vertex to be \laff, there either exist new neighbors fulfilling Definition~\ref{def:affected} or its suitor list $S^{(i)}$ has changed with respect to $S$.
    The former case is covered in \edgeremParam{$u$}{$v$} by calling \findaff\ for both $u$ and $v$. Since both vertices are known to be \laff, there exists a new suitor in their neighborhood, leading to updates to both $S^{(i)}(u)$ and $S^{(i)}(v)$.
    For the latter case, the update paths traversed by \findaff\ consequently identify all vertices for which $S^{(i)}$ is changed by searching the neighborhood.
    We already covered in the proof of Lemma~\ref{lemma:edge_insertion_both_saturated} that there exists no additional update path besides $P_u$, $P_v$, and $P_a$.
    Since \findaff\ traverses via the neighborhood of $cu$, a \laff\ vertex cannot occur outside one of those \uppath s,
    so that no vertex outside of $X$ changes its suitor lists.
    As a result, \dynb\ successfully updates all \laff\ vertices (and no others) after an edge removal.
    \qed
\end{proof}

\begin{proposition}
    After \edgerem\ is finished, $M^{(f)}$ equals the matching $\widetilde{M}$ computed using \staticb\ on $\widetilde{G}$. $\widetilde{M}$ is a (1/2)-approximation of the MWBM problem.
    \label{proposition:edge_removal_final}  
\end{proposition}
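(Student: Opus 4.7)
The plan is to mirror the proof of Proposition~\ref{proposition:edge_insertion_final} for edge insertion, exploiting the case analysis we have already set up. First I would partition the behavior of \edgeremParam{u}{v} according to the \laff\ status of the two endpoints after the removal: either (i) neither $u$ nor $v$ is \laff, (ii) exactly one of them is, or (iii) both are. Case (i) is handled by Lemma~\ref{lemma:edge_removal_unaffected}, which directly gives $M^{(f)} = M \setminus \{u,v\} = \widetilde{M}$; case (ii) is covered by Lemma~\ref{lemma:edge_removal_one_affected}, which ensures that the unique non-empty \uppath\ traverses exactly the \laff\ vertices reachable from the single affected endpoint; case (iii) is covered by Corollary~\ref{corollary:edge_removal_both_affected} together with Lemma~\ref{lemma:edge_removal_both_affected}, which jointly guarantee that the union $P_u \cup P_v \cup P_a$ reaches every \laff\ vertex, while no vertex outside this set changes its suitor list.

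Second, I would argue that after \edgerem\ terminates, no vertex in $\widetilde{G}$ is \laff\ with respect to $S^{(f)}$. Indeed, \findaff\ only exits its main loop via line~\ref{algline:break}, \ie precisely when the $\argmax$ in line~\ref{algline:check_def_affected} returns \texttt{None} for the current vertex. Any loose end discovered during traversal is re-entered through the recursive call in line~\ref{algline:findaffected_loose_end}, so by the case analysis above every potentially \laff\ vertex has been revisited until Definition~\ref{def:affected} fails everywhere.

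Third, I would compare this local-optimality condition with the termination condition of \staticb\ on $\widetilde{G}$: after the static algorithm finishes, Eq.~(\ref{eq:static_suitor}) likewise returns \texttt{None} for every vertex. Because the tie-breaking rule of Section~\ref{subsec:definition} induces a total order on edge weights, a $b$-matching that is locally optimal in this sense is unique, so $M^{(f)} = \widetilde{M}$. The $(1/2)$-approximation statement then follows immediately by applying Proposition~\ref{proposition:valid_matching_static} to $\widetilde{M}$.

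The hard part is not this concluding assembly itself but rather the groundwork in Lemmas~\ref{lemma:edge_removal_unaffected}--\ref{lemma:edge_removal_both_affected}, and in particular the verification that $P_a$ is the only additional update path that can arise from overlaps between $P_u$ and $P_v$ (Lemma~\ref{lemma:edge_removal_loose_end}). Once that is in place, the present proposition is essentially a one-line consequence: reach all \laff\ vertices, appeal to uniqueness via the tie-breaking rule, and inherit the approximation ratio from the static algorithm.
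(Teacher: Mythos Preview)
Your proposal is correct and follows essentially the same route as the paper: a three-way case split on the \laff\ status of $u$ and $v$ via Lemmas~\ref{lemma:edge_removal_unaffected}, \ref{lemma:edge_removal_one_affected}, and~\ref{lemma:edge_removal_both_affected}, followed by the observation that no vertex admits a better suitor after termination, uniqueness from the tie-breaking rule, and the approximation guarantee from Proposition~\ref{proposition:valid_matching_static}. Your write-up is slightly more explicit about why \findaff\ leaves no \laff\ vertex behind (the exit condition at line~\ref{algline:break} and the recursive loose-end call), but the structure and the lemmas invoked are the same as in the paper.
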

\begin{proof}
    Let $e$ be the edge removed from $G = (V, E)$ \ie $\widetilde{G}= (V, E \setminus \{e\})$. After \staticb\ is finished, Eq.~(\ref{eq:static_suitor}) returns \texttt{None} for every node in $\widetilde{G}$. In Lemmas~\ref{lemma:edge_removal_unaffected},~\ref{lemma:edge_removal_one_affected}, and~\ref{lemma:edge_removal_both_affected} we showed that the update paths for all possible cases in \textsc{EdgeRemoval}$(e)$ together with \textsc{FindAffected} identify and update all \laff\ vertices successfully.
    
    Hence, after finishing \textsc{EdgeRemoval}$(e)$, for no vertex in $\widetilde{G}$ a better suitor according to Eq.~(\ref{eq:dynamic_suitor}) exists. Due to the tie-breaking rule, the resulting matching $M^{(f)}$ is unique and therefore equals $\widetilde{M}$. Following from Proposition~\ref{proposition:valid_matching_static}, this $b$-matching is a (1/2)-approximation of a  maximum weight $b$-matching.
    \qed
\end{proof}

\subsection{\dynb\ -- Batch Updates}
\label{sub:appendix_batch_updates}
As already mentioned in Section~\ref{sub:dynamic_suitor}, \dynb\ also supports batch edge updates by applying multiple edge updates $B=\{e_1, \dots, e_i\}$ in a loop using \edgeins\ (or \edgerem, respectively) on $\widetilde{G}$.
To this end, we define two additional functions: \bedgeins\ (Algorithm~\ref{alg:dynamic_suitor_batch_insertion}) and \bedgerem\ (Algorithm~\ref{alg:dynamic_suitor_batch_removal}).

\begin{algorithm}[tb]
    \begin{algorithmic}[1]
        \begin{scriptsize}
            \Procedure{batchEdgeInsertion}{$G$, $B$}
            \State \textbf{Input:} Graph $G=(V,E)$, a set of edges $B \notin E$
            \State \textbf{Output:} New $(1/2)$-approximate \drop{cardinality} $b$-matching $M^{(f)}$
            \State $\widetilde{E} \gets E$
            \For {$\{u,v \} \in B$}
            \If{$\omega(u,v) > \max\{\omega(u,S(u).min), \omega(v,S(v).min)\}$} \Comment{Lemma \ref{lemma:new_edge_def_1_new}}
            \State $\widetilde{E} \gets \widetilde{E} \cup \{u,v\}$
            \State $\textsc{EdgeInsertion}(\widetilde{G}=(V,\widetilde{E}), \{u,v\})$
            \EndIf
            \EndFor
            \State $M^{(f)} \gets \{\{u^\prime,v^\prime \} \in \widetilde{E} \text{ s.t } v^\prime \in S^{(f)}(u^\prime) \wedge u^\prime \in S^{(f)}(v^\prime)\}$
            \EndProcedure
        \end{scriptsize}
    \end{algorithmic}
    \caption{Dynamic Batch Edge Insertion}
    \label{alg:dynamic_suitor_batch_insertion}
\end{algorithm}

The process of handling multiple edge updates builds upon previous work by Angriman et al.~\cite{angriman2022batch} for
$1$-matchings. Both algorithms benefit from similar time-saving properties.
While the time complexity is determined by the batch size, the empirical running time may benefit from synergies 
by updates non-empty overlaps between \uppath\ of different edge updates.

\begin{algorithm}[tb]
    \begin{algorithmic}[1]
        \begin{scriptsize}
            \Procedure{batchEdgeRemoval}{$G$, $B$}
            \State \textbf{Input:} Graph $G=(V,E)$, a set of edges $B \in E$
            \State \textbf{Output:} New 1/2-approximate \drop{cardinality} $b$-matching $M^{(f)}$
            \State $\widetilde{E} \gets E$
            \For {$\{u,v \} \in B$}
            \State $\widetilde{E} \gets \widetilde{E} \setminus \{u,v\}$
            \State $\textsc{EdgeRemoval}(\widetilde{G}=(V,\widetilde{E}), \{u,v\})$
            % \If{$\omega(u,v) > \max\{\omega(u,S(u).min), \omega(v,S(v).min)\}$} \Comment{Lemma \ref{lemma:new_edge_def_1}}
            % \For {$start, end \gets \sigma\{u, v\}$}  \Comment{Process each endpoint of removed edge}
            % \State $A \gets \textsc{FindAffected}(u, v)$
            % \State $\textsc{UpdateAffected}(A)$
            % \EndFor
            % \EndIf
            \EndFor
            \State $M^{(f)} \gets \{\{u^\prime,v^\prime \} \in \widetilde{E} \text{ s.t } v^\prime \in S^{(f)}(u^\prime) \wedge u^\prime \in S^{(f)}(v^\prime)\}$
            \EndProcedure
        \end{scriptsize}
    \end{algorithmic}
    \caption{Dynamic Batch Edge Removal}
    \label{alg:dynamic_suitor_batch_removal}
\end{algorithm}

\newpage

\subsection{\dynb\ - Proof of Theorem~\ref{theorem:dynb_final}}
\label{sub:appx-proof-main-thm}
Recall that Theorem~\ref{theorem:dynb_final} states that \dynb\ computes the same $b$-matching as
\staticb. Due to space constraints, we provide the proof here, using the results from
Sections~\ref{sub:appendix-edge-insertion} to~\ref{sub:appendix_batch_updates}.
\begin{proof}
    \dynb\ is designed to support both edge insertions and removals for either single-edge updates or batches of updates. We have shown in Propositions~\ref{proposition:edge_insertion_final} and~\ref{proposition:edge_removal_final} that for the single-edge update case, \dynb\ produces the same $b$-matching as \staticb\ for both edge insertion and edge removal. For batches of updates, \bedgerem\ and \bedgeins\ use the logic of \edgerem\ and \edgeins\ from the single-edge update case in a loop.
    $M^{(f)}$ is set after all edges of the batch have been processed.
    As a result, when \dynb\ is finished, $M^{(f)}$ equals $\widetilde{M}$.
\end{proof}

\newpage

\section{Instance Statistics}
\label{sub:instance-stats}
\vspace{-0.5cm}
\begin{table*}[h!]
    \begin{center}
        \begin{tabular}{|| c l r r r||}
            \hline
            & \textbf{graph}        & \textbf{nodes} & \textbf{edges} & \textbf{avg. degree}\\ [0.5ex]
            \hline\hline
            \hspace{0.1cm}\multirow{5}{0.4cm}{\rotatebox[origin=c]{90}{\textit{Sparse}}} & human\_gene2          & $14340$        & $9027024$      & 1259                 \\
            & mouse\_gene           & $45101$        & $14461095$     & 641.28               \\
            & cage14                & $1505785$      & $25624564$     & 17.02                \\
            & bone10                & $986703$       & $35339811$     & 71.63                \\
            & cage15                & $5154859$      & $9444692$      & 18.24                \\
            \hline
            \hspace{0.1cm}\multirow{6}{0.4cm}{\rotatebox[origin=c]{90}{\textit{Infra}}} & belgium               & $1216902$      & $1563642$      & 2.57                 \\
            & austria               & $2621866$      & $3082590$      & 2.35                 \\
            & poland                & $5567642$      & $7200814$      & 2.59                 \\
            & france                & $11063911$     & $13785539$     & 2.49                 \\
            & africa                & $23975266$     & $31044959$     & 2.59                 \\
            & asia                  & $57736107$     & $72020649$     & 2.49                 \\
            \hline
            \hspace{0.1cm}\multirow{5}{0.4cm}{\rotatebox[origin=c]{90}{\textit{Social}}} & com-youtube           & $1134890$      & $2987624$      & 5.26                 \\
            & petster-catdog-friend & $623754$       & $13991746$     & 15.86                \\
            & flickr-growth         & $2302925$      & $22838276$     & 19.83                \\
            & soc-LiveJournal1      & $4846609$      & $68475391$     & 14.12                \\
            & orkut-links           & $3072441$      & $117184899$    & 76.28                \\
            \hline
            \hspace{0.1cm}\multirow{4}{0.4cm}{\rotatebox[origin=c]{90}{\textit{Random}}} & rmat-er               & $1048576$      & $8388557$      & 16.00                \\
            & rmat-b                & $1048576$      & $8281690$      & 15.80                \\
            & rmat-g                & $1048576$      & $8377324$      & 15.98                \\
            & hyp-24-1              & $16777216$     & $167676501$    & 19.99                \\
            \hline
        \end{tabular}
    \end{center}
    \caption{Datasets used for experimental results}
    \label{tab:datasets}
\end{table*}

\vspace{-0.6cm}

All networks from the \emph{Sparse} category are obtained from the Suite Sparse Matrix Collection~\cite{davis2011university}.
Note that the directed networks \texttt{cage14} and \texttt{cage15} are converted into an undirected representation in which the edge weight corresponds to the weight of the directed edge $(u, v)$ if $u > v$, while the edge $(v, u)$ is dropped.
In addition, we use the same road networks (based on OpenStreetMap) for the \emph{Infra} category that were chosen by Angriman et al.~\cite{angriman2022batch};
the edge weights therein represent geographic distances.
The networks for the \emph{Social} category are obtained from KONECT~\cite{kunegis2013konect} and their edges are assigned with uniformly random generated edge weights between $0.0$ and $1.0$, as only the connectivity of the initial graph is provided.
The networks for the \emph{Random} category are generated using the R-MAT generator from NetworKit~\cite{angriman2023algorithms}.
R-MAT uses four probability parameters to steer the edge generation process; we use the same parameter values as in Refs.~\cite{halappanavar2012approximate,khorasani2015scalable}:
(i) the setting equivalent to the $G(n,p)$ model \texttt{rmat-er} with equal probabilities $(0.25, 0.25, 0.25, 0.25)$\drop{ for an edge to occur},
(ii) \texttt{rmat-g} $(0.45, 0.15, 0.15, 0.25)$, and (iii) \texttt{rmat-b} $(0.55, 0.15, 0.15, 0.15)$.
The generator uses a scale parameter of $20$ and an edge factor of $8$, resulting in graphs with $2^{20}$ vertices and approximately $8 \times 2^{20}$ edges.
In addition we include a network based on the random hyperbolic model, resembling real-world complex networks. For this the generator from von 
Looz \etal\cite{von2016generating} within NetworKit is used. The average degree is set to $20$, and the exponent of the power-law distribution to~$3$.
The edge weights \drop{(in case they are unweighted)} follow a uniform continuous distribution.

\newpage

\section{Additional Experimental Results}
\label{sub:add-exp-results}

\subsection{Locally Affected Vertices}
\label{subsub:exp_results_affected}

The main influence for the algorithmic speedup of the dynamic algorithm is the number of locally affected vertices, reflecting the average update path length.
In the following we examine the influence of both the batch size and value of $b$ on the total number of locally affected nodes for the categories \emph{Sparse}, \emph{Infra}, \emph{Social} and \emph{Random}.

The results are shown in Appendix~\ref{subsub:exp_results_additional} in Figures~\ref{fig:res_aff_insertion_misc} (edge insertion) and~\ref{fig:res_aff_removal_misc} (edge removal), respectively.
For single edge updates (insertion and removal combined), we see some variance in the maximum number of nodes (max path length), with networks from the \emph{Sparse} category being the outlier with $31$ locally affected nodes. The other categories are all below $10$.
This might be misleading, however, since for single updates the path length also depends on the chosen edge and the distribution of the min-entries in the suitor lists.
When randomly drawing an edge from the network, its weight is likely to be near the median.
Since \staticb\ tries to maximize the overall matching weight, the $min$-entries in the suitor lists are more likely to be closer to the average, which is for most networks above the median.
It is therefore likely that a randomly chosen edge does not {necessarily} influence the overall matching.
Also other factors, such as the average degree, play a role here, especially for small values of $b$.
As a result, the average path length for single edge updates is for all categories below $1$ (avg: $0.65$), with the highest value of $1.5$ for networks from the \emph{Infra} category. This is in line with the expectation based on the findings for the dynamic Suitor algorithm and given these networks have the lowest average node degree from the four categories.
With increasing batch size, the number of locally affected vertices increases almost linearly, with an average of $680$ for batch size $10^3$.

\subsection{Comparison With the Parallel \staticb}
\label{sub:exp_results_parallel}

For completeness, we also compare against the parallel implementation of \staticb, which is provided in the open-source framework ExaGraph~\cite{khan2016efficient}.
Note that the parallel implementation shows good strong scaling behavior~\cite{khan2016efficient}.
This is in line with our experimental results. On an AMD Epyc 9754 with 128 cores, the code achieves a parallel speedup of $60\times$ compared to our sequential implementation of \staticb.
As shown in Figure~\ref{fig:res_speed_par_misc}, when comparing to \dynb\, for single-edge updates, the average (geometric mean) acceleration by \dynb (= algorithmic speedup) is $3.7\cdot 10^4$ over all instances. Networks from the \emph{Infra} and \emph{Random} category show the lowest speedup with $2.3\cdot 10^4$ and $1.4\cdot 10^4$, respectively.

For batch sizes of $10^3$, the results remain consistent. The average algorithmic speedup of $5.9\cdot 10^1$, again $\approx 10^3\times$ less than for single edge upates.
Compared to the results in Section~\ref{sub:exp_results_speedup}, we observe a larger variance in the algorithmic speedup when comparing to the parallel implementation of \staticb\ in ExaGraph.
This can be explained by the differences in the implementation of \staticb\ in NetworKit and ExaGraph.

\pagebreak

\subsection{Detailed Experimental Data}
\label{subsub:exp_results_additional}
\vspace{-0.8cm}
\begin{figure*}[h!]
    \centering
    \includegraphics[width=0.96\linewidth]{./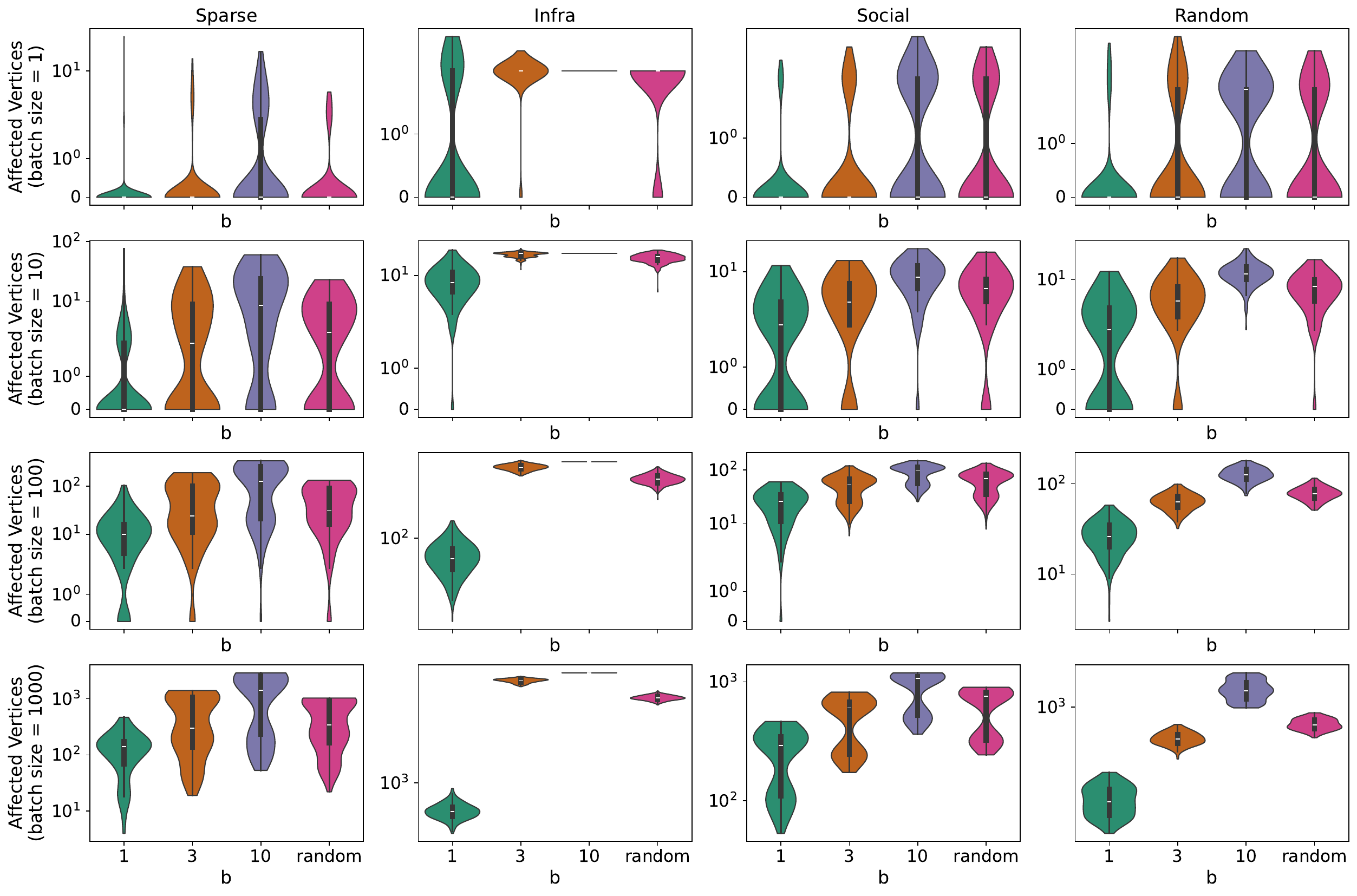}
    \vspace{-0.5cm}
    \caption{Number of \textbf{affected nodes} for \dynb\ for \textbf{edge insertion} for varying values of b ($1$, $2$, $3$, $10$, random) und batch sizes ($10^0$,$10^1$,$10^2$,$10^3$).}
    \label{fig:res_aff_insertion_misc}
\end{figure*}

\vspace{-0.5cm}

\begin{figure*}[h!]
    \centering
    \includegraphics[width=0.96\linewidth]{./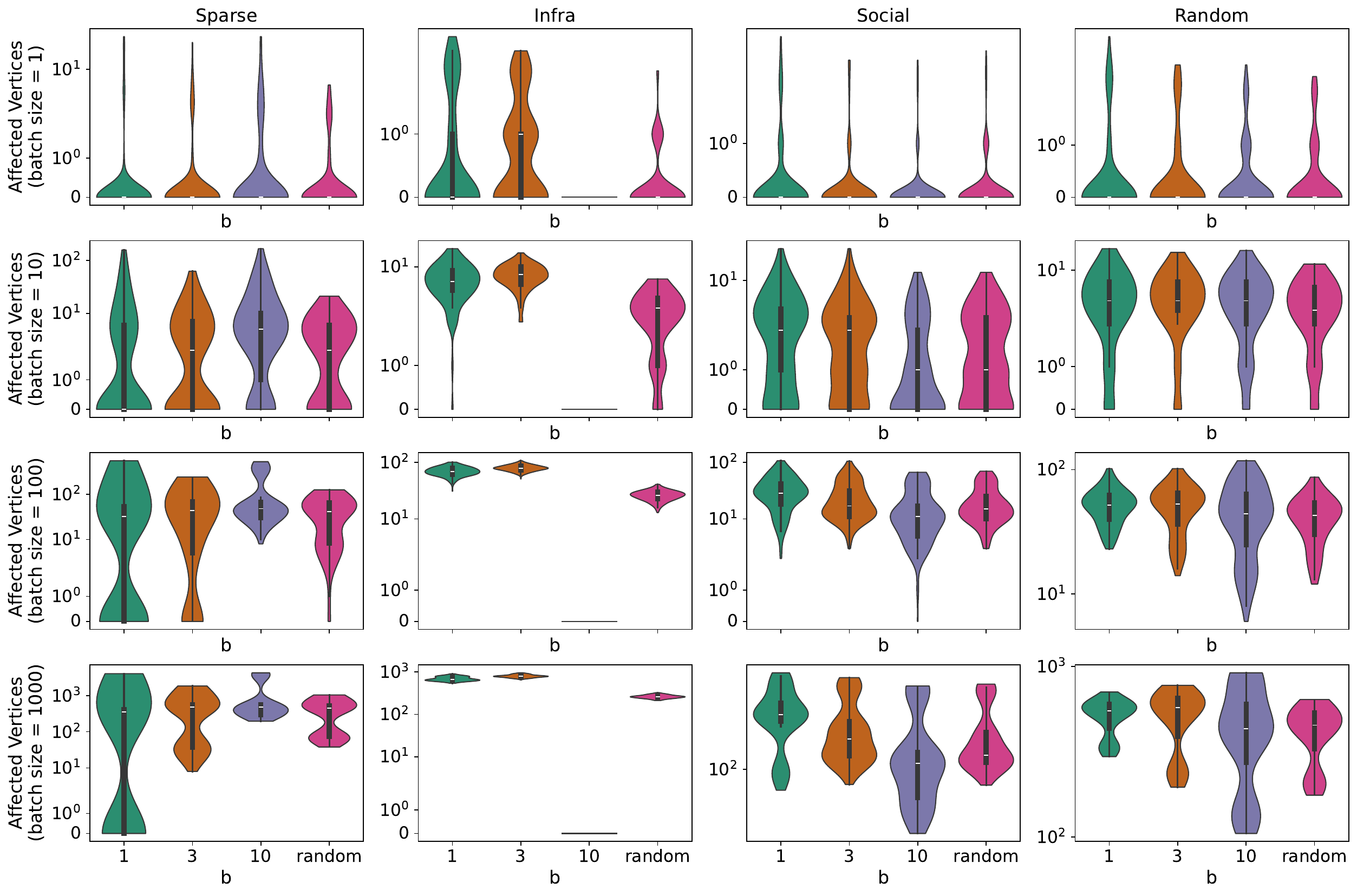}
    \vspace{-0.5cm}
    \caption{Number of \textbf{affected nodes} for \dynb\ for \textbf{edge removal} for varying values of b ($1$, $2$, $3$, $10$, random) und batch sizes ($10^0$,$10^1$,$10^2$,$10^3$).}
    \label{fig:res_aff_removal_misc}
\end{figure*}

\FloatBarrier

\begin{figure*}[t]
    \centering
    \includegraphics[width=\linewidth]{./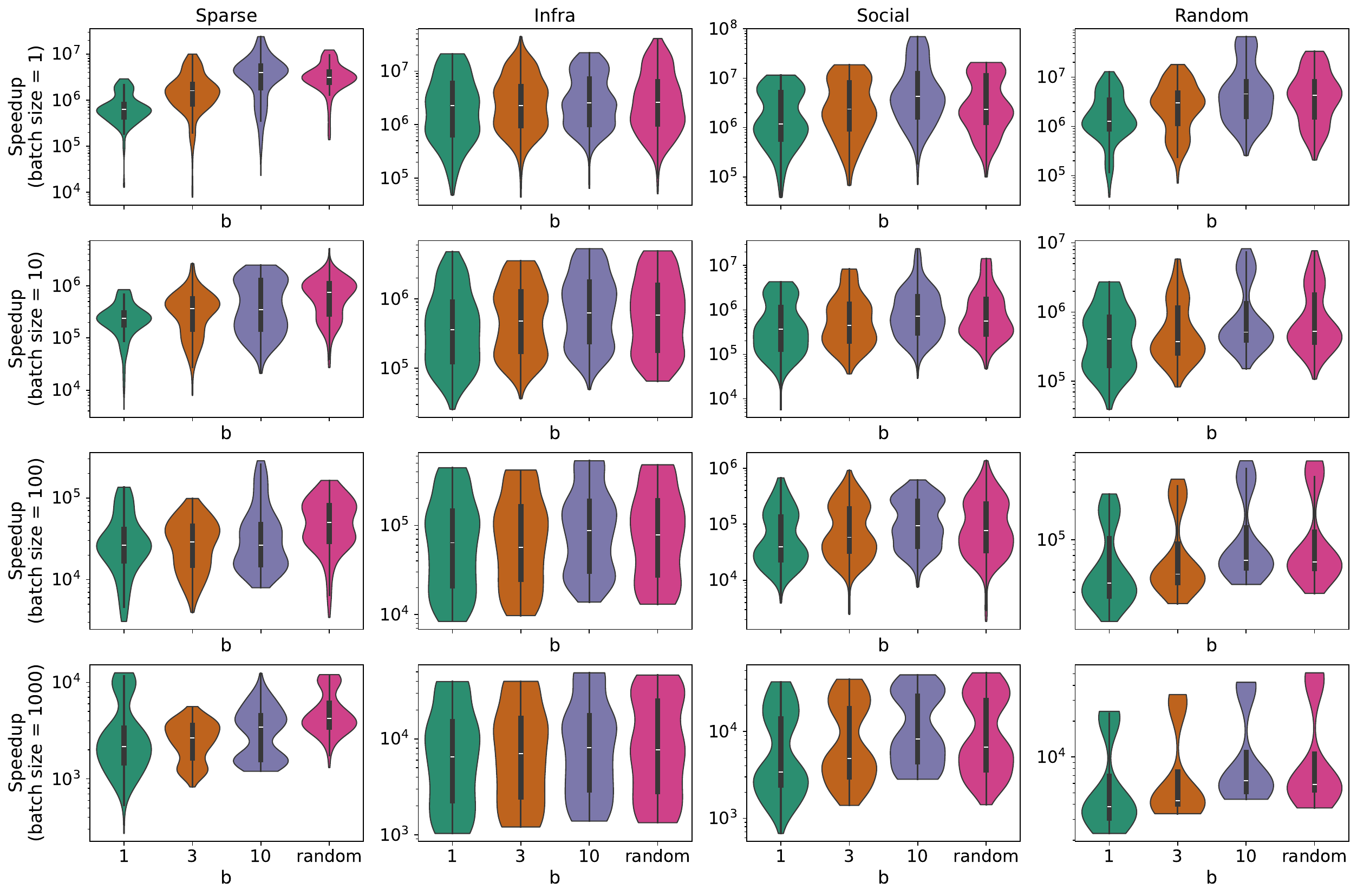}
    \caption{\textbf{Speedup} of \dynb\ vs. \staticb\ for \textbf{edge insertion} for varying values of b ($1$, $2$, $3$, $10$, random) und batch sizes ($10^0$,$10^1$,$10^2$,$10^3$).}
    \label{fig:res_speed_insertion_misc}
\end{figure*}

\begin{figure*}[b]
    \centering
    \includegraphics[width=\linewidth]{./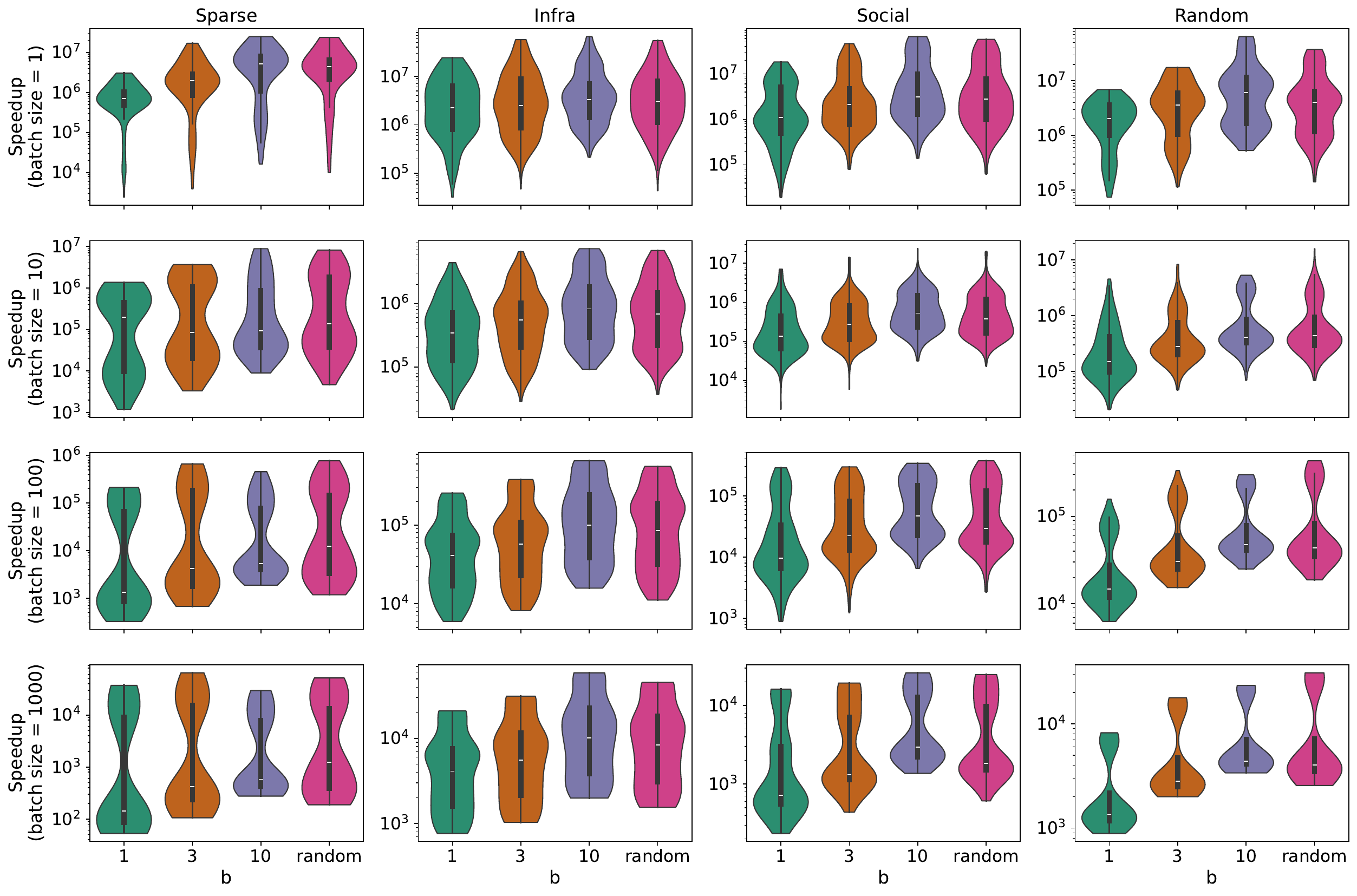}
    \caption{\textbf{Speedup} of \dynb\ vs. \staticb\ for \textbf{edge removal} for varying values of b ($1$, $2$, $3$, $10$, random) und batch sizes ($10^0$,$10^1$,$10^2$,$10^3$).}
    \label{fig:res_speed_removal_misc}
\end{figure*}

\pagebreak
\FloatBarrier
\pagebreak 

\begin{figure*}[h!]
    \centering
    \includegraphics[width=\linewidth]{./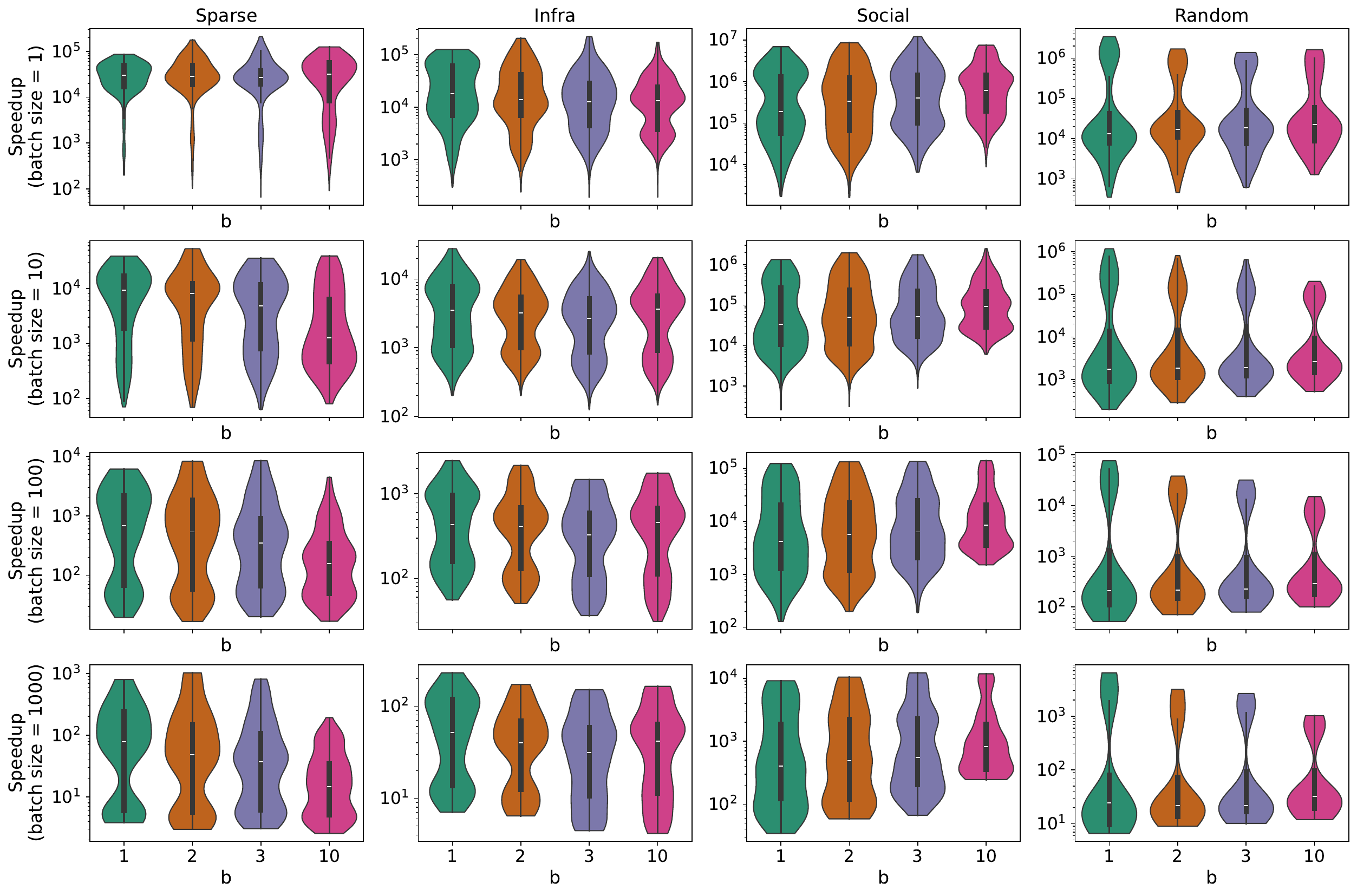}
    \caption{\textbf{Speedup} of \dynb\ against \textbf{parallel} \staticb\ for \textbf{edge insertion} for varying values of b ($1$, $2$, $3$, $10$) und batch sizes ($10^0$,$10^1$,$10^2$,$10^3$).}
    \label{fig:res_speed_par_misc}
\end{figure*}

\end{document}